\tikzset{every picture/.style={line width=0.75pt}} 
\newtheorem{theorem}{Theorem}
\newtheorem{lemma}{Lemma}
\newtheorem{corollary}{Corollary}
\newtheorem{proposition}{Proposition}
\newtheorem{definition}{Definition}
\newcommand{\tr}{\mathrm{tr}}
\newcommand{\x}{\mathbf{x}}
\newcommand{\p}{\mathbf{p}}
\newcommand{\ac}{\mathbf{a}}
\newcommand{\R}{\mathbf{R}}
\newcommand{\dd}{\text{\rm{d}}}
\newcommand{\PiInner}[2]{\pi_{#1}^{#2}}
\newcommand{\diag}{\text{{\rm diag}}}
\newcommand{\ket}[1]{|#1\rangle}
\newcommand{\ketbra}[2]{|#1\rangle\langle#2|}
\newcommand{\bfa}{\mathbf{a}}
\title{Towards Optimal Convergence Rates for the Quantum Central Limit Theorem}
\author{Salman Beigi$^{1, 2}$, Hami Mehrabi$^3$}
\affil{\it \footnotesize $^1$School of Mathematics, Institute for Research in Fundamental Sciences (IPM), P.O. Box 19395-5746, Tehran, Iran \\
\it \footnotesize $^2$Centre for Quantum Technologies, National University of Singapore, Singapore\\
\it \footnotesize $^3$Department of Mathematics, Sharif University of Technology, Tehran, Iran}
\begin{document}
\maketitle

\begin{abstract}
The quantum central limit theorem for bosonic quantum systems states that the sequence of states $\rho^{\boxplus n}$ obtained from the $n$-fold convolution of a centered quantum state $\rho$ converges to a quantum Gaussian state $\rho_G$ that has the same first and second moments as $\rho$. In this paper, we contribute to the problem of finding the optimal rate of convergence for this quantum central limit theorem. We first show that if an $m$-mode quantum state has a finite moment of order $\max\{3, 2m\}$, then we have $\|\rho^{\boxplus n} - \rho_G\|_1=\mathcal O(n^{-1/2})$. We also introduce a notion of Poincar\'e inequality for quantum states and show that if $\rho$ satisfies this Poincar\'e inequality, then $D(\rho^{\boxplus n}\| \rho_G)= \mathcal O(n^{-1})$. By giving an explicit example, we verify that both these convergence rates are optimal.
\end{abstract}

{\footnotesize
\tableofcontents
}

\section{Introduction}

Central Limit Theorem (CLT) stands as one of the most notable theorems in probability theory. Berry-Esseen's  CLT~\cite{BhRa} states that, assuming finiteness of the third moment, the cumulative distribution function of the normalized sum $Y_n=\frac{X_1+\cdots + X_n}{\sqrt n}$ of i.i.d random variables $X_1, \dots, X_n$ with zero mean and non-zero second moment, uniformly converges to that of a Gaussian distribution $Z$ at the rate of $\mathcal O\big(\frac{1}{\sqrt n}\big)$. Berry-Esseen's theorem, however, does not have any implication on the convergence rate in total variation distance (1-norm). 

We note that convergence in total variation distance is much stronger than the point-wise convergence of cumulative distribution functions. In particular, when $X_k$'s are discrete, their normalized sum $Y_n=\frac{X_1+\cdots + X_n}{\sqrt n}$ remains discrete for any $n$, and its total variation distance to any continuous distribution, particularly to a Gaussian distribution, does not decay. Nevertheless, as long as the density of $X_k$'s contains an \emph{absolutely continuous part} and it has a finite third moment, we have a CLT in terms of total variation distance with the rate of convergence $\|Y_n-Z\|_{1}=\mathcal O\big(\frac{1}{\sqrt n}\big)$. Relaxing the assumption of finite third moment, more refined versions have also been established; see, e.g.,~\cite{MaSi, BhRa}.

Barron in 1986 proved an \emph{entropic CLT} in~\cite{Barron1986} where he showed that the relative entropy $D(Y_n \| Z)$ between $Y_n=\frac{X_1+\cdots + X_n}{\sqrt n}$ and the associated Gaussian random variable $Z$   converges to zero, assuming that the entropy of $X_k$'s are finite. Barron's result, however, does not say anything about the rate of convergence. This remained a challenge until Artstein, Ball, Barthe and Naor showed in 2004 that $D(Y_n \| Z)$ is indeed decreasing~\cite{ABBN1}. This was later reproved by several authors; see~\cite{Courtade} and references therein. Then the same group showed $D(Y_n \| Z)=\mathcal O\big(\frac{1}{ n}\big)$, assuming that the starting distribution satisfies a \emph{Poincar\'e inequality}~\cite{ABBN2}. Johnson and Barron~\cite{JB} also proved this bound under the same assumption, but with a completely different method. We note that the assumption that the random variables satisfy the Poincar\'e inequality is a strong constraint since it implies the finiteness of all the moments. Thus, it remained a challenge to relax this assumption. Finally, Bobkov, Chistyakov and G{\"o}tze proved in 2014 that $D(Y_n \| Z)=\mathcal O\big(\frac{1}{ n}\big)$ holds true assuming the finiteness of the fourth moment and the finiteness of entropy~\cite{BCG}.

\subsection{Quantum CLT}
The convolution of probability distributions can be generalized for quantum states over bosonic modes. Having two quantum states $\rho, \sigma$ over $m$ bosonic modes, their convolution is another $m$-mode state denoted by $\rho\boxplus \sigma$ that is defined in terms of their interaction via a beam splitter; see Subsection~\ref{subsec:convolution} for the precise definition. Then, we may consider $\rho^{\boxplus n}$, the $n$-fold convolution of $\rho$ with itself.  Now, it is natural to ask if there is a quantum version of CLT, i.e., whether $\rho^{\boxplus n}$ converges to a Gaussian state. 

Cushen and Hudson~\cite{CH} in 1971 proved that if $\rho$ is centered (i.e., it has zero first moments) and has finite second moments, then $\rho^{\boxplus n}\to \rho_G$ \emph{weakly}, where $\rho_G$ is the Gaussian state with the same first and second moments as $\rho$. Here, weak convergence means that for any bounded operator $X$,
		$$\tr\big(\rho^{\boxplus n} X\big) \rightarrow \tr\big(\rho_G X\big).$$

Becker, Datta, Lami and Rouz\'e~\cite{CRLimitTheorem} have recently explored the problem of convergence rate in the above quantum CLT. They showed that  if $\rho$ has a finite third-order moment, then
\[
	\big\| \rho^{\boxplus n} - \rho_G \big\|_{2} = \mathcal{O} \Big(\frac{1}{\sqrt n}\Big),
\]
where ${\| . \|}_2$ denotes the Hilbert-Schmidt norm. Moreover, if $\rho$ has a finite fourth-order moment and the third derivative of its characteristic function at origin vanishes, then
\[
	\big\| \rho^{\boxplus n} - \rho_G \big\|_{2} = \mathcal{O} \Big(\frac{1}{n}\Big).
\]
By numerical simulations, it is also shown in~\cite{CRLimitTheorem} that these convergence rates are tight. However, it is more desirable to find tight convergence rates when the figure of merit is $1$-norm or relative entropy as in the aforementioned results in the classical case since they have operational meanings. Interestingly, based on the above bounds on the Hilbert-Schmidt norm, and using the so called gentle measurement lemma and continuity bounds for the entropy function, it is also proven in~\cite{CRLimitTheorem} that
\[\big\| \rho^{\boxplus n} - \rho_G \big\|_{1} = \mathcal{O} \Big(n^{-\frac{1}{2(m+1)}}\Big),  \]
and
\[
	D(\rho^{\boxplus n} \| \rho_G ) = \mathcal{O} \Big(n^{-\frac{1}{2(m+1)}}\log n\Big),
\]
where $m$ is the number of modes.  These bounds, although interesting, seem suboptimal at least in comparison to known results in the classical case, even if $m=1$.  

Following the work of Cushen and Hudson, various other forms of CLT have also been established in the non-commutative setting. We refer to~\cite{CRLimitTheorem} for a list of these works. We also refer to~\cite{BuGuJaffe2023} for a more recent work on the subject.

\subsection{Main results}
In this paper we prove two main theorems regarding the convergence rates in the quantum CLT. We first show that if $\rho$ is centered and has a finite moment of order $\kappa=\max\{2m, 3\}$, then
\begin{equation} \label{Result1}
	{\| \rho^{\boxplus n} -  \rho_G \|}_1 = \mathcal{O} \Big(\frac{1}{\sqrt n}\Big).
\end{equation}
This bound matches the optimal convergence rate in the classical case, yet with a stronger assumption. Note however that for $m=1$, we only assume finiteness of the third moment. By giving an explicit example, we show that the above convergence rate cannot be improved even if all the moments of $\rho$ are finite.  

Our second main result is an entropic quantum CLT. Following the classical works~\cite{ABBN2, JB}, we assume that the quantum state $\rho$ satisfies a \emph{quantum Poincar\'e inequality} and prove that
\begin{equation} \label{Result2}
	D(\rho^{\boxplus n} \| \rho_G ) = \mathcal{O} \Big(\frac{1}{n}\Big).
\end{equation}
This bound matches the convergence rate of the entropic CLTs of~\cite{ABBN2, JB} in  the classical case.

To prove~\eqref{Result1}, we employ quite standard ideas in the classical theory of CLTs and generalize them to the quantum case. Our first step of the proof is to apply H\"older's inequality to bound $1$-norm in terms of $2$-norm. Then, similar to~\cite{CRLimitTheorem} we use the \emph{quantum Plancherel identity} to express the $2$-norm in terms of the $2$-norm of \emph{characteristic functions} and their derivatives. Since characteristic functions are well-behaved under convolution as in the classical case, it is not hard to work with their $2$-norm. Then, a careful analysis of the $2$-norm of the characteristic functions gives~\eqref{Result1}.

The proof of our second result is more involved. We follow similar steps as in the work of Johnson and Barron~\cite{JB} to prove~\eqref{Result2}. The first step in~\cite{JB} is to use \emph{de Bruijn's identity} to express $D(Y_n \| Z)$ in terms of \emph{Fisher information distance} $I(Y_n)-I(Z)$, reducing the proof of $D(Y_n \| Z)=\mathcal O\big(\frac{1}{ n}\big)$ to that of $I(Y_n)-I(Z)=\mathcal O\big(\frac{1}{ n}\big)$. This is already a challenge in the quantum case since there is a plethora of quantum generalizations of Fisher information~\cite{PETZ1996, LesniewskiRuskai99}. Among such generalizations, a quantum de Bruijn's identity has been proven for the so called \emph{Kubo-Mori-Bogoliubov} Fisher information~\cite{KS}. However, this quantum Fisher information is hard to work with and does not satisfy some other desirable properties. To this end, we use another quantum generalization of Fisher information called  the \emph{Symmetric Logarithmic Derivative (SLD)} Fisher information; see, e.g.,~\cite{Hayashi12002Fisher} and references therein.  We use the SLD Fisher information denoted by $I(\cdot)$ to generalize the method of~\cite{JB}. Assuming some quantum Poincar\'e inequality, we prove that $I(\rho^{\boxplus n}) - I(\rho_G) = \mathcal O\big(\frac 1n\big)$. Next, we diverge from the method of~\cite{JB} and use a \emph{quantum logarithmic-Sobolev inequality} for the quantum Ornstein-Uhlenbeck semigroup from~\cite{Beigi-RahimiKeshari2023} to bound $D(\rho^{\boxplus n} \| \rho_G )$ with the SLD Fisher information distance. Putting these together our desired bound~\eqref{Result2} is implied.

\paragraph{Layout of the paper.} In Section \ref{secND} we introduce some of the main notions and tools required for our work. In particular, we define quantum convolution and introduce the notion of quantum Poincar\'e inequality. Section~\ref{SecTraceDistanceProof} includes our first main result stated in Theorem~\ref{TraceMainTheoremM}, that is a quantum CLT in terms of $1$-norm. Section~\ref{SimplifiedFisherInformation} is devoted to our entropic CLT and our second result stated in Theorem~\ref{MainTheorem}. In this section, we define the SLD Fisher information, and show that using logarithmic-Sobolev inequalities, the problem of entropic CLT can be reduced to the problem of the convergence of the SLD Fisher information distance. The proof of the convergence rate for the SLD Fisher information distance is given in the following sections. First, in Section~\ref{SuccessiveProjection} we provide a quantum generalization of a main tool in~\cite{JB} called the \emph{successive projection lemma.} Then, in Section~\ref{ConvergenceFID} we use this lemma to prove our next result stated in Theorem~\ref{pseudoMainTheorem-v2} about the convergence rate for the SLD Fisher information distance. Final remarks come in Section~\ref{secCo} and some details are left for the appendices.

\section{Preliminaries} \label{secND}

In this section, we review some basic notions of continuous quantum systems. We refer to~\cite{Serafini} for more details on the subject.

The Hilbert space of $m$ bosonic modes (harmonic oscillators) is the space of complex square-integrable functions on $\mathbb{R}^m$, hereafter denoted by $\mathcal{H}_{m} := L^{2}(\mathbb{R}^m)$. 
Let $\x_j, \p_j$ be the canonical quadrature (position and momentum)  operators acting on the $j$-th mode. Then, the annihilation and creation operators are given by
\[\ac_j  := \frac{1}{\sqrt{2}}(\x_j+ i \p_j), \qquad  \ac_j^{\dagger} := \frac{1}{\sqrt{2}}(\x_j - i \p_j).\]
Fock states $|n_1\rangle\otimes \cdots \otimes |n_m\rangle=|{n_1, \dots, n_m}\rangle$, $n_1, \dots, n_m\in \{0, 1, 2, \dots\}$ form an orthonormal basis for $\mathcal H_m$, and we have $\ac_j|n_j\rangle= \sqrt {n_j}|n_j-1\rangle$ and $\ac_j^\dagger |n_j\rangle = \sqrt{n_j+1}|n_j+1\rangle$.
What is important about these operators is that they satisfy the \emph{canonical commutation relations:}
\begin{equation} \label{CCR}
	\big[\ac_j, \ac_k^{\dagger}\big] = \delta_{jk}, \qquad\quad [\ac_j, \ac_k] = 0,
\end{equation}
where $[X, Y] = XY-YX$ and $\delta_{jk}$ is the Kronecker delta function.\footnote{Here, and in the following, by $\delta_{jk}$ we mean $\delta_{jk}\mathbb I$ where $\mathbb I$ denotes the identity operator.} 
We denote the column vector of position and momentum operators by $\R$:
\[
	\R = (\x_1, \p_1, ... , \x_m, \p_m)^{\top}. 
\]
Then, canonical commutation relations~\eqref{CCR} can be summarized as 
\[	[\R, \R^T] = i \Omega_m, \]
where $[\R, \R^T]$ is understood as the matrix of coordinate-wise commutations, and $\Omega_m$ is defined by
\begin{align}\label{eq:def-Omega}
\Omega_m = \bigoplus_{j=1}^{m} \Omega_1, \qquad \quad	\Omega_1 = \begin{pmatrix}
	 0 & 1 \\
	  -1 & 0 
	  \end{pmatrix}.
\end{align}

An $m$-mode quantum state $\rho$ is an operator acting on $\mathcal H_m$ that is a density operator meaning that it is positive semidefinite and satisfies $\tr(\rho)=1$.  Rank-one density operators are called \emph{pure states}. $\rho$ is called faithful if its kernel is trivial.

In this paper, we often work with \emph{unbounded operators} acting on $\mathcal H_m$ including creation and annihilation operators. Such an unbounded operator should be specified by its domain that is a \emph{dense} subspace of $\mathcal H_m$. We refer to~\cite{Hall-Book} for more details on unbounded operators, and particularly for the domain of the canonical operators $\bfa_j, \bfa_j^\dagger$.

Any density operator $\rho$ has a spectral decomposition $\rho=\sum_\ell c_\ell \ketbra{v_\ell}{v_\ell}$. Then for a possibly unbounded operator $X$ we let $\tr(\rho X^\dagger X)=\tr(X^\dagger X\rho)=\tr(X\rho X^\dagger )$ where 
$$\tr(X\rho X^\dagger ) = \sum_\ell c_\ell \|X\ket{v_\ell}\|^2,$$
if $\ket{v_\ell}$ belongs to the domain of $X$ for any $\ell$, and $\tr(\rho X^\dagger X)=+\infty$ otherwise. In fact, $\tr(\rho X^\dagger X)=  \tr(X^\dagger X\rho)$ is set to be equal to the trace of $X\rho X^\dagger$ if it is trace class, and equal to $+\infty$ otherwise. 
Generalizing this definition, for a Hermitian operator $A$, there are positive matrices $A_+, A_-$ such that $A=A_+-A_-$ and then we can define $\tr(\rho A) = \tr(\rho A_+) - \tr(\rho A_-)$ if $\tr(\rho A_{\pm})=\tr(\sqrt{A_{\pm}} \rho \sqrt{A_{\pm}}^\dagger )<+\infty$. We emphasize that here $\tr(\rho A)<+\infty$ does not necessarily mean that $\rho A$ is a trace class operator, yet as is a convention in quantum physics $\tr(\rho A)$ is understood as the expectation of the quantum observable $A$ when the underlying quantum state is $\rho$~\cite{busch2016quantum}. We note that by our convention, at least when $\tr(\rho A)= \tr(A\rho)<+\infty$, we have $\tr([\rho, A])=0$.  All quantum states considered in this paper are assumed to satisfy $\tr(\rho \bfa_j^\dagger \bfa_j)<+\infty$ for any $j$. Equivalently, we assume that $\bfa_j \rho \bfa_j^\dagger$ is trace class for any $j$.

\subsection{Displacement operators and characteristic function}

For $z = (z_1, z_2, \dots , z_{m})^{\top} \in \mathbb{C}^{m}$, the $m$-mode displacement (Weyl) operator $D_z$ is defined by
\[
	D_{z} =\exp \bigg( \sum_{j=1}^{m} (z_j \ac_{j}^{\dagger} - \bar{z_j} \ac_{j})\bigg).
\]
The canonical commutation relations~\eqref{CCR} and the Baker--Campbell--Hausdorff (BCH) formula yield
\[
	D_z D_w = e^{\frac{1}{2} (z^{\top} \bar w - \bar z^{\top} w)} D_{z+w}.
\]
Therefore, $D_z D_z^{\dagger} = D_z D_{-z} = D_0 = \mathbb{I}$ and $D_z$ is unitary. Displacement operators displace quadrature operators under conjugation:
\begin{equation} \label{DisplacementOp}
	D_z \ac_j D_z^{\dagger} = \ac_{j} - z_j.
\end{equation}

The \emph{characteristic function} $\chi_{T}: \mathbb{C}^{m} \rightarrow \mathbb{C}$ of a \emph{trace class} operator $T$ is defined by
\[
	\chi_{T}(z)  \coloneqq \tr\big(T D_z\big).
\] 
The significance of the characteristic function is that it fully determines the trace class operator as
\[
	T = \frac{1}{{\pi}^m} \int_{\mathbb{C}^{m}} \chi_{T}(z) D_{-z} \dd^{2m} z.
\]
Since displacement operators are unitary and bounded, the characteristic function of a trace class operator is bounded. Moreover, it is uniformly continuous~\cite{Holevo, CRLimitTheorem}. 

The \emph{Wigner function} of a trace class operator $T$ is defined as the Fourier transform of its characteristic function:
\[
W_{T}(z) := \frac{1}{\pi^{2m}} \int \chi_{T}(w) e^{z^{\top} \bar w - \bar z^{\top} w} \dd^{2m} w.	
\] 
It is not difficult to verify that $W_{T^{\dagger}}(z) = \overline{W_{T}(z)}$. Thus, for a quantum state $\rho$, we have $W_{\rho}(z) \in \mathbb{R}$. 

The characteristic function of a quantum state has the crucial property that it attains value $1$ in absolute value only at $z=0$.  Even more, it is shown in~\cite[Proposition 14]{CRLimitTheorem} that for any $\epsilon > 0$, there exists $\delta > 0$ such that
\begin{equation} \label{SupChar}
	\sup_{z \in \mathbb{C}^m \setminus B(0, \epsilon)} |\chi_{\rho} (z)| = 1 - \delta <1,
\end{equation}
where $B(0, \epsilon)\subset \mathbb C^{m}$ is the ball of radius $\epsilon$ around the origin.

\subsection{Moments of quantum states}
The \emph{vector of first-order moments} of an $m$-mode quantum state $\rho$ is defined by
\[
	\mathbf{d}(\rho) = \tr\big(\rho \R\big) = \big(\tr(\rho\x_1), \tr(\rho\p_1), \dots ,  \tr(\rho\x_m) , \tr(\rho \p_m)\big)^{\top} \in \mathbb R^{2m}.
\]
$\rho$ is said to be \emph{centered} if $\mathbf{d}(\rho) =0$ or equivalently 
\[
	\tr[\rho \ac_{j}] = 0, \qquad \forall j \in \{ 1, \dots , m\}.
\]
The \emph{covariance matrix} of $\rho$ is a $2m \times 2m$ matrix given by
\[
	\boldsymbol{\gamma}(\rho) = \tr\Big(\rho \big\{ \R - \mathbf{d}(\rho), {(\R - \mathbf{d}(\rho))}^{\top} \big\}\Big) = \tr\Big( \rho \{ \R , \R^\top \}\Big) - 2 \mathbf{d}(\rho) \mathbf{d}(\rho)^{\top},
\]
where $\{X, Y\} = XY+YX$ stands for anti-commutation. By definition, $\boldsymbol{\gamma}(\rho)$ is a real and symmetric matrix. 
Moreover, as a consequence of~\eqref{CCR} we have
\begin{equation} \label{eq:Uncertainty}
	\boldsymbol{\gamma}(\rho) + i \Omega_m \geq 0,
\end{equation}
where $\Omega_m$ is given in~\eqref{eq:def-Omega}. In particular, $\boldsymbol{\gamma}(\rho)$ is positive definite; see~\cite{Serafini} for more details.

In this paper we consider only quantum states that have finite covariance matrices, i.e., the entries of the covariance matrices are finite. Yet,
we are interested in higher moments of a quantum state. To this end, let the \emph{photon number operator} $H_m$ be 
\[
		H_{m} := \sum_{j=1}^{m} \ac_{j}^{\dagger} \ac_{j}.
\]
Then, following~\cite{CRLimitTheorem} for a density operator $\rho$ we define its \emph{moment of order $\kappa>0$} by
\[
		M_{\kappa}(\rho) := \tr \Big(\rho (H_{m}+m)^{\frac 12 \kappa }\Big).
\]
Moments of a quantum state are related to the derivatives of its characteristic function~\cite[Proposition 25]{CRLimitTheorem}. Indeed, for any $\epsilon > 0$, there is a constant $c_{\kappa,m}(\epsilon) < \infty$ such that for every $m$-mode quantum state $\rho$ we have
\begin{align}\label{eq:moments-derivatives}
	\max_{|\alpha| + |\beta| \leq \kappa} \sup_{z \in B(0, \epsilon)} \big|\partial_{z}^{\alpha} \partial_{\bar z}^{\beta} \chi_\rho(z)\big| \leq c_{\kappa,m}(\epsilon) M_{\kappa}(\rho).
\end{align}
This means that if $\rho$ has a finite $\kappa$-th order moment, then all the derivatives of its characteristic function of order up to $\kappa$ are bounded around the origin.

\subsection{Gaussian states}\label{subsec:Gaussian-states}
A density operator acting on $\mathcal{H}_m$ is called a \emph{Gaussian state} if its Wigner function is a Gaussian probability distribution on the real space $\mathbb{R}^{2m}$. Equivalently, a state is Gaussian if its characteristic function is a Gaussian function (exponential of a quadratic polynomial). A Gaussian state is fully determined by its vector of first-order moments and its covariance matrix. In fact, for any $d\in \mathbb R^{2m}$ and any $2m \times 2m$ symmetric real matrix $\boldsymbol{\gamma}$ satisfying \eqref{eq:Uncertainty}, there is a Gaussian state with the vector of first-order moments $d$ and covariance matrix $\boldsymbol{\gamma}$.

\begin{definition}\label{def:Gaussification}
The \emph{Gaussification} of a quantum state $\rho$ with finite second moments, denoted by $\rho_G$, is a Gaussian state whose vector of first-order moments and covariance matrix are $\mathbf{d}(\rho)$ and $\boldsymbol{\gamma}(\rho)$, respectively.
\end{definition}

A unitary operator acting on $\mathcal H_m$ is called Gaussian if it sends Gaussian states to Gaussian states under conjugation. Since a Gaussian states $\rho$ is determined by $\mathbf{d}(\rho)$ and $\boldsymbol{\gamma}(\rho)$, such a Gaussian unitary is characterized by its action on the vector of first-order moments and the covariance matrix. Indeed, for any $r\in \mathbb R^{2m}$ and any $2m \times 2m $ \emph{symplectic matrix} $S$, i.e., a matrix that satisfies
\[
	S \Omega_m S^{\top} = \Omega_m,
\]
there is a corresponding unitary $U_{S, r}$ that is Gaussian and 
	\[
		\mathbf{d}\big( U_{S,r}^{\dagger} \rho U_{S,r}\big) = S \mathbf{d}(\rho) - r,
	\]
	\[
		\boldsymbol{\gamma}\big(U_{S,r}^{\dagger} \rho U_{S,r}\big) = S \boldsymbol{\gamma}(\rho) S^{\top}.
	\]
Here, the condition that $S$ has to be symplectic is imposed due  to the canonical commutation relations. Such a Gaussian unitary can be decomposed into a Gaussian unitary $U_{S}$ that depends only on $S$ and a displacement operator. Thus, the classification of Gaussian unitaries is reduced to that of unitaries of the form $U_{S}$~\cite{Serafini}.

Williamson's theorem~\cite{William} states that for any $2m \times 2m$ real symmetric matrix $\boldsymbol{\gamma}$, there exists a symplectic matrix $S$ such that,
\[
	S \boldsymbol{\gamma} S^{\top} = \diag(\nu_1, \nu_1, \dots , \nu_m, \nu_m).
\]
The following proposition is an immediate consequence of Williamson's theorem and the above discussion.

\begin{proposition} \label{Will}
For any $m$-mode quantum state $\rho$ there exists a Gaussian unitary $U$ such that
	\begin{align}
		& \mathbf{d}(U \rho U^{\dagger}) = 0,\\
		& \boldsymbol{\gamma}(U \rho U^{\dagger}) = \diag(\nu_1, \nu_1, \dots , \nu_m, \nu_m), \label{eq:Williamson-diag}
	\end{align}
	where, due to~\eqref{eq:Uncertainty}, $\nu_j\geq 1$. We call $U \rho U^{\dagger}$ a Williamson's form of $\rho$.
\end{proposition}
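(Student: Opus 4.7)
The plan is to read off the result as a direct composition of Williamson's theorem with the classification of Gaussian unitaries reviewed just before the proposition. Concretely, I would proceed in three steps.

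First, since the discussion after~\eqref{eq:Uncertainty} notes that $\boldsymbol{\gamma}(\rho)$ is a real, symmetric, positive-definite $2m\times 2m$ matrix, I would invoke Williamson's theorem to produce a symplectic matrix $S$ satisfying
\[
    S\,\boldsymbol{\gamma}(\rho)\,S^{\top} = \diag(\nu_1,\nu_1,\dots,\nu_m,\nu_m),
\]
with $\nu_j>0$. Second, I would apply the classification of Gaussian unitaries: for this $S$ and for the translation vector $r = S\,\mathbf{d}(\rho)$, let $U_{S,r}$ be the associated Gaussian unitary, and set $U := U_{S,r}^{\dagger}$. The stated identities
\[
    \mathbf{d}(U_{S,r}^{\dagger}\rho U_{S,r}) = S\mathbf{d}(\rho) - r, \qquad \boldsymbol{\gamma}(U_{S,r}^{\dagger}\rho U_{S,r}) = S\boldsymbol{\gamma}(\rho)S^{\top},
\]
quoted from the preceding paragraphs then immediately yield $\mathbf{d}(U\rho U^{\dagger}) = 0$ and the claimed diagonal form of the covariance matrix.

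Third, I would verify the lower bound $\nu_j \geq 1$, which is the one nontrivial point (Williamson's theorem alone only gives $\nu_j > 0$). Applying the uncertainty relation~\eqref{eq:Uncertainty} to the transformed state $U\rho U^{\dagger}$ gives
\[
    \diag(\nu_1,\nu_1,\dots,\nu_m,\nu_m) + i\Omega_m \;\geq\; 0.
\]
Because this matrix is block-diagonal with $2\times 2$ blocks of the form $\nu_j\mathbb{I}_2 + i\Omega_1$, and the eigenvalues of $\nu_j\mathbb{I}_2 + i\Omega_1$ are $\nu_j\pm 1$, positive semidefiniteness forces $\nu_j \geq 1$, as required.

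There is no serious obstacle here; the proof is essentially bookkeeping, and the only thing one must be careful about is distinguishing the symplectic diagonalization itself (Williamson) from the quantum constraint $\nu_j \geq 1$ (uncertainty), and correctly dualizing $U_{S,r}$ versus $U_{S,r}^{\dagger}$ so that the conjugation in the statement matches the convention set earlier.
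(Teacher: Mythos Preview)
Your proposal is correct and matches the paper's approach exactly: the paper does not give a formal proof but simply states that the proposition ``is an immediate consequence of Williamson's theorem and the above discussion,'' and you have faithfully unpacked precisely that argument, including the correct handling of the $U_{S,r}^{\dagger}$ convention and the derivation of $\nu_j\geq 1$ from the block structure of $\boldsymbol{\gamma}+i\Omega_m$.
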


If $\rho$ is in Williamson's form, then $\rho_G=\tau_1\otimes \cdots\otimes \tau_m$ is a tensor product of \emph{thermal states} of the form
\begin{align}\label{eq:thermal}
\tau_j = \big(1-e^{-\beta_j}\big)  e^{-\beta_j \ac_j^\dagger \ac_j},
\end{align}
where $\beta_j>0$ is defined by
\begin{align}\label{eq:nu-beta}
\nu_j= \frac{1+ e^{-\beta_j}}{1- e^{-\beta_j}}\geq 1.
\end{align}

Note that if $\nu_j=1$, then $\beta_j=+\infty$ and $\tau_j$ is the \emph{vacuum state} $\ket 0 \langle 0|$ which is pure. We also note that in this case even in $\rho$ the $j$-th mode is in the vacuum state. The point is that $\nu_j=1$ implies $\tr(\rho_G \bfa_j^\dagger \bfa_j) = 0$ and since the covariance matrix of $\rho$ matches that of $\rho_G$, we also have $\tr(\rho \bfa_j^\dagger \bfa_j) =0$. Thus, the support of $\rho$ is included in the kernel of $\bfa_j^\dagger \bfa_j$. This means that the $j$-th mode in $\rho$ is in the vacuum state, which is pure so the $j$-mode is decoupled from the other modes.

The characteristic and Wigner functions of $\rho_G=\tau_1\otimes \cdots\otimes \tau_m$ are 
\begin{align}\label{eq:tau-chi-W}
\chi_{\rho_G}(z) = \prod_{j=1}^m e^{-\frac 12 \nu_j|z_j|^2}, \qquad \quad W_{\rho_G}(z)= \prod_{j=1}^m \frac{2}{\pi \nu_j} e^{-\frac{2}{\nu_j}|z_j|^2}.
\end{align}

\subsection{Relative entropy and $p$-norm}\label{subsec:p-norm}
Given two quantum states $\rho, \sigma$ where the support of $\rho$ is included in the support of $\sigma$, the quantum relative entropy between $\rho, \sigma$ is defined~\cite{Um1962} by
\[
	D(\rho \| \sigma ) := \tr\big(\rho(\log \rho - \log \sigma)\big).
\]

For any operator $T$, let $|T| = \sqrt{T^\dagger T}$. Then, for $p\geq 1$, assuming that $|T|^p$ is a trace class operator, the Schatten $p$-norm of $T$ is defined  by
\[
	\| T\|_{p} : =\Big(\tr\big(|T|^{p}\big)\Big)^{1/p}.
\]
When $p=1$, this norm is sometimes called the \emph{trace norm}.  Schatten norms satisfy H\"older's inequality.

The $p$-norm for $p=2$ it is called the \emph{Hilbert-Schmidt norm}. Interestingly, the Hilbert-Schmidt norm can be expressed in terms of the characteristic function:
\begin{equation} \label{PlancherelEq}
	\| T\|^{2}_2 = \frac{1}{\pi^m} \int \big| \chi_{T}(z)\big|^{2} \dd^{2m}z.
\end{equation}
We refer to the above equation as the \emph{quantum Plancherel identity}.

The $p$-norm in the limit of $p\to+\infty$ recovers the operator norm and is denoted by $\|\cdot\|=\|\cdot\|_\infty$.

\subsection{Quantum convolution}\label{subsec:convolution}
The beam splitter is a particular Gaussian unitary that acts on a bipartite quantum system with each part consisting of $m$ modes. The beam splitter with transmissivity parameter $\eta \in [0,1]$ is defined by
\[
	U_{\eta} := \exp\bigg(\arccos(\sqrt{\eta}) \sum_{j=1}^{m}\big(\ac_{j,1}^{\dagger} \ac_{j,2} - \ac_{j,1} \ac_{j,2}^{\dagger}\big)\bigg) = \prod_{j=1}^{m} \exp\Big(\arccos(\sqrt{\eta}) \big(\ac_{j,1}^{\dagger} \ac_{j,2} - \ac_{j,1} \ac_{j,2}^{\dagger}\big)\Big),
\]
where $\ac_{j,1}$ is the annihilation operator of the $j$-th mode of the first subsystem, and  $\ac_{j,2}$ is the annihilation operator of the $j$-th mode of the second subsystem. 
$U_\eta$ can also be characterized in terms of its action on these annihilation operators:
\begin{align} 
\begin{cases} \label{BeamA}
	U_{\eta} \ac_{j,1} U_{\eta}^{\dagger} = \sqrt{\eta} \ac_{j,1} - \sqrt{1 - \eta} \ac_{j,2},\\
	U_{\eta} \ac_{j,2} U_{\eta}^{\dagger} =  \sqrt{1 - \eta} \ac_{j,1} + \sqrt{\eta} \ac_{j,2} . 
\end{cases}
\end{align}

The \emph{quantum convolution} of two $m$-mode quantum states $\rho, \sigma$ with parameter $\eta\in [0,1]$ is defined by 
\[
	\rho \boxplus_{\eta} \sigma = \tr_2\Big(U_{\eta} (\rho \otimes \sigma ) U_{\eta}^{\dagger}\Big),
\]
where $\tr_{2}(\cdot)$ stands for the partial trace with respect to the second subsystem. Using~\eqref{BeamA} it can be verified that
\begin{align}\label{eq:convolution-char-function}
	\chi_{\rho \boxplus_{\eta} \sigma}(z) = \chi_{\rho}(\sqrt{\eta} z)\, \chi_{\sigma}(\sqrt{1-\eta} z) .
\end{align}
A crucial property of quantum convolution is that it commutes with Gaussian unitaries $U$ as follows:
\begin{equation} \label{UnitaryAndConv}
	U (\rho \boxplus_{\eta} \sigma) U^{\dagger} = (U \rho U^{\dagger}) \boxplus_{\eta} (U \sigma U^{\dagger}).
\end{equation}
This equation can be proven using~\eqref{BeamA}.

In this paper, we are interested in the \emph{symmetric} quantum convolution of $n$ quantum states. Let $\rho_1, \dots, \rho_n$ be $m$-mode quantum states. Then, their symmetric quantum convolution denoted by $\rho_1 \boxplus \dots \boxplus \rho_n$ is defined inductively by $\rho_1 \boxplus \rho_2 = \rho_1 \boxplus_{\frac 12} \rho_2$ and 
\begin{align*}
	 \rho_1 \boxplus \dots \boxplus \rho_n = (\rho_1 \boxplus \cdots \boxplus \rho_{n-1}) \boxplus_{1 - \frac 1n} \rho_n.
\end{align*}
When $\rho_1=\cdots =\rho_n=\rho$, we denote $\rho_1 \boxplus \dots \boxplus \rho_n$ by $\rho^{\boxplus n}$. Using~\eqref{eq:convolution-char-function} it is readily verified that 
\[
	\chi_{\rho_1 \boxplus \cdots \boxplus \rho_n}(z) = \chi_{\rho_1}\Big(\frac {z}{\sqrt n}\Big) \cdots \chi_{\rho_n}\Big(\frac {z}{\sqrt n}\Big),
\]
and
\[
	\chi_{\rho^{\boxplus n}}(z) = \chi_{\rho}\Big(\frac {z}{\sqrt n}\Big)^{n}.
\]
Also, using~\eqref{BeamA} it can be shown that 
\begin{align}\label{eq:gamma-d-conv}
\boldsymbol{\gamma}(\rho\boxplus_\eta \sigma) = \eta \boldsymbol{\gamma}(\rho) + (1-\eta)\boldsymbol{\gamma}(\sigma), \qquad \mathbf{d}(\rho\boxplus_\eta\sigma) = \sqrt \eta  \mathbf{d}(\rho) + \sqrt{1-\eta} \mathbf{d}(\sigma).  
\end{align}
The above equations imply that if $\rho$ is a Gaussian state with $\mathbf{d}(\rho)=0$, then $\rho^{\boxplus n}= \rho$ for all~$n$.

The following lemma states a crucial property of quantum convolution that is heavily used in this paper.

\begin{lemma} 
Let $\rho$ and $\sigma$ be two $m$-mode quantum states with finite second moments and let $\eta \in [0,1]$ be a transmissivity parameter. Let $\ac_j$ be the annihilation operator of the $j$-th mode of the output of the convolution map. Then, we have
\[
	\sqrt{\eta} \big[\ac_{j}, \rho \boxplus_{\eta} \sigma\big] = \big[\ac_{j,1},\, \rho\big] \boxplus_{\eta} \sigma.
\]
\end{lemma}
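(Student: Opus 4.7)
The plan is to pass to characteristic functions and use the fact that a trace-class operator is uniquely determined by its characteristic function. The key algebraic ingredient is the commutation identity $[\ac_j, D_z] = z_j D_z$, which follows immediately from the displacement rule \eqref{DisplacementOp}: from $D_z \ac_j D_z^{\dagger} = \ac_j - z_j$ we get $D_z \ac_j = (\ac_j - z_j) D_z$, hence $\ac_j D_z - D_z \ac_j = z_j D_z$. Given this, whenever $T$ is trace class and $[\ac_j, T]$ is trace class, cyclicity of the trace yields
\[
	\chi_{[\ac_j, T]}(z) \;=\; \tr\big([\ac_j, T]\, D_z\big) \;=\; \tr\big(T[D_z, \ac_j]\big) \;=\; -z_j\, \chi_T(z).
\]

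Before using this, I would take a moment to verify that the relevant commutators are indeed trace class. Finite second moments of $\rho$ give $\tr(\rho \ac_j^\dagger \ac_j)<\infty$ and $\tr(\rho \ac_j \ac_j^\dagger) <\infty$, which mean that $\ac_j \sqrt\rho$ and $\ac_j^\dagger \sqrt\rho$ are Hilbert--Schmidt; consequently $\ac_j \rho$ and $\rho \ac_j$ are trace class, so $[\ac_{j,1}, \rho]$ is trace class. By \eqref{eq:gamma-d-conv} the convolution $\rho\boxplus_\eta \sigma$ also has finite second moments, so the same argument shows $[\ac_j, \rho\boxplus_\eta\sigma]$ is trace class.

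Now I would compute both sides in characteristic-function form. For the left-hand side, combining the identity above with the convolution formula \eqref{eq:convolution-char-function},
\[
	\chi_{\sqrt\eta\, [\ac_j,\, \rho\boxplus_\eta \sigma]}(z) \;=\; -\sqrt{\eta}\, z_j\, \chi_\rho\!\big(\sqrt{\eta}\,z\big)\, \chi_\sigma\!\big(\sqrt{1-\eta}\,z\big).
\]
For the right-hand side, note that \eqref{eq:convolution-char-function} is bilinear in the two inputs and so extends from states to arbitrary trace-class operands; applying the commutator-characteristic identity once more,
\[
	\chi_{[\ac_{j,1},\rho]\,\boxplus_\eta\, \sigma}(z) \;=\; \chi_{[\ac_{j,1},\rho]}\!\big(\sqrt{\eta}\,z\big)\, \chi_\sigma\!\big(\sqrt{1-\eta}\,z\big) \;=\; -\sqrt{\eta}\, z_j\, \chi_\rho\!\big(\sqrt{\eta}\,z\big)\, \chi_\sigma\!\big(\sqrt{1-\eta}\,z\big).
\]
The two expressions coincide, so the operators are equal.

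As for the main obstacle: there is nothing conceptually deep here; the only real care is the technical bookkeeping just above, namely ensuring that each commutator in sight is trace class so that its characteristic function is defined, and that the convolution identity for characteristic functions extends from states to general trace-class operators by bilinearity. Both points follow from the finite second-moment hypothesis. An alternative direct route would expand $\tr_2([\ac_{j,1}, U_\eta(\rho\otimes\sigma)U_\eta^\dagger])$ using $U_\eta^\dagger \ac_{j,1} U_\eta = \sqrt{\eta}\,\ac_{j,1} + \sqrt{1-\eta}\,\ac_{j,2}$ (the inverse of \eqref{BeamA}), but then one must still argue that the cross term $\tr_2(U_\eta(\rho \otimes [\ac_{j,2},\sigma])U_\eta^\dagger)$ vanishes, which is most transparently done precisely through the characteristic-function computation above.
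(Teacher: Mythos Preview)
Your proof is correct and takes a genuinely different route from the paper's. The paper argues directly at the level of operators: it writes $[\ac_{j,1},\rho]\boxplus_\eta\sigma = \tr_2\big(U_\eta([\ac_{j,1},\rho\otimes\sigma])U_\eta^\dagger\big)$, conjugates $\ac_{j,1}$ through $U_\eta$ using \eqref{BeamA} to get $\sqrt{\eta}\,\ac_{j,1}-\sqrt{1-\eta}\,\ac_{j,2}$, and then disposes of the $\ac_{j,2}$ term by noting that $\tr_2\big([\ac_{j,2},\,\cdot\,]\big)=0$ since both $\ac_{j,2}U_\eta(\rho\otimes\sigma)U_\eta^\dagger$ and $U_\eta(\rho\otimes\sigma)U_\eta^\dagger\ac_{j,2}$ are trace class. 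You instead reduce everything to the scalar identity $\chi_{[\ac_j,T]}(z)=-z_j\chi_T(z)$ and the multiplicativity \eqref{eq:convolution-char-function}, which makes the algebra entirely mechanical. Your approach is cleaner and immediately suggests generalizations (higher commutators, other operators in place of $\ac_j$), while the paper's direct manipulation is what you yourself flag as the ``alternative route'' at the end --- and the paper handles the cross term you worry about by exactly the trace-class argument, not by appealing to characteristic functions. One small remark: the cyclicity step $\tr(\ac_j T D_z)=\tr(T D_z \ac_j)$ with $\ac_j$ unbounded does require that both products be trace class, which you implicitly have since $T D_z\ac_j = T\ac_j D_z - z_j T D_z$; this is at the same level of rigor as the paper's own treatment of such issues.
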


Here, $\big[\ac_{j,1}, \rho\big] \boxplus_{\eta} \sigma$ is defined simply by extending the definition of convolution to arbitrary trace class operators. We note that by H\"older's inequality 
$$\|\rho \ac_{j,1}\|_1 \leq \| \sqrt \rho\|_2\cdot \|\sqrt \rho \ac_{j, 1}\|_2 = \tr\big(\ac_{j, 1}^\dagger \rho \ac_{j, 1}\big)^{1/2} <+\infty,$$ 
where the last inequality follows from the fact that $\rho$ has finite second moments. Similarly, we have $\|\ac_{j, 1}\rho\|_1<+\infty$, so $[\ac_{j, 1}, \rho]$ is trace class and $\big[\ac_{j,1},\, \rho\big] \boxplus_{\eta} \sigma$ is well-defined.

\begin{proof}
Using~\eqref{BeamA} we compute
\begin{align*}
	[\ac_{j,1}, \rho] \boxplus_{\eta} \sigma & = \tr_2 \Big( U_{\eta} \big([\ac_{j,1} ,  \rho] \otimes \sigma \big) U_{\eta}^{\dagger} \Big)\\
	& = \tr_2 \Big( U_{\eta} \big([\ac_{j,1} ,  \rho\otimes \sigma]  \big) U_{\eta}^{\dagger} \Big)\\
	& = \tr_2 \Big(\big[ U_{\eta}\ac_{j,1}U_{\eta}^\dagger ,  U_{\eta} (\rho\otimes \sigma) U_{\eta}^\dagger  \big]  \Big)\\
	& = \tr_2 \Big(\big[ \sqrt{\eta} \ac_{j,1} - \sqrt{1 - \eta} \ac_{j,2} ,  U_{\eta} (\rho\otimes \sigma) U_{\eta}^\dagger  \big]  \Big)\\
& = \sqrt{\eta} \tr_2 \Big(\big[ \ac_{j,1} ,  U_{\eta} (\rho\otimes \sigma) U_{\eta}^\dagger  \big]  \Big) - \sqrt{1 - \eta} \tr_2 \Big(\big[  \ac_{j,2} ,  U_{\eta} (\rho\otimes \sigma) U_{\eta}^\dagger  \big]  \Big)\\
& = \sqrt{\eta} \tr_2 \Big(\big[ \ac_{j,1} ,  U_{\eta} (\rho\otimes \sigma) U_{\eta}^\dagger  \big]  \Big)\\
& = \sqrt{\eta}  \Big[ \ac_{j} ,  \tr_2 \big(U_{\eta} (\rho\otimes \sigma) U_{\eta}^\dagger  \big)  \Big]\\
& = \sqrt{\eta} \big[ \ac_{j} ,   \rho \boxplus_\eta \sigma \big].  
\end{align*}	
Here, in the sixth line we use the fact that $\ac_{j,2}$ acts on the second subsystem and both $\ac_{j,2}  U_{\eta} (\rho\otimes \sigma) U_{\eta}^\dagger, U_{\eta} (\rho\otimes \sigma) U_{\eta}^\dagger \ac_{j,2}$ are trace class operators to conclude that  $\tr_2 \Big(\big[  \ac_{j,2} ,  U_{\eta} (\rho\otimes \sigma) U_{\eta}^\dagger  \big] =0$. Also, in the penultimate equation we use the fact that $\ac_{j,1}$ acts on the first subsystem.

\end{proof}

By the definition of $\rho_1 \boxplus \cdots \boxplus \rho_n$, this lemma also gives
\begin{equation} \label{DeriveConvolution}
	\frac{1}{\sqrt{n}} \big[\ac_j, \rho_1 \boxplus \cdots \boxplus \rho_n \big] = [\ac_{j,1},\rho_1] \boxplus \rho_2  \cdots \boxplus  \rho_n.
\end{equation}

\subsection{Quantum Poincar\'e inequalities}\label{subsec:Poincare}
In this paper we need a generalization of the notion of Poincar\'e inequality for quantum bosonic systems. To this end, we first need to define an inner product on the space of operators with respect to a given quantum state $\rho$. 

Given a probability distribution on a measurable space, we may define the inner product of two functions $f, g$ on this space by $\mathbb E[\bar f g]$. Due to the non-commutative nature of quantum mechanics, this inner product can be generalized to the quantum setting in various ways. One natural choice that is used, e.g., in the context of quantum logarithmic-Sobolev inequalities is $(X, Y)\mapsto \tr(\rho^{1/2} X^\dagger \rho^{1/2}Y)$~\cite{OZ99} and is called the \emph{KMS inner product}. Another candidate, which as will be discussed later is the most natural one for our application, is the \emph{Kubo-Mori-Bogoliubov} inner product:
\begin{align}\label{eq:BKM-inner-prod}
(X, Y) \mapsto \int_0^1 \tr\Big( \rho^{1-s} X^\dagger \rho^s Y  \Big)\dd s.
\end{align}
Nevertheless, both the above inner products, particularly the latter one, are complicated to work with and do not satisfy some desirable properties. In fact, as will become clear later on, it is advantageous to pick an inner product that is \emph{linear} in $\rho$. To this end, we can take $(X, Y)\mapsto \tr(\rho X^\dagger Y)$. However, in some sense, this inner product is not stable enough to handle the case where the density operator $\rho$ is not faithful. To overcome this difficulty, we simply consider a symmetric version of this inner product defined by
\begin{align}\label{eq:symmetric-inner-prod}
\langle X, Y\rangle_\rho : = \frac{1}{2}\tr\big(\rho X^\dagger Y\big) + \frac{1}{2}\tr\big(X^\dagger \rho  Y  \big).
\end{align}
This inner product is sometimes called the \emph{Symmetric Logarithmic Derivative (SLD)} inner product~\cite{Suzuki2016}.

We note that $\langle \cdot, \cdot\rangle_\rho$ is really an inner product only if $\rho$ is strictly positive definite. Nevertheless, by abuse of terminology, we still call it an inner product even if $\rho$ is not faithful, in which case we essentially use the fact that $\langle \cdot, \cdot\rangle_\rho$ is a non-negative bilinear form.  
We denote the induced norm by\footnote{This norm should not be confused with the $2$-norm $\|\cdot\|_2$ defined in Subsection~\ref{subsec:p-norm}.} 
$$\|X\|_{2, \rho} := \langle X, X\rangle_\rho^{\frac 12}.$$
Observe that $\|X\|_{2, \rho} = \|X^\dagger\|_{2, \rho}$.

We will need yet another notion to define our quantum Poincar\'e inequality. Given an operator $X$ acting on an $m$-mode bosonic system, we define its \emph{gradient vector} by
\[
	\partial X = \big([\ac_1, X], [\ac_1^\dagger, X], \dots , [\ac_m, X], [\ac_m^\dagger, X]\big).
\]
We think of the $j$-th pair of coordinates of this vector as the derivatives in the $j$-th direction.

\begin{definition}\label{def:Poincare-ineq}
Let $\rho$ be an $m$-mode quantum state. We say that $\rho$ satisfies the Poincar\'e inequality (with respect to the SLD inner product) with constant $\lambda_\rho>0$ if for any operator $X$ satisfying $\tr(\rho X)=0$ and $\|X\|_{2, \rho}<+\infty$ we have 
\[
	\lambda_\rho \|X\|_{2, \rho}^2\leq  \|\partial X\|_{2, \rho}^2,
\]
where $\|\partial X\|_{2, \rho}^2:= \sum_{j=1}^m \Big(\big\|[\ac_j, X]\big\|_{2, \rho}^2 + \big\|[\ac_j^\dagger, X]\big\|_{2, \rho}^2\Big)$. Here, by convention, if the right hand side is not finite, we interpret it as $+\infty$ in which case the above inequality holds true. 
\end{definition}

Let us assume that $\rho$ is strictly positive definite. In this case, the space of operators $X$ satisfying $\|X\|_{2, \rho}<+\infty$ is a Hilbert space which we denote by $\mathbf L_2(\rho)$.\footnote{More precisely, $\mathbf L_2(\rho)$  is the completion of the space of bounded operators under the norm $\|\cdot\|_{2, \rho}$.} Then, the gradient map $\partial$ can be considered as an \emph{unbounded} operator mapping $\mathbf L_2(\rho)$ to $\mathbf L_2(\rho)^{2m}$. Letting $\partial^*$ be its adjoint, the Poincar\'e constant $\lambda_\rho$ can be thought of as the spectral gap of $\partial^*\partial$. The point is  that $\partial\, \mathbb I=0$ and the condition $\tr(\rho X)=0$ is equivalent to $\langle X, \mathbb I\rangle_\rho=0$. Thus, $\lambda_\rho$ equals the spectral gap of $\partial^*\partial$ above the zero eigenvalue with the corresponding eigenvector $\mathbb I$.

In Appendix~\ref{app:PoincareInequality} we explore some properties of quantum Poincar\'e inequalities. In particular, we show that if $\lambda_\rho>0$ for a state $\rho$, then $\lambda_{U\rho U^\dagger}>0$ for all Gaussian unitaries $U$. We also show that $\lambda_{\rho \boxplus_\eta \sigma}\geq \min\{\lambda_\rho, \lambda_\sigma\}$. Moreover, we show that the Poincar\'e inequality implies finiteness of all moments. Furthermore, we show that the Poincar\'e constant is positive for all Gaussian states that are strictly positive definite.

\subsection{Functional calculus for bivariate functions} \label{subsec:pi-representation} 

For a quantum state $\rho$ and continuous function $g(x, y): [0,1]\to \mathbb C$  we define the superoperator $\PiInner{\rho}{g}(\cdot)$ by
\begin{equation}\label{eq:def_Pi}
	\PiInner{\rho}{g}(X) := g(L_\rho, R_\rho)(X),
\end{equation}
where $L_\rho, R_\rho$ are the left and right multiplications by $\rho$, respectively, i.e., $L_\rho(X) = \rho X$ and $R_\rho(X) = X\rho$. We note that $L_\rho, R_\rho$ commute, so $g(L_\rho, R_\rho)$ is well defined; see~\cite{bardet2017estimating} for more details.
  
From the definition we clearly have
$$\PiInner{\rho}{g_1}\circ \PiInner{\rho}{g_2} = \PiInner{\rho}{g_1g_2}.$$
Moreover, if $g(x, y)$ is real-valued, then $\PiInner{\rho}{g}$ is self-adjoint with respect to the Hilbert-Schmidt inner product, meaning that 
$$\tr\Big( X^\dagger \PiInner{\rho}{g}(Y)  \Big) = \tr\Big( \big(\PiInner{\rho}{g}(X)\big)^\dagger Y  \Big).$$
This is simply because $L_\rho, R_\rho$ are self-adjoint with respect to the Hilbert-Schmidt inner product.

We will use the following proposition from~\cite{bardet2017estimating}.

\begin{proposition}\label{prop:pi-representation}
Suppose that $g(x, y), h(x, y)$ are two real valued functions satisfying $g(x, y)\geq h(x, y)$. Then, for any $X$ we have
$$\tr\Big(  X^\dagger \PiInner{\rho}{g}(X)  \Big) \geq \tr\Big(  X^\dagger \PiInner{\rho}{h}(X)  \Big). $$
\end{proposition}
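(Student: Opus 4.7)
The plan is to reduce the statement to a pointwise inequality between scalars by simultaneously diagonalizing $L_\rho$ and $R_\rho$ on the Hilbert--Schmidt space of operators, and then invoking the hypothesis $g\geq h$ term by term.

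First I would write a spectral decomposition of the state $\rho=\sum_i \lambda_i \ketbra{i}{i}$ with $\lambda_i\in[0,1]$ and $\sum_i\lambda_i=1$. The rank-one operators $\{\ketbra{i}{j}\}_{i,j}$ form an orthonormal basis of the Hilbert--Schmidt space, and by direct computation
\[
L_\rho\,\ketbra{i}{j} = \lambda_i\,\ketbra{i}{j}, \qquad R_\rho\,\ketbra{i}{j} = \lambda_j\,\ketbra{i}{j}.
\]
Thus $L_\rho$ and $R_\rho$ are simultaneously diagonalized on this basis; in particular they are commuting self-adjoint (in the HS inner product) operators whose joint spectrum lies in $[0,1]\times[0,1]$. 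By the joint functional calculus for commuting self-adjoint operators one obtains
\[
\PiInner{\rho}{g}\big(\ketbra{i}{j}\big)=g(L_\rho,R_\rho)\ketbra{i}{j}=g(\lambda_i,\lambda_j)\,\ketbra{i}{j},
\]
and similarly for $h$.

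Next, expanding $X=\sum_{i,j}x_{ij}\ketbra{i}{j}$ and using orthonormality of $\{\ketbra{i}{j}\}$, I would get the diagonal expressions
\[
\tr\!\Big(X^\dagger \PiInner{\rho}{g}(X)\Big)=\sum_{i,j}|x_{ij}|^2\,g(\lambda_i,\lambda_j),\qquad \tr\!\Big(X^\dagger \PiInner{\rho}{h}(X)\Big)=\sum_{i,j}|x_{ij}|^2\,h(\lambda_i,\lambda_j).
\]
Since $g(\lambda_i,\lambda_j)\geq h(\lambda_i,\lambda_j)$ for every pair $(i,j)$ by the hypothesis, the desired inequality follows by termwise comparison of the two nonnegatively-weighted series.

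The only delicate point is convergence/regularity: $\rho$ acts on an infinite-dimensional Hilbert space and $X$ need not be bounded, so a priori the matrix elements $x_{ij}$ and the eigenvalues $\lambda_i$ could conspire to make the sums infinite. However, since $g,h$ are continuous on the compact set $[0,1]^2$ they are bounded, so both sums are either simultaneously finite or one can interpret an infinite value as $+\infty$; the termwise inequality then still yields the stated comparison. I expect this functional-analytic bookkeeping (identifying the appropriate dense domain on which $\PiInner{\rho}{g}$ is defined via the spectral theorem) to be the only real subtlety, but it is standard and not the main content of the proposition.
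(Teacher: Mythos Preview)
The paper does not actually prove this proposition; it is quoted without proof from~\cite{bardet2017estimating}. Your argument---simultaneously diagonalizing $L_\rho$ and $R_\rho$ in the rank-one basis $\{\ketbra{i}{j}\}$ built from the spectral decomposition of $\rho$, and then reducing the operator inequality to a termwise comparison of nonnegatively weighted scalars---is exactly the standard proof of this fact and is correct.

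One small caveat on your convergence remark: boundedness of $g,h$ on $[0,1]^2$ does not by itself guarantee that the sums $\sum_{i,j}|x_{ij}|^2 g(\lambda_i,\lambda_j)$ are well-defined when $X$ is not Hilbert--Schmidt, since $g$ is only assumed real-valued (not nonnegative), so cancellations could in principle render the series conditionally convergent or undefined rather than simply $+\infty$. In every application the paper makes of this proposition, however, the relevant functions are nonnegative and the operators $X$ are such that the traces are finite, so this is a non-issue in context.
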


\section{Convergence rate for trace distance} \label{SecTraceDistanceProof}

This section contains the proof of our main result regarding the convergence rate of the quantum CLT in terms of the trace distance. 

\begin{theorem} \label{TraceMainTheoremM}
	Let $\rho$ be an $m$-mode quantum state. Suppose that $\rho$ is centered (i.e., $\mathbf{d}(\rho)=0$) and has a finite moment of order $\kappa=\max\{2m, 3\}$. Let $\rho_{G}$ be the Gaussification of $\rho$ as in Definition~\ref{def:Gaussification}. Then, we have
	\begin{equation}\label{eq:TraceMainTheoremM}
		\big\| \rho^{\boxplus n} - \rho_{G}\big\|_{1} = \mathcal{O}\Big(\frac{1}{\sqrt{n}}\Big).
	\end{equation}
	
\end{theorem}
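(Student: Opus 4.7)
The route the authors sketch in the introduction is a quantum Berry--Esseen argument, and I would follow exactly that skeleton: use H\"older's inequality to pass from the trace norm to a weighted Hilbert--Schmidt norm, apply the quantum Plancherel identity \eqref{PlancherelEq} to translate everything into a statement about characteristic functions, and then run a Taylor expansion of $\chi_\rho$ around the origin, coupled with the tail bound \eqref{SupChar}, to extract the rate $n^{-1/2}$.

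As a preliminary reduction I would bring $\rho_G$ into Williamson's normal form. Proposition \ref{Will} supplies a Gaussian unitary $U$ making $U\rho_G U^\dagger=\tau_1\otimes\cdots\otimes\tau_m$ a tensor product of thermal states with characteristic function as in \eqref{eq:tau-chi-W}. The trace norm is unitarily invariant and, by \eqref{UnitaryAndConv}, convolution commutes with $U$, so I may replace $\rho$ by $U\rho U^\dagger$ and assume $\chi_{\rho_G}$ has the product-Gaussian form. Next I would introduce the weight $A=(H_m+m)^{s}$ for a positive integer $s$ and use the three-factor H\"older inequality applied to $T=\rho^{\boxplus n}-\rho_G=A^{-1/2}\!\cdot\!(A^{1/2}TA^{1/2})\!\cdot\!A^{-1/2}$, together with $\|A^{-1/2}\|_4^{2}=\|A^{-1}\|_2$, to obtain
\[
   \|\rho^{\boxplus n}-\rho_G\|_1 \;\le\; \|A^{-1}\|_2\cdot \|A^{1/2}TA^{1/2}\|_2.
\]
A direct calculation in the Fock basis shows $\|A^{-1}\|_2<\infty$ iff $2s>m$; taking $s=m$ is a convenient uniform choice that aligns with the hypothesis $\kappa=2m$ appearing below. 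Everything then reduces to proving $\|A^{1/2}TA^{1/2}\|_2=\mathcal O(n^{-1/2})$.

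For this, I would convert $\|A^{1/2}TA^{1/2}\|_2$ into an $L^2$ statement about $\chi_T$. The displacement identity \eqref{DisplacementOp} turns left/right multiplication by $\bfa_j,\bfa_j^\dagger$ into first-order polynomial-differential operators in $(z,\bar z,\partial_z,\partial_{\bar z})$ acting on $\chi_T$; iterating yields $\chi_{A^{1/2}TA^{1/2}}=P\,\chi_T$ for an explicit polynomial-differential operator $P$ of total order $2s=2m$. Combined with \eqref{PlancherelEq} this gives
\[
   \|A^{1/2}TA^{1/2}\|_2^2 \;=\; \frac{1}{\pi^m}\int_{\mathbb C^m}\!\bigl|P\,\chi_T(z)\bigr|^{2}\,d^{2m}z.
\]
Now $\chi_{\rho^{\boxplus n}}(z)=\chi_\rho(z/\sqrt n)^n$; since $\rho$ is centered and shares its second moments with $\rho_G$, a third-order Taylor expansion of $\log\chi_\rho$ at the origin (licensed by \eqref{eq:moments-derivatives} with $\kappa\ge 3$), together with $\chi_{\rho_G}(z/\sqrt n)^n=\chi_{\rho_G}(z)$, gives for $|z|\le \delta_0\sqrt n$ the pointwise estimate
\[
   \chi_{\rho^{\boxplus n}}(z)-\chi_{\rho_G}(z) \;=\; \chi_{\rho_G}(z)\cdot \mathcal O\!\Bigl(\tfrac{|z|^{3}}{\sqrt n}\Bigr),
\]
and analogous bounds on all derivatives of $\chi_T$ up to order $2m$, obtained by differentiating the same expansion. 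Outside the ball, \eqref{SupChar} forces $|\chi_{\rho^{\boxplus n}}(z)|=|\chi_\rho(z/\sqrt n)|^n\le (1-\delta)^n$, producing an exponentially small tail contribution. Integrating $|P\,\chi_T|^{2}$ against the Gaussian envelope $|\chi_{\rho_G}|^{2}=\prod_j e^{-\nu_j |z_j|^{2}}$ then yields $\|A^{1/2}TA^{1/2}\|_2^2=\mathcal O(1/n)$, which, combined with the H\"older step, delivers \eqref{eq:TraceMainTheoremM}.

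The delicate step is the third one: the differential operator $P$ has order $2m$, so the chain rule applied to $\chi_\rho(z/\sqrt n)^n$ produces a proliferation of terms in which various derivatives of $\chi_\rho$ land on different copies of $\chi_\rho(z/\sqrt n)$ while the remaining copies contribute factors of $\chi_\rho(z/\sqrt n)^{n-j}$. One must verify that, after the cancellations against the corresponding derivatives of $\chi_{\rho_G}$, the leading Berry--Esseen remainder of size $|z|^{3}/\sqrt n$ survives pointwise and that every resulting term is integrable against the Gaussian envelope. This is precisely where both parts of the hypothesis $\kappa=\max\{2m,3\}$ enter: the $3$ licenses the third-order Taylor expansion underlying the Berry--Esseen bound, while the $2m$ guarantees, via \eqref{eq:moments-derivatives}, that every derivative of $\chi_\rho$ of order up to $2m$ appearing in $P\,\chi_T$ is uniformly bounded in a neighborhood of the origin.
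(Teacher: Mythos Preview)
Your outline matches the paper's proof almost step for step: Williamson reduction, H\"older to pass to a weighted Hilbert--Schmidt norm, Plancherel to work with characteristic functions, inner/outer split of the integral, Edgeworth-type expansion on the inner ball, and \eqref{SupChar} for the tail. Two places, however, do not quite go through as you wrote them.

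First, the weight. With $A=(H_m+m)^m$ the operator $A^{1/2}=(H_m+m)^{m/2}$ is not a polynomial in the canonical operators when $m$ is odd, so the claim that $\chi_{A^{1/2}TA^{1/2}}=P\chi_T$ for a polynomial-differential $P$ of order $2m$ is false in that case; you are left with a fractional pseudodifferential operator and the bookkeeping collapses. The paper avoids this by choosing instead $A=\ac_1\cdots\ac_m$ and bounding $\|T\|_1\le\|(AA^\dagger)^{-1/2}\|_4^{2}\,\|A^\dagger TA\|_2$. Here $AA^\dagger=\prod_j(\ac_j^\dagger\ac_j+1)$ so $\|(AA^\dagger)^{-1/2}\|_4^{4}=(\pi^2/6)^m<\infty$, and $A^\dagger TA$ is literally a degree-$2m$ polynomial sandwich, whence $\chi_{A^\dagger TA}$ is a genuine order-$2m$ polynomial-differential image of $\chi_T$ for every $m$. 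The paper also splits off the modes with $\nu_j=1$ (they are vacuum and decouple) because the Edgeworth bound it imports from the classical theory, Lemma~\ref{BRMethod}, carries the envelope $e^{-(\nu_{\min}-1)\|z\|^2/4}$, which degenerates if $\nu_{\min}=1$.

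Second, the tail. The bound \eqref{SupChar} controls $|\chi_{\rho^{\boxplus n}}(z)|$, not $|P\chi_{\rho^{\boxplus n}}(z)|$. Differentiating $\chi_\rho(z/\sqrt n)^n$ up to $2m$ times leaves at worst $\chi_\rho(z/\sqrt n)^{n-2m}$ multiplied by products of derivatives of $\chi_\rho$ evaluated at $z/\sqrt n$, and for $\|z\|>\epsilon\sqrt n$ the point $z/\sqrt n$ lies outside the small ball where \eqref{eq:moments-derivatives} bounds those derivatives. So the one-line tail argument you give is not enough. The paper closes this gap by lifting to the tensor power: it rewrites $\tr(A^\dagger\rho^{\boxplus n}A\,D_z)=\tr(\widetilde A^\dagger\rho^{\otimes n}\widetilde A\,D_{z/\sqrt n}^{\otimes n})$, peels off the factor $\chi_\rho(z/\sqrt n)^{n-2m}$ (bounded by $(1-\delta)^{n-2m}$), and then re-applies Plancherel to bound the $L^2$-norm of the remaining polynomial piece by a finite sum of Hilbert--Schmidt norms of operators like $\ac_{1,k_1}^\dagger\cdots\rho^{\otimes 2m}\cdots\ac_{m,\ell_m}$, which are finite precisely because $M_{2m}(\rho)<\infty$. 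This second use of Plancherel is the missing ingredient in your sketch.
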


\medskip
Recall that the characterization function is very well behaved under convolution:
\[
	\chi_{\rho^{\boxplus n}}(z) = \chi_{\rho}\Big(\frac {z}{\sqrt n}\Big)^{n}.
\]
Using this equation it is not hard to verify the convergence of $\chi_{\rho^{\boxplus n}}(z)$ to $\chi_{\rho_G}(z)$. Nevertheless, this convergence does not directly imply anything about the rate of convergence in trace distance. However, if we replace the trace distance with the $2$-norm, then we can use the Plancherel identity~\eqref{PlancherelEq} to prove a bound on the rate of convergence in terms of characteristic functions. This idea is pursued in~\cite{CRLimitTheorem} and is followed by the use of the so called gentle measurement lemma to obtain a bound on the convergence rate in trace distance in terms of that of the $2$-norm. Here, we essentially follow the same  line of ideas to use the Plancherel identity, but take a different approach to relate trace distance to $2$-norm. To this end, we use a simple H\"older's inequality to derive a quantum generalization of an inequality that is a starting point of several CLTs in the classical case. In fact, we use the following lemma that is a quantum generalization of~\cite[Lemma 11.6]{BhRa}.

\begin{lemma}\label{lem:BoundTChi}
Let $T$ be an operator such that $A^\dagger TA$ is trace class, where $A =  \ac_1 \cdots \ac_m$. 
Then, we have
\begin{equation} \label{BoundTChi}
	{\| T \|}_{1}^{2} \leq \Big(\frac{\pi^2}{6}\Big)^{m} \int \big|\chi_{A^{\dagger} T A}(z)\big|^{2} \dd^{2m} z,
\end{equation}
\end{lemma}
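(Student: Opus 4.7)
The plan is to reduce the inequality to an application of H\"older's inequality for Schatten norms, with the constant $(\pi^2/6)^m$ emerging from $\tr(N^{-2})$ where $N := A A^\dagger$. First I would observe that, by the canonical commutation relations and the commutativity of different modes, $N = \prod_{j=1}^m \bfa_j \bfa_j^\dagger = \prod_{j=1}^m (\bfa_j^\dagger \bfa_j + \mathbb{I})$ is diagonal in the Fock basis with eigenvalue $\prod_{j=1}^m (n_j+1)$ on $\ket{n_1, \dots, n_m}$. In particular $N^{-1/2}$ is a bounded operator of operator norm $1$, and using the Basel identity $\sum_{k=0}^{\infty}(k+1)^{-2} = \pi^2/6$,
\[
\|N^{-1/2}\|_4^4 \;=\; \tr\big(N^{-2}\big) \;=\; \prod_{j=1}^{m}\, \sum_{n_j=0}^{\infty}\frac{1}{(n_j+1)^2} \;=\; \bigg(\frac{\pi^2}{6}\bigg)^{\!m}.
\]

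Next I would establish, via cyclicity of the trace (or equivalently a direct Fock-basis computation), the key identity
\[
\|A^\dagger T A\|_2^2 \;=\; \tr\!\big(N T^\dagger N T\big) \;=\; \|N^{1/2} T N^{1/2}\|_2^2.
\]
Combined with the algebraic factorization $T = N^{-1/2}\,(N^{1/2} T N^{1/2})\, N^{-1/2}$ and H\"older's inequality for Schatten norms with exponents $(4,2,4)$, this gives
\[
\|T\|_1 \;\leq\; \|N^{-1/2}\|_4 \cdot \|N^{1/2} T N^{1/2}\|_2 \cdot \|N^{-1/2}\|_4 \;=\; \bigg(\frac{\pi^2}{6}\bigg)^{\!m/2} \|A^\dagger T A\|_2.
\]
Squaring both sides and invoking the quantum Plancherel identity~\eqref{PlancherelEq} (absorbing the Plancherel factor $\pi^{-m} \leq 1$ into the claimed constant) then yields~\eqref{BoundTChi}.

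The main subtlety is in rigorously justifying the factorization $T = N^{-1/2}(N^{1/2} T N^{1/2}) N^{-1/2}$, since $N^{1/2}$ is unbounded. The cleanest route is to argue at the level of Fock-basis matrix elements: the hypothesis that $M := A^\dagger T A$ is trace class (hence Hilbert-Schmidt) forces $M_{n, n'} = \sqrt{\prod_{j=1}^m n_j n_j'}\, T_{n - \mathbf{1},\, n' - \mathbf{1}}$ for $n, n' \geq \mathbf{1}$ and zero otherwise, so $T$ is recovered from $M$ by $T_{k,k'} = M_{k+\mathbf{1},\, k'+\mathbf{1}}\big/\sqrt{\prod_{j=1}^m (k_j+1)(k_j'+1)}$. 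One then verifies the H\"older estimate above by interpreting $N^{1/2} T N^{1/2}$ as the unique Hilbert-Schmidt operator with matrix elements $\sqrt{\prod_j(k_j+1)(k_j'+1)}\, T_{k, k'}$, whose Hilbert-Schmidt norm coincides with $\|M\|_2$ by the computation already performed.
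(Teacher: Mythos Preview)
Your proof is correct and follows essentially the same approach as the paper: both set $N=AA^\dagger$, apply H\"older's inequality with exponents $(4,2,4)$ to the factorization $T=N^{-1/2}(N^{1/2}TN^{1/2})N^{-1/2}$, compute $\|N^{-1/2}\|_4^4=(\pi^2/6)^m$ via the Basel sum, identify $\|N^{1/2}TN^{1/2}\|_2=\|A^\dagger TA\|_2$, and finish with the quantum Plancherel identity (absorbing the harmless $\pi^{-m}$ factor). Your additional care with the unbounded-operator factorization via Fock-basis matrix elements is a welcome rigor check that the paper's proof leaves implicit.
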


\begin{proof}
We first note that $AA^\dagger = \ac_1\ac_1^\dagger \cdots \ac_m\ac_m^\dagger = (\ac^\dagger_1\ac_1+1) \cdots (\ac^\dagger_m\ac_m+1)$ is invertible. Then, by H\"older's inequality we have
\begin{align*}
	\| T \|_{1} &= \tr\big(|T|\big)\\
	& = \tr\big(\big| (AA^\dagger)^{-\frac{1}2} (AA^\dagger)^{\frac 12} T (AA^\dagger)^{\frac 1 2} (AA^\dagger)^{-\frac{1}2} \big|\big) \\
	& \leq \big\| (AA^\dagger)^{-\frac12} \big\|_{4}^{2} \cdot \big\| (AA^\dagger)^{\frac 12} T (AA^\dagger)^{\frac12} \big\|_{2}.
\end{align*}
Next, computing the trace in the Fock basis we obtain
\begin{align*}
	\big\| (AA^\dagger)^{-\frac 12} \big\|_{4}^{4} = \tr\big((AA^\dagger)^{-2}\big) = \sum_{n_1, \dots, n_m = 1}^{\infty} \frac{1}{n_{1}^2 \dots n_{m}^2} = \Big(\frac{\pi^2}{6}\Big)^{m}.
\end{align*}
Also, Plancherel's identity~\eqref{PlancherelEq} yields
\begin{align*}
\big\| (AA^\dagger)^{\frac 12} T (AA^\dagger)^{\frac12} \big\|_{2}^2 & =  \tr\big(  AA^\dagger T AA^\dagger T^\dagger \big) 
	= \big\| A^{\dagger} T A \big\|_{2}^{2}= \int \big|\chi_{A^{\dagger} T A}(z)\big|^{2} \dd^{2m} z.
\end{align*}
Putting these together the desired inequality is implied. 

\end{proof}

We need yet another lemma to prove Theorem~\ref{TraceMainTheoremM}.

\begin{lemma} \label{BRMethod}
Let $\rho$ be an $m$-mode quantum state that is centered and has finite moment of order $\kappa \geq 3$. Then, there exist $\epsilon>0$, and a polynomial $q_{r}(\cdot)$ for any $1 \leq r \leq \kappa -2$, such that the following holds: for all $z \in \mathbb{C}^{m}$ satisfying $\|z\|\leq \epsilon\sqrt n$ and all $\alpha, \beta\in \mathbb Z_+^m$ satisfying $|\alpha| + |\beta| \leq \kappa $ we have
\[
	\Big|\partial_{z}^\alpha  \partial_{\bar z}^\beta \Big(\chi_{\rho^{\boxplus n}}(z) - \chi_{\rho_{G}}(z)\Big)\Big| \leq e^{- \frac{\nu_{{\min}(\rho)}-1}{4} \|z \|^2} \sum_{r=1}^{\kappa-2} n^{-\frac r2} q_{r}(\| z\|),
\]
where $\nu_{\min}(\rho)\geq 1$ is the minimum eigenvalue of the covariance matrix of $\rho$. 
\end{lemma}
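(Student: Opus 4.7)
The plan is to adapt the classical Bhattacharya--Rao technique for characteristic functions to the quantum setting. The starting identity, obtained from $\chi_{\rho^{\boxplus n}}(z)=\chi_\rho(z/\sqrt n)^n$ and the Gaussianity of $\chi_{\rho_G}$ (which gives $\chi_{\rho_G}(z)=\chi_{\rho_G}(z/\sqrt n)^n$), is
\[
\chi_{\rho^{\boxplus n}}(z)-\chi_{\rho_G}(z)=\chi_\rho(z/\sqrt n)^n-\chi_{\rho_G}(z/\sqrt n)^n.
\]
The substitution $w=z/\sqrt n$ turns the hypothesis $\|z\|\le\epsilon\sqrt n$ into $\|w\|\le\epsilon$, so analysis can be restricted to an arbitrarily small neighborhood of $0$. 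On $B(0,\epsilon)$, inequality~\eqref{eq:moments-derivatives} uniformly bounds all derivatives of $\chi_\rho$ up to order $\kappa$; choosing $\epsilon$ small enough that $|\chi_\rho(w)|\ge 1/2$ there, the holomorphic branch of $\log\chi_\rho(w)$ is well-defined and smooth, with all derivatives up to order $\kappa$ uniformly bounded.

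Next I would Taylor expand $\Delta(w):=\log\chi_\rho(w)-\log\chi_{\rho_G}(w)$ about $w=0$. Because $\rho$ is centered and $\boldsymbol{\gamma}(\rho)=\boldsymbol{\gamma}(\rho_G)$, the two logarithms agree up to order $2$, so Taylor's theorem yields
\[
\Delta(w)=\sum_{3\le|\alpha|+|\beta|\le\kappa-1}c_{\alpha\beta}\,w^\alpha\bar w^\beta+R(w),\qquad |R(w)|\le C\|w\|^\kappa.
\]
Rewriting
\[
\chi_{\rho^{\boxplus n}}(z)-\chi_{\rho_G}(z)=\chi_{\rho_G}(z)\bigl(e^{n\Delta(z/\sqrt n)}-1\bigr),
\]
the polynomial part of $n\Delta(z/\sqrt n)$ produces $\sum_{r=1}^{\kappa-3}n^{-r/2}P_r(z,\bar z)$ with $P_r$ of degree $r+2$, while the remainder contributes $O(n^{-(\kappa-2)/2}\|z\|^\kappa)$. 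Under $\|z\|\le\epsilon\sqrt n$ one has $|n\Delta(z/\sqrt n)|=O(\|z\|^3/\sqrt n)\le O(\epsilon\|z\|^2)$, so expanding $e^{n\Delta}-1$ as a power series and collecting by powers of $n^{-1/2}$ through $r=\kappa-2$ gives $\sum_{r=1}^{\kappa-2}n^{-r/2}\tilde q_r(\|z\|)$ plus a tail of order $e^{O(\epsilon\|z\|^2)}n^{-(\kappa-2)/2}$ which is absorbed into $\tilde q_{\kappa-2}$. Multiplying by $\chi_{\rho_G}(z)$, whose modulus decays like $e^{-\nu_{\min}(\rho)\|z\|^2/2}$, and absorbing the $e^{O(\epsilon\|z\|^2)}$ growth into the Gaussian for $\epsilon$ small enough, produces the stated envelope $e^{-(\nu_{\min}(\rho)-1)\|z\|^2/4}$.

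For derivatives $\partial_z^\alpha\partial_{\bar z}^\beta$ of order $|\alpha|+|\beta|\le\kappa$, I would apply Leibniz's rule to the product $\chi_{\rho_G}(z)\cdot(e^{n\Delta(z/\sqrt n)}-1)$: derivatives of $\chi_{\rho_G}$ give polynomial times the same Gaussian, and derivatives of the exponential factor are computed by Faà di Bruno, each $\partial_z$ hitting $\Delta(z/\sqrt n)$ picking up $1/\sqrt n$ from the chain rule and thereby preserving the $n^{-r/2}$-hierarchy.

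The main technical obstacle is controlling derivatives of the Taylor remainder $R(w)$: differentiating it $j$ times naively requires derivatives of $\log\chi_\rho$ up to order $\kappa+j$, which are not available. I sidestep this by differentiating $\chi_\rho(w)^n$ itself via Faà di Bruno \emph{before} introducing the logarithm, so that only derivatives of $\chi_\rho$ of order $\le\kappa$ appear. Each partition of the multi-index into $k$ blocks contributes a prefactor $n(n-1)\cdots(n-k+1)$, but the matching subtraction from $\chi_{\rho_G}(w)^n$, combined with the identity $a^n-b^n=(a-b)\sum_{j=0}^{n-1}a^{j}b^{n-1-j}$, cancels the dangerous $n^k$ behavior and leaves a clean $n^{-r/2}$ expansion feeding into the polynomials $q_r(\|z\|)$. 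This combinatorial bookkeeping is expected to be the most delicate part of the proof.
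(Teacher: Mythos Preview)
Your approach is essentially correct but takes a different route from the paper. You redo the classical Bhattacharya--Rao argument from scratch for the quantum characteristic function, relying only on the algebraic identity $\chi_{\rho^{\boxplus n}}(z)=\chi_\rho(z/\sqrt n)^n$ and the bound~\eqref{eq:moments-derivatives}. The paper instead reduces to the classical result as a black box: if the Wigner function $W_\rho$ is non-negative, then $\chi_\rho$ is literally the characteristic function of the classical probability density $W_\rho$ on $\mathbb R^{2m}$, and~\cite[Theorem 9.9, Lemma 9.4]{BhRa} apply verbatim. When $W_\rho$ is not non-negative, the paper convolves with the vacuum, $\sigma=\rho\boxplus|0\rangle\langle 0|$, which forces $W_\sigma\ge 0$ by~\cite[Lemma 16]{CRLimitTheorem}; the claim for $\rho$ then follows from the claim for $\sigma$ via the relation $\chi_{\rho^{\boxplus n}}(z)=\chi_{\sigma^{\boxplus n}}(\sqrt 2 z)e^{\|z\|^2/2}$ and the identity $\nu_{\min}(\sigma)=\tfrac12(\nu_{\min}(\rho)+1)$. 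Your route is self-contained and makes the mechanism transparent, but the combinatorics you flag (controlling derivatives of the remainder without exceeding order $\kappa$) is genuinely delicate and is exactly what the classical references already handle; the paper's reduction avoids reproving any of it. One small slip: $\chi_\rho$ is not holomorphic in $w$ (it depends on $\bar w$), so speak of a smooth branch of $\log\chi_\rho$ rather than a holomorphic one.
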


\begin{proof}
This lemma is standard in the theory of CLTs in the classical case and can be proved by similar computations as in the classical case. The whole point is that $\chi_{\rho^{\boxplus n}}(z) = \chi_{\rho}\Big(\frac {z}{\sqrt n}\Big)^{n}$ and  the behavior of the characteristic function under quantum convolution is the same as that in the classical case. In fact, this lemma can be proved even from the existing literature as follows. First, suppose that the Wigner function of $\rho$ is non-negative. This means that we can think of $\chi_{\rho}(z)$ as the characteristic function of a classical distribution, the Wigner function~\cite{CRLimitTheorem}. Moreover, using~\eqref{eq:moments-derivatives} and the fact that the moments of order $\kappa$ of this classical distribution can be written as the derivatives of the characteristic function, moments of order $\kappa$ of this classical distribution are also finite. Then, using~\cite[Theorem 9.9]{BhRa} and~~\cite[Lemma 9.4]{BhRa}, we have

\[
	\Big|\partial_{z}^\alpha  \partial_{\bar z}^\beta \Big(\chi_{\rho^{\boxplus n}}(z) - \chi_{\rho_{G}}(z)\Big)\Big| \leq e^{- \frac{\nu_{\text{min}(\rho)}}{4} \|z \|^2} \sum_{r=1}^{\kappa-2} n^{-\frac r2} q_{r, \rho}(\| z\|),
\]
for some polynomials $q_{r, \rho}(\cdot)$. 

In case the Wigner function of $\rho$ takes negative values, we consider the quantum state $\sigma = \rho \boxplus \ketbra{0}{0}$. By~\cite[Lemma 16]{CRLimitTheorem}, the Wigner function of $\sigma$ is non-negative and by the above argument we have 
\begin{equation}\label{eq:Edge_Vaccuum}
\Big|\partial_{z}^\alpha  \partial_{\bar z}^\beta \Big(\chi_{\sigma^{\boxplus n}}(z) - \chi_{\sigma_{G}}(z)\Big)\Big| \leq e^{- \frac{\nu_{\text{min}}(\sigma)}{4} \|z \|^2} \sum_{r=1}^{\kappa-2} n^{-\frac r2} q_{r, \sigma}(\| z\|).
\end{equation}
We note that by~\eqref{eq:tau-chi-W} and~\eqref{eq:convolution-char-function} we can write
\[
	\chi_{\rho^{\boxplus n}}(z) = \chi_{\sigma^{\boxplus n}}(\sqrt 2 z) e^{\frac 12 \|z\|^2}, \quad \chi_{\rho_G}(z) = \chi_{\sigma_G}(\sqrt 2 z) e^{\frac 12 \|z\|^2}.
\]
Now according to~\eqref{eq:Edge_Vaccuum} and a simple triangle inequality, for some polynomials $q_r(\cdot)$,  we can write
\begin{align*}
	\Big|\partial_{z}^\alpha  \partial_{\bar z}^\beta \Big(\chi_{\rho^{\boxplus n}}(z) - \chi_{\rho_{G}}(z)\Big)\Big| 
	&\leq   e^{- \frac{\nu_{\text{min}}(\sigma) - 1}{2} \|z \|^2} \sum_{r=1}^{\kappa-2} n^{-\frac r2} q_{r}(\| z\|).
\end{align*}
The last point here is that $\nu_{\min}(\sigma) = \frac 12(\nu_{\min}(\rho) + 1)$.

\end{proof}

\subsection{Proof of Theorem~\ref{TraceMainTheoremM}} 

By applying an appropriate Gaussian unitary and using Proposition~\ref{Will}, we assume that $\rho$ is in Williamson's form and the covariance matrix of $\rho$ is diagonal as in~\eqref{eq:Williamson-diag}. This can be done with no loss of generality since by~\eqref{UnitaryAndConv} Gaussian unitaries commute with quantum convolution. 

If $\nu_j=1$ for some $j$, as discussed in Subsection~\ref{subsec:Gaussian-states}, this means that in $\rho$ the $j$-th mode is decoupled from other modes and is in the vacuum state. Moreover, the vacuum state is a centered Gaussian state and remains unchanged under convolution with itself. Thus, in order to prove the theorem, we may ignore the $j$-th mode. As a conclusion, we assume that $\nu_j>1$ for all $j$, or equivalently $\nu_{\min}(\rho)>1$.

Now we apply Lemma~\ref{lem:BoundTChi} on $$T= \rho^{\boxplus n} - \rho_G.$$ 
Then, for a given $\epsilon>0$ we break the integral in~\eqref{BoundTChi} in two parts:
\begin{align}
	 \Big(\frac{6}{\pi^2}\Big)^{m}  \| \rho^{\boxplus n} - \rho_G \|_{1}^{2} & \leq \int \big|\chi_{A^{\dagger} T A}(z)\big|^{2} \dd^{2m} z \nonumber \\
	 & = \int_{\|z\| \leq \epsilon \sqrt{n} } \big|\chi_{A^{\dagger} T A}(z)\big|^{2} \dd^{2m} z + \int_{\|z\| > \epsilon \sqrt{n} } \big|\chi_{A^{\dagger} T A}(z)\big|^{2} \dd^{2m} z. \label{splitInt}
\end{align}
We show that the first term on the right hand side of~\eqref{splitInt} is of the order $\mathcal{O}(\frac{1}{n})$ and the second term is exponentially small. This gives the desired result. 

We start with the first term.  A straightforward application of the BCH formula implies that the displacement operator can be written as~\cite{Serafini}
\[ 
	D_z = \prod_{j=1}^m e^{- \frac 12|z_j|^{2}} e^{z_j \ac_j^{\dagger}} e^{-\bar z_j \ac_j} = \prod_{j=1}^m e^{\frac12 {|z_j|}^{2}} e^{-\bar z_j \ac_j} e^{z_j \ac_j^{\dagger}}.
\]
Computing the derivatives, we obtain
\begin{align}\label{eq:displacement-derivative}
	\partial_{z_j} D_z = \frac{1}{2} \bar z_{j} D_z + D_z \ac_j^{\dagger},
\qquad	\partial_{\bar z_{j}} D_z  = \frac{1}{2} z_j D_z - \ac_j D_z.
\end{align}
Then, using these in the definition of the characteristic function we arrive at
\[
	\partial_{z_j} \chi_{T}(z) = \frac{1}{2} \bar z_{j} \chi_{T}(z) + \chi_{\ac_j^{\dagger}T}(z), 
	\qquad
	\partial_{\bar z_{j}} \chi_{T}(z) = \frac{1}{2} z_j \chi_{T}(z) - \chi_{T \ac_j}(z).
\]
Equivalently, we have
\[
	\chi_{\ac_j^{\dagger}T}(z) = \Big(\partial_{z_j} - \frac{1}{2} \bar z_{j}\Big) \chi_{T}(z),
\qquad
	\chi_{T \ac_j}(z) = \Big(- \partial_{\bar z_{j}} + \frac{1}{2} z_j \Big) \chi_{T}(z).
\]
Therefore, for some constants $c_{\alpha, \alpha', \beta, \beta'}$ we have 
\begin{align*}
	\chi_{A^{\dagger} T A} (z) 
	&=  \prod_{j=1}^{m} \Big(\partial_{z_j} - \frac{1}{2} \bar z_{j}\Big) \Big(- \partial_{\bar z_{j}} + \frac{1}{2} z_j \Big)  \chi_{T}(z) \\
	&= \sum_{\alpha, \alpha', \beta, \beta' } c_{\alpha, \alpha', \beta, \beta'} z^{\alpha'}\bar z^{\beta'}\partial_z^\alpha\partial_{\bar z}^\beta \chi_{T}(z),
\end{align*}
where the sum runs over $\alpha, \alpha', \beta, \beta'\in \{0,1\}^m$ such that $|\alpha|+|\alpha'|+|\beta|+|\beta'|<2m$.
Thus, there exists a constant $C>0$ such that  
\begin{equation} \label{ChiDerivative}
	\big|\chi_{A^{\dagger} T A} (z)\big|^{2} \leq C \sum_{\alpha, \alpha', \beta, \beta'} \prod_{j=1}^{m} |z_{j}|^{(\alpha'_j+\beta'_j)} \big|\partial_z^\alpha\partial_{\bar z}^\beta \chi_{T}(z) \big|^{2}.
\end{equation}
Next, applying Lemma~\ref{BRMethod} with $\kappa=\max\{3, 2m\}$ for each term on the right hand side, we find that there is some $\epsilon>0$ such that for some polynomials $q_r^\prime(\cdot)$ and some constants $\xi^\prime>0$, $C'>0$, the following inequality holds for all $\|z\| \leq \epsilon \sqrt{n}$:
\[
	\big|\chi_{A^{\dagger} T A} (z)\big|^{2} \leq C' \sum_{r=2}^{2(\kappa-2)} n^{-\frac r2} q_{r}^\prime(\| z\|) e^{- \xi^\prime \|z \|^2}.
\]
Note that here we use $\nu_{\min}(\rho)>1$.
Then, taking the integral of both sides and using
\[
	\int q_{r}^\prime(\| z\|) e^{- \xi^\prime \|z \|^2}  \\d^{2m} z < +\infty,
\]
imply that 
$$\int_{\|z\| \leq \epsilon \sqrt{n} } \big|\chi_{A^{\dagger} T A}(z)\big|^{2} \dd^{2m} z= \mathcal{O}\Big(\frac{1}{n}\Big).$$

We now handle the second term in~\eqref{splitInt}. We have
$$	\chi_{A^{\dagger} T A}(z) = \chi_{A^{\dagger} \rho^{\boxplus n} A}(z) - \chi_{A^{\dagger} \rho_{G} A}(z).$$
Inequality $(x+y)^2\leq 2(x^2+y^2)$ yields
\begin{align*}
	\int_{\|z\| > \epsilon \sqrt{n} } \big|\chi_{A^{\dagger} T A}(z)\big|^{2} \dd^{2m} z \leq 2 \bigg{(} \int_{\|z\| > \epsilon \sqrt{n} } \big|\chi_{A^{\dagger} \rho^{\boxplus n} A}(z)\big|^{2} \dd^{2m} z  + \int_{\|z\| > \epsilon \sqrt{n} } \big|\chi_{A^{\dagger} \rho_{G} A}(z)\big|^{2} \dd^{2m} z \bigg{)}.
\end{align*}
We will show that the first term on the right-hand side is exponentially small. Then, it is implied that the second term is also exponentially small because we have $\rho_{G} = \rho_{G}^{\boxplus n}$. 
To this end, we use 
\begin{align}\label{eq:tr-A-rho-A-D} 
\tr\Big( A^{\dagger}  \rho^{\boxplus n} A D_z\Big) = \tr\Big(\widetilde A^\dagger \rho^{\otimes n} \widetilde A D_{\frac {z}{\sqrt n}}^{\otimes n}\Big),
\end{align}
where 
$$\widetilde A = \prod_{j=1}^m \frac{\ac_{j,1}+ \cdots +\ac_{j, n} }{\sqrt n}.$$
This equation can be proven by computing the derivatives of $\chi_{\rho^{\boxplus n}}(z) = \tr\big(\rho^{\boxplus n}D_{z}\big) = \tr\Big(\rho^{\otimes n} D_{\frac{z}{\sqrt n}}^{\otimes n}\Big)$ in terms of those of $D_{z}$ and $D_{\frac{z}{\sqrt n}}^{\otimes n}$ as done above; see also Proposition~\ref{Basictilde} below. Then expanding the right hand side of~\eqref{eq:tr-A-rho-A-D}, we find that
$$\tr\Big( A^{\dagger}  \rho^{\boxplus n} A D_z\Big) =  \frac{1}{n^m} \tr\big(\rho D_{\frac{z}{\sqrt n}}\big)^{n-2m}P(z)= \frac{1}{n^m} \chi_\rho\big(z/\sqrt n\big)^{n-2m} P(z),$$
where $P(z)$ equals
$$P(z) = \sum_{1 \leq k_1,\ell_1, \dots, k_m, \ell_m \leq n}   \tr\Big(\ac_{1, k_1}^\dagger \cdots \ac_{m, k_m}^\dagger \rho^{\otimes 2m} \ac_{1, \ell_1} \cdots \ac_{m, \ell_m} D_{\frac{z}{\sqrt z}}^{\otimes 2m} \Big).$$
Now, using~\eqref{SupChar}, there is $\delta>0$ such that $\big|\chi_\rho\big(z/\sqrt n\big)\big|\leq 1-\delta$ for all $\|z\|\geq \epsilon\sqrt n$. Putting these together and applying the Plancherel identity~\eqref{PlancherelEq} yield
\begin{align*}
 \int_{\|z\| > \epsilon \sqrt{n} } &\big| \chi_{A^\dagger \rho^{\boxplus n} A} (z)\big|^{2} \dd^{2m} z  \\
 &\leq  \frac{1}{n^{2m}} (1-\delta)^{2(n-2m)}  \int_{\|z\| > \epsilon \sqrt{n} }   \big| P(z)\big|^2 \dd^{2m} z\\
&\leq \frac{1}{n^{2m}} (1-\delta)^{2(n-2m)}  \int   \big| P(z)\big|^2 \dd^{2m} z\\
&\leq (1-\delta)^{2(n-2m)}  \sum_{k_1,\ell_1, \dots, k_m, \ell_m}   \int  \Big| \tr\Big(\ac_{1, k_1}^\dagger \cdots \ac_{m, k_m}^\dagger \rho^{\otimes 2m} \ac_{1, \ell_1} \cdots \ac_{m, \ell_m} D_{\frac{z}{\sqrt z}}^{\otimes 2m} \Big) \Big|^2 \dd^{2m} z\\
&\leq (1-\delta)^{2(n-2m)}  \sum_{k_1,\ell_1, \dots, k_m, \ell_m}     \Big\| \ac_{1, k_1}^\dagger \cdots \ac_{m, k_m}^\dagger \rho^{\otimes 2m} \ac_{1, \ell_1} \cdots \ac_{m, \ell_m} \Big\|_2^2.
\end{align*}
Next, we observe that 
\begin{align*}
 \Big\| \ac_{1, k_1}^\dagger \cdots \ac_{m, k_m}^\dagger \rho^{\otimes 2m} \ac_{1, \ell_1} \cdots \ac_{m, \ell_m} \Big\|_2^2 & \leq  \Big\| \ac_{1, k_1}^\dagger \cdots \ac_{m, k_m}^\dagger \rho^{\otimes 2m} \ac_{1, \ell_1} \cdots \ac_{m, \ell_m} \Big\|_1^2 <+\infty,
\end{align*}
since by assumption $\rho$ has a finite moment of order $\kappa=\max \{3, 2m\}$. Using this in the previous inequality, and observing that there are $n^{2m}$ terms  on the right hand side, we find that the left hand side is exponentially small. 
This concludes the proof of Theorem~\ref{TraceMainTheoremM}.

\subsection{Optimally of the convergence rate}\label{subsec:opt-conv-rate}

In this subsection we show that the convergence rate proven in Theorem~\ref{TraceMainTheoremM} is tight.
Let  $\rho=\ketbra{v}{v}$ where
$$\ket v= \frac{1}{\sqrt 2} \big(\ket 0+ \ket 3\big),$$
is a superposition of Fock states $\ket 0, \ket 3$. 
This example is already considered in~\cite{CRLimitTheorem} where it is shown that 
$$\chi_\rho(z) = e^{-\frac 12 |z|^2} \Big(1- \frac 32|z|^2 + \frac{1}{2\sqrt 6}(z^3- \bar z^3) + \frac 34|z|^4 - \frac{1}{12} |z|^6   \Big).$$
Therefore, for any $z\in \mathbb C$ we have
\begin{align*}
\chi_{\rho^{\boxplus n}}(z) & = \chi_\rho\bigg(\frac{z}{\sqrt n}\bigg)^n\nonumber\\
& = e^{-\frac 12 |z|^2} \bigg[1- \frac{3}{2n}|z|^2 + \frac{1}{2\sqrt 6 n^{3/2}}(z^3- \bar z^3) + \frac{3}{4 n^2}|z|^4 - \frac{1}{12 n^3}|z|^6   \bigg]^n\nonumber\\
& = e^{-\frac 12 |z|^2} \bigg[ \bigg(1- \frac{3}{2n}|z|^2\bigg)^n +  n \frac{1}{2\sqrt 6 n^{3/2}}(z^3- \bar z^3)\bigg(1- \frac{3}{2n}|z|^2\bigg)^{n-1} +  \mathcal O\Big(\frac{1}{n}\Big) \bigg]\nonumber\\
& = e^{-2 |z|^2} \bigg[ 1+ \frac{1}{2\sqrt 6 n^{1/2}}(z^3- \bar z^3) +  \mathcal O\Big(\frac{1}{n}\Big) \bigg]. 
\end{align*} 
As a result, $\rho^{\boxplus n}$ convergence to a Gaussian state (in fact, a thermal state) with characteristic function $\chi_{\rho_G}(z) = e^{-2|z|^2}$.  We also conclude that 
\begin{align}\label{eq:opt-conv-rate}
\big|\chi_{\rho^{\boxplus n}}(z) - \chi_{\rho_G}(z)\big|\geq \frac{1}{100 \sqrt n}  |z^3- \bar z^3|,
\end{align}
for any $|z|\leq 1$ and sufficiently large $n$. On the other hand, by the definition of the characteristic function and using the fact that displacement operators are unitary, for sufficiently large $n$ we have 
\begin{align*}
\|\rho^{\boxplus n} - \rho_G\|_1 & = \sup_{\|X\|\leq 1} \big|\tr\big(\rho^{\boxplus n} X\big)  - \tr(\rho_G X) \big|\\
& \geq  \sup_{z} \big|\tr\big(\rho^{\boxplus n} D_z\big)  - \tr(\rho_G D_z) \big|\\
& =  \sup_{z} \big|\chi_{\rho^{\boxplus n}} (z)  - \chi_{\rho_G}(z) \big|\\
& \geq \frac{1}{50\sqrt n}.
\end{align*}
Here, the last inequality follows from~\eqref{eq:opt-conv-rate} for the choice of $z=i$.
We note that all the moments of $\rho$ are finite. Thus, the convergence rate of $\mathcal O(1/\sqrt n)$ proven in Theorem~\ref{TraceMainTheoremM} is indeed tight even assuming that all the moments of $\rho$ are finite.


\section{Convergence rate for relative entropy} \label{SimplifiedFisherInformation}

In this section we state our main result regarding the convergence rate of the entropic version of quantum CLT.

\begin{theorem} \label{MainTheorem}
	Let $\rho$ be a centered $m$-mode quantum state.  Suppose that $\rho$ satisfies the quantum Poincar\'e inequality as in Definition~\ref{def:Poincare-ineq}. Let $\rho_{G}$ be the Gaussification of $\rho$ as in Definition~\ref{def:Gaussification}. Then, we have
	\begin{equation}
		D(\rho^{\boxplus n} \| \rho_{G}) = \mathcal{O}\Big(\frac{1}{n}\Big).
	\end{equation}
\end{theorem}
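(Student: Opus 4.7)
The plan is to follow the high-level roadmap already sketched in the introduction: combine a logarithmic-Sobolev inequality for the quantum Ornstein-Uhlenbeck semigroup with a quantum analog of the Johnson--Barron successive-projection argument. First, I would fix a definition of the Symmetric Logarithmic Derivative (SLD) Fisher information $I(\rho)$ adapted to phase-space translations: for each mode $j$, let $L_{j,\rho}$ be the SLD operator characterized by $[\ac_j,\rho]=\tfrac12(\rho L_{j,\rho}+L_{j,\rho}\rho)$ (and its Hermitian conjugate for $\ac_j^\dagger$), and set $I(\rho):=\sum_j\bigl(\|L_{j,\rho}\|_{2,\rho}^2+\|L_{j,\rho}^\dagger\|_{2,\rho}^2\bigr)$. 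Because the gradient $\partial\rho=\bigl([\ac_j,\rho],[\ac_j^\dagger,\rho]\bigr)_j$ appears naturally in this definition, it meshes cleanly both with the Poincaré form of Definition~\ref{def:Poincare-ineq} and with the convolution identity~\eqref{DeriveConvolution}. The link with relative entropy is supplied by the log-Sobolev inequality of~\cite{Beigi-RahimiKeshari2023} for the quantum Ornstein-Uhlenbeck semigroup with stationary state $\rho_G$: assuming $\rho$ has the same first and second moments as $\rho_G$, it yields an estimate of the form
$$D(\rho\|\rho_G)\leq C\,J(\rho\|\rho_G),\qquad J(\rho\|\rho_G):=I(\rho)-I(\rho_G),$$
(the cross term vanishes because the first two moments of $\rho$ match those of $\rho_G$), reducing the theorem to proving $J(\rho^{\boxplus n}\|\rho_G)=\mathcal O(1/n)$.

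Next I would establish this Fisher-information contraction by adapting the argument of~\cite{JB} to the bosonic setting. The key observation is precisely equation~\eqref{DeriveConvolution}: iterating it symmetrically over the $n$ identical factors of $\rho^{\boxplus n}$ expresses $\partial\rho^{\boxplus n}$ as a ``symmetrized average'' of $n$ one-particle gradients, exactly mirroring the classical fact that the score of a normalized sum is the conditional expectation of the average of the individual scores. The quantum successive-projection lemma, prepared in Section~\ref{SuccessiveProjection}, should then play the role of the classical orthogonal decomposition: in the inner product $\langle\cdot,\cdot\rangle_{\rho^{\boxplus n}}$ one projects the centered score $L_{j,\rho^{\boxplus n}}-L_{j,\rho_G}$ onto the subspace of ``one-particle'' observables and controls the orthogonal residual by a factor of order $(1-\lambda_\rho)$ per projection step, where $\lambda_\rho$ is the Poincaré constant of $\rho$. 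Summing the resulting geometric decay and using that $\rho_G$ is a fixed point of $\boxplus$ (so $I(\rho_G)=I(\rho_G^{\boxplus n})$) produces the $1/n$ rate. The persistence of the Poincaré hypothesis along this induction is guaranteed by $\lambda_{\rho\boxplus_\eta\sigma}\geq\min\{\lambda_\rho,\lambda_\sigma\}$, noted after Definition~\ref{def:Poincare-ineq} and proved in Appendix~\ref{app:PoincareInequality}.

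The hard part will be the successive-projection step itself. Unlike in the classical case, where conditional expectations onto sub-$\sigma$-algebras are genuine $L^2$-orthogonal projections of norm one, in the quantum case one must build analogous projections on $\mathbf L_2(\rho^{\boxplus n})$ that are simultaneously (i) compatible with the non-tracial SLD inner product~\eqref{eq:symmetric-inner-prod}, (ii) well-behaved under the convolution-gradient identity~\eqref{DeriveConvolution}, and (iii) endowed with a spectral gap controlled by $\lambda_\rho$. A second, more conceptual subtlety is that the SLD $L_{j,\rho}$ is linear in $\rho$ rather than logarithmic, so its relation to $\log\rho$ is indirect; one must verify that the log-Sobolev inequality of~\cite{Beigi-RahimiKeshari2023} can indeed be phrased directly as a bound on $D(\rho\|\rho_G)$ by $I(\rho)-I(\rho_G)$, rather than by a Kubo--Mori--Bogoliubov Fisher information whose quantum de Bruijn identity would otherwise be the classical route but which is, as emphasized in the introduction, much harder to manipulate. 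Once these two technical hurdles are cleared, the chain $D(\rho^{\boxplus n}\|\rho_G)\lesssim J(\rho^{\boxplus n}\|\rho_G)\lesssim 1/n$ delivers Theorem~\ref{MainTheorem}.
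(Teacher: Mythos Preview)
Your proposal is correct and follows essentially the same approach as the paper: reduce $D(\rho^{\boxplus n}\|\rho_G)$ to the SLD Fisher information distance $J(\rho^{\boxplus n})$ via the log-Sobolev inequality of~\cite{Beigi-RahimiKeshari2023} (the paper packages the passage from log-Sobolev to $I(\rho)-I(\rho_G)$ as Lemma~\ref{lem:LSI-modified-Fisher}), and then bound $J(\rho^{\boxplus n})=\mathcal O(1/n)$ by a quantum version of the Johnson--Barron successive-projection argument built on~\eqref{DeriveConvolution} and the Poincar\'e hypothesis. One small correction to your picture: the paper does not obtain a ``geometric decay'' with a factor $(1-\lambda_\rho)$ per step; rather, summing the telescoping projection estimates over $\ell=1,\dots,n$ directly yields $J(\rho^{\boxplus n})\leq \big(1+\tfrac{\lambda_\rho}{4I(\rho)}(n-1)\big)^{-1}J(\rho)$, which already gives the $1/n$ rate without iteration.
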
	

We note that by Pinsker's inequality this theorem implies
	\begin{equation}
		{\| \rho^{\boxplus n} - \rho_{G}\|}_{1} = \mathcal{O}(\frac{1}{\sqrt{n}}),
	\end{equation}
that is the same bound as the bound of Theorem~\ref{TraceMainTheoremM}, but with a much stronger assumption. Indeed, in Theorem~\ref{TraceMainTheoremM} we assume the finiteness of moments up to order $\max\{3, 2m\}$, yet it can be verified that $\lambda_\rho>0$ implies finiteness of all moments; see Appendix~\ref{app:PoincareInequality}.

Our proof of this theorem follows similar lines as in the proof of Johnson and Barron~\cite{JB} in the classical case. The idea is to express relative entropy in terms of Fisher information, and then establish a bound on the convergence rate of Fisher information. In fact, by \emph{de Bruijn's identity,} for independent random variables $X, Z$ where $Z$ is Gaussian with the same first and second moments as those of $X$,  we have
\[
	D(X \| Z) = \int_{0}^{1} \frac{J(\sqrt{t} X + \sqrt{1-t} Z)}{2t} \dd t.
\]
Here, $J(\cdot)$ denotes the \emph{Fisher information distance} that is a scaled and shifted version of the Fisher information $I(\cdot)$; for a random variable $X$ with density $f$, the Fisher information is defined by $I(X) = \int \frac{f'^2}{f}$ and the Fisher information distance equals $J(X) = \text{Var}(X) I(X) -1$. The above identity is the starting point of~\cite{JB} as well as \cite{ABBN2} based  on which the proof of the convergence rate of $\mathcal{O}(\frac{1}{n})$ for relative entropy is reduced to that of Fisher information distance. 

Unlike the classical case, there are various forms of quantum Fisher information~\cite{PETZ1996, LesniewskiRuskai99}. One of the most relevant such a quantum Fisher information is
\begin{equation} \label{HardFisherInformation}
	I_{\text{KMB}}(\rho) = \sum_{j =1}^{m} \tr\big(\rho \big[\ac_{j}^\dagger, [\ac_{j}, \log \rho]\big]\big),
\end{equation}
that is called the Kubo-Mori-Bogoliubov Fisher information~\cite{Hayashi12002Fisher}. This quantum Fisher information 
satisfies the quantum de Bruijn's identity~\cite[Theorem V.1]{KS}, so is a natural candidate for generalizing the approaches of~\cite{JB, ABBN2}. Nevertheless,~\eqref{HardFisherInformation} is hard to work with and does not satisfy some of the desirable properties of the classical Fisher information. Thus, we proceed differently.   

The Fisher information of a random variable $X$ with density function $f$ can be thought of as the $2$-norm of the \emph{score function} $\frac{f'}{f}$ under the same reference measure: $I(X)=\int \frac{f'^2}{f} = \int f \big(\frac{f'}{f}\big)^2$. The same analogy holds for the quantum Fisher information~\eqref{HardFisherInformation}. Indeed, as pointed out, e.g., in~\cite{CarlenMass2}, it holds that
\begin{align}\label{eq:q-score-functioin-hard}
	 [\ac_j, \log \rho] = \PiInner{\rho}{\widetilde \log}([\ac_j, \rho]) = \int_{0}^{\infty} \frac{1}{r+\rho} [\ac_j, \rho] \frac{1}{r+ \rho} \dd r,
\end{align}
where $\widetilde \log(x, y) = \frac{\log x-\log y}{x-y} = \int_0^{\infty} \frac{1}{(r+x)(r+y)}\dd r$.
Then, a simple algebra verifies that
\begin{align*}
I_{\text{KMB}}(\rho) & = \sum_{j=1}^m   \int_{0}^{\infty}  \tr\Big(  [\ac_j, \rho]^\dagger \frac{1}{r+\rho} [\ac_j, \rho] \frac{1}{r+ \rho}  \Big) \dd r.
\end{align*}
As mentioned in Subsection~\ref{subsec:Poincare}, we may think of $[\ac_j, \rho], [\ac_j^\dagger, \rho]$ as the derivatives of $\rho$ in the $j$-th direction. On the other hand, $\int_{0}^{\infty} \frac{1}{(r+\rho)^2}\dd r = \frac{1}{\rho}$. Thus, the above equation resembles the definition of Fisher information $I(X)=\int \frac{f'^2}{f}$ in the classical case. Even more, we may take~\eqref{eq:q-score-functioin-hard} as the \emph{quantum score operator} in which case the quantum Fisher information~\eqref{HardFisherInformation} can be expressed in terms of the $2$-norm of these operators with respect to the Kubo-Mori-Bogoliubov inner product defined in~\eqref{eq:BKM-inner-prod}.\footnote{The point is that $\int_0^1 x^{1-s}y^s \dd s = \frac{x-y}{\log x-\log y} = \frac{1}{\widetilde \log(x, y)}$.}

\subsection{SLD quantum Fisher information}
Recall that in Subsection~\ref{subsec:Poincare} we considered the SLD inner product  
\begin{equation*} 
	{\langle X, Y\rangle}_{ \rho} = \frac{1}{2}\Big(\tr\big(\rho X^{\dagger} Y\big) + \tr\big(X^{\dagger} \rho Y\big)\Big).
\end{equation*}
Using the functional calculus discussed in Subsection~\ref{subsec:pi-representation}, this inner product can be written as
\[
	{\langle X, Y\rangle}_{ \rho} = \tr\big(\PiInner{\rho}{\psi}(X^{\dagger}) Y\big),
\]
where hereafter the function $\psi$ is fixed to be
\[
	\psi(x,y) = \frac{x+y}{2}.
\]
Now to define the quantum score operator corresponding to this inner product, it is more convenient to assume that $\rho$ is in Williamson's form according to Proposition~\ref{Will}. In this case, $\rho_G=\tau_1\otimes \cdots \otimes \tau_m$ is a tensor product of single-mode thermal states as in~\eqref{eq:thermal}. For such a state $\rho$ we define the \emph{SLD score operator} in the $j$-th direction by
\begin{equation} \label{NewMSFunction}
	S_{\rho, j} := \PiInner{\rho}{\phi}\big([\ac_j, \rho]\big),
\end{equation}
where
\[
	\phi(x,y) := \frac{1}{\psi(x,y)}= \frac{2}{x+y}. 
\]
Here, we think of the superoperator $\PiInner{\rho}{\phi}$ as a non-commutative way of division by $\rho$. The point is that $\PiInner{\rho}{\psi}$ is a non-commutative multiplication by $\rho$ and since $\phi=1/\psi$, the superoperator $\PiInner{\rho}{\phi}$ is a non-commutative division by $\rho$. Thus, we can think of $S_{\rho, j}=\PiInner{\rho}{\phi}\big([\ac_j, \rho]\big)$ as the derivative of $\rho$ in the $j$-th direction divided by $\rho$, which resembles the form of score function $f'/f$ in the classical case.

We define the \emph{SLD Fisher information} $ I(\rho) = I_{\text{SLD}}(\rho)$ by
$$I(\rho) = \sum_{j=1}^m I_j(\rho),$$
where 
\begin{align}
	 I_j(\rho): = \| S_{\rho, j} \|_{2, \rho}^{2}
	&= \tr\big(\PiInner{\rho}{\psi}(S_{\rho, j}^{\dagger}) S_{\rho, j}\big) \nonumber \\
	&= \tr\Big(\PiInner{\rho}{\psi}(\PiInner{\rho}{\phi}({[\ac_j, \rho]}^{\dagger})) \PiInner{\rho}{\phi}([\ac_j, \rho])\Big) \nonumber \\
	&= \tr\big(\PiInner{\rho}{\psi \phi}({[\ac_j, \rho]}^{\dagger}) \PiInner{\rho}{\phi}([\ac_j, \rho])\big) \nonumber \\
	&= \tr\big({[\ac_j, \rho]}^{\dagger} \PiInner{\rho}{\phi}([\ac_j, \rho])\big). \label{eq:I_j-phi-com}
\end{align}
We emphasize that we define the SLD Fisher information for an arbitrary state, by first bringing it to Williamson's form. Since Williamson's form is unique up to some passive Gaussian unitaries~\cite{Serafini}, and such unitaries leave $\sum_j I_j(\rho)$ invariant, $I(\rho)$ is well-defined at least when $\rho$ is faithful. To clarify the validity of the definition in case $\rho$ is not faithful, we note that
\begin{align}
I_j(\rho)  = \tr\Big(\rho \bfa_j^\dagger \PiInner{\rho}{\phi}(\bfa_j\rho) - \rho \bfa_j^\dagger \PiInner{\rho}{\phi}(\rho \bfa_j) - \bfa_j^\dagger \rho \PiInner{\rho}{\phi}(\bfa_j\rho) + \bfa_j^\dagger \rho \PiInner{\rho}{\phi}(\rho\bfa_j)   \Big)  
 = \tr\Big(\bfa_j^\dagger \PiInner{\rho}{\zeta}(\bfa_j)\Big) \label{eq:Fisher_Zeta},
\end{align}
where 
$$\zeta(x, y)= (y^2-2xy+x^2)\phi(x, y) = 2\frac{(x-y)^2}{x+y} = 2(x+y) - 8 \frac{xy}{x+y}.$$
Now the point is that $\zeta(x, y)$ is continuous even at $(x, y)=(0, 0)$ with $\zeta(0,0)=0$. Thus, $\PiInner{\rho}{\zeta}(\cdot)$ is well-defined even if $\rho$ has a non-trivial kernel.

Next, as the name suggests, we define the \emph{SLD Fisher information distance} $J(\rho)$ as the difference of $I(\rho)$  and  $I(\rho_G)$ where $\rho_G$ is the Gaussification of $\rho$. That is,
\begin{align*}
J(\rho) & = \sum_{j=1}^m   J_j(\rho),
\end{align*}
where
$$J_j(\rho) = I_j(\rho) - I_j(\rho_G).$$
It is shown in Appendix~\ref{ModifiedGaussian} that for $\rho_G=\tau_1\otimes \cdots \otimes \tau_m$ where $\tau_j$ is a single-mode thermal state satisfying 
$$\mu_j:=\|\ac_j\|_{2, \tau_j}^2 = \|\ac_j\|_{2, \rho_G}^2 = \|\ac_j\|_{2, \rho}^2 = \frac{1}{2}\nu_j = \frac{1+ e^{-\beta_j}}{2(1- e^{-\beta_j})},$$
we have $S_{\rho_G, j} = -\frac{1}{\mu_j} \ac_j$  which implies  $I(\tau_j) = \frac{1}{\mu_j}$.
Then, for a state $\rho$ in Williamson's form we have
$$J_j(\rho) =  I_j(\rho) - \frac{1}{\mu_j}.$$
This equation reveals that the SLD Fisher information distance resembles its classical counterpart (up to a factor of $\mu_j$). 

We can derive yet another equivalent expression for the Fisher information distance. To this end, we first compute
\begin{align}\label{eq:S-a-inner-prod}
\langle S_{\rho, j}, \ac_j\rangle_\rho  = \tr\big(  [\ac_j, \rho]^\dagger \ac_j \big) = \tr\big(   \rho [\ac_j^\dagger , \ac_j] \big) =-1.
\end{align}
Therefore, 
\begin{align}\label{eq:J-I-Will}
J_j(\rho)=I_j(\rho) - \frac{1}{\mu_j} = \big\| S_{\rho, j}\big\|_{2, \rho}^2 + \frac{1}{\mu_j^2} \big\|\ac_j  \big\|_{2, \rho}^{2} - \frac{2}{\mu_j} = \Big\| S_{\rho, j} + \frac{1}{\mu_j} \ac_j  \Big\|_{2, \rho}^{2}.
\end{align}
where as before $\mu_j = \|\ac_j\|_{2, \rho}^2 = \frac 12\big(\tr(\rho \ac_j^\dagger \ac_j) + \tr(\rho \ac_j\ac_j^\dagger)\big) = \tr(\rho\ac_j^\dagger \ac_j) + \frac 12>0$.

In the following lemma we establish an upper bound on the SLD Fisher information, even if $\rho$ is not faithful.

\begin{lemma}\label{lem:Fisher-inf-bounds}
Let $\rho$ be an $m$-mode state in Williamson's form. Then,
for any $1\leq j\leq m$ we have
\begin{align} \label{BoundSimplifiedFisherInformation}
\frac{1}{\mu_j} \leq I_j(\rho) \leq 4\mu_j.
\end{align}
\end{lemma}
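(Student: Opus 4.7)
The plan is to prove the two bounds separately, both rooted in the functional calculus expression $I_j(\rho) = \tr\big(\bfa_j^\dagger \PiInner{\rho}{\zeta}(\bfa_j)\big)$ from~\eqref{eq:Fisher_Zeta}, where $\zeta(x,y) = 2(x-y)^2/(x+y)$ is extended by continuity to $\zeta(0,0)=0$.

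For the upper bound $I_j(\rho) \leq 4\mu_j$, I would apply Proposition~\ref{prop:pi-representation}. The key pointwise comparison on the positive quadrant $x, y \geq 0$ is
$$\zeta(x,y) = 2(x+y) - \frac{8xy}{x+y} \leq 2(x+y) = 4\psi(x,y),$$
which holds because $xy \geq 0$ (and both sides vanish at the origin). Applying Proposition~\ref{prop:pi-representation} with $X = \bfa_j$, $g = 4\psi$, and $h = \zeta$ then yields
$$I_j(\rho) \leq \tr\big(\bfa_j^\dagger \PiInner{\rho}{4\psi}(\bfa_j)\big) = 4\langle \bfa_j, \bfa_j\rangle_\rho = 4\mu_j.$$

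For the lower bound $I_j(\rho) \geq 1/\mu_j$, I would invoke Cauchy--Schwarz in the SLD inner product. By~\eqref{eq:S-a-inner-prod}, $\langle S_{\rho,j}, \bfa_j\rangle_\rho = -1$, while by definition $\|S_{\rho,j}\|_{2,\rho}^2 = I_j(\rho)$ and $\|\bfa_j\|_{2,\rho}^2 = \mu_j$. Cauchy--Schwarz then gives
$$1 = \big|\langle S_{\rho,j}, \bfa_j\rangle_\rho\big|^2 \leq \|S_{\rho,j}\|_{2,\rho}^2 \cdot \|\bfa_j\|_{2,\rho}^2 = I_j(\rho)\,\mu_j.$$

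The main subtle point is the non-faithful case: when $\nu_j = 1$ for some mode, the $j$-th mode factors as the vacuum and $S_{\rho,j}$ is not literally defined, since $\phi(x,y) = 2/(x+y)$ is singular at the origin. The upper-bound argument is unaffected because $\PiInner{\rho}{\zeta}$ uses only the continuous extension of $\zeta$. For the lower bound, one can bypass $S_{\rho,j}$ entirely by applying Hilbert--Schmidt Cauchy--Schwarz to $\PiInner{\rho}{\sqrt\zeta}(\bfa_j)$ and $\PiInner{\rho}{\sqrt\psi}(\bfa_j)$; since $\sqrt\zeta\cdot\sqrt\psi = |x-y|$, this produces $\big|\tr(\bfa_j^\dagger \PiInner{\rho}{|x-y|}(\bfa_j))\big|^2 \leq I_j(\rho)\,\mu_j$, and the triangle inequality applied in the spectral representation of $\rho$, together with $\tr(\bfa_j^\dagger \PiInner{\rho}{x-y}(\bfa_j)) = \tr(\rho[\bfa_j, \bfa_j^\dagger]) = 1$, forces $\tr(\bfa_j^\dagger \PiInner{\rho}{|x-y|}(\bfa_j)) \geq 1$, closing the bound.
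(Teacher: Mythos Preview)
Your proof is correct and essentially matches the paper's approach. The upper bound via $\zeta(x,y)\le 2(x+y)$ and Proposition~\ref{prop:pi-representation} is identical; for the lower bound the paper simply cites $J_j(\rho)=\|S_{\rho,j}+\tfrac{1}{\mu_j}\ac_j\|_{2,\rho}^2\ge 0$ from~\eqref{eq:J-I-Will}, which is exactly your Cauchy--Schwarz inequality rewritten. Your additional care with the non-faithful case via $\sqrt{\zeta}\sqrt{\psi}=|x-y|$ is a clean way to avoid invoking $S_{\rho,j}$ directly, and goes slightly beyond what the paper spells out.
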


\begin{proof}
The first inequality is implied by using $J_j(\rho)\geq 0$ in~\eqref{eq:J-I-Will}. This inequality can be understood as a Cram\'er-Rao bound for the SLD Fisher information~\cite{Suzuki2016}. For the proof of the second inequality, we note that in~\eqref{eq:Fisher_Zeta}, $\zeta(x, y)\leq 2(x+y)$ for $x, y\geq 0$. Thus, by Proposition~\ref{prop:pi-representation} we have
\begin{align*}
	I_j(\rho) \leq \tr\Big(\bfa_j^\dagger \PiInner{\rho}{2(x+y)}(\bfa_j)\Big) = 2\tr\big(\ac_j^\dagger \rho\ac_j \big)+ 2\tr\big(\ac_j^\dagger \ac_j \rho\big) = 4 \|\ac_j\|_{2, \rho}^2 = 4\mu_j.
\end{align*}

\end{proof}

As mentioned above, if $\rho$ is Gaussian, then $J(\rho)=0$. Thus, the SLD Fisher information distance can really be thought of as a measure of distance to Gaussian states. This is a justification for using $J(\rho)$ to prove our entropic CLT. 

\begin{theorem} \label{pseudoMainTheorem}
	Let $\rho$ be a centered $m$-mode quantum state that satisfies the quantum Poincar\'e inequality with constant $\lambda_{\rho}>0$. Then, 
	\begin{align*}
		J\big(\rho^{\boxplus n}\big)=\mathcal O\bigg(    \frac {I(\rho)^2}{\lambda_\rho n}\bigg).
	\end{align*}
\end{theorem}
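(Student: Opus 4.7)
The plan is to adapt the approach of Johnson and Barron~\cite{JB} to the quantum SLD framework, relying critically on the quantum successive projection lemma that will have been established in Section~\ref{SuccessiveProjection}. The goal is to bound $J(\rho^{\boxplus n}) = \sum_j \|S_{\rho^{\boxplus n}, j} + \mu_j^{-1}\ac_j\|_{2, \rho^{\boxplus n}}^2$ by $\mathcal{O}\big(I(\rho)^2/(\lambda_\rho n)\big)$. As a preliminary reduction I would assume, without loss of generality, that $\rho$ is already in Williamson's form, so that $\rho_G = \tau_1 \otimes \cdots \otimes \tau_m$ is a tensor product of thermal states. This is legitimate because quantum convolution commutes with Gaussian unitaries by~\eqref{UnitaryAndConv}, and because the Poincar\'e constant as well as the SLD Fisher information distance are invariant under Gaussian conjugation, as recorded in Appendix~\ref{app:PoincareInequality}.

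The central step is to express the score operator of the convolution in terms of the individual score operators on each of the $n$ copies. Using~\eqref{DeriveConvolution} one has $[\ac_j, \rho^{\boxplus n}] = \sqrt n \, [\ac_{j,i}, \rho] \boxplus \rho^{\boxplus(n-1)}$ for every $1 \leq i \leq n$. Symmetrizing over $i$ and then inverting the non-commutative multiplication by $\rho^{\boxplus n}$ via the superoperator $\PiInner{\rho^{\boxplus n}}{\phi}$, I would realize $S_{\rho^{\boxplus n}, j}$ as a quantum conditional expectation of $\frac{1}{\sqrt n}\sum_{i=1}^n S_{\rho,j,i}$, the symmetric average of the single-copy scores viewed after the iterated beam-splitter unitary is applied. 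The quantum successive projection lemma then rewrites this conditional expectation as a telescoping sum of $n$ pieces, each of which isolates the deviation $S_{\rho,j} + \mu_j^{-1}\ac_j$ on one copy convolved with thermal states on the remaining ones.

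Applying the Poincar\'e inequality on $\rho$ with constant $\lambda_\rho$ to each of these $n$ pieces, and using the fact (established in Appendix~\ref{app:PoincareInequality}) that the Poincar\'e constant is preserved under convolution with $\rho_G$, each piece contributes $\mathcal{O}\big((\lambda_\rho n)^{-1}\big)$ times a quantity dominated by $J(\rho)$. The squared factor $I(\rho)^2$ arises because several cross terms between distinct pieces must be controlled via Cauchy--Schwarz in the SLD inner product, and because intermediate Fisher informations $I(\rho^{\boxplus k})$ appear which must be bounded above by $I(\rho)$ (this being consistent with the upper bound in Lemma~\ref{lem:Fisher-inf-bounds}, or with a monotonicity argument along the convolution semigroup). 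Two such factors compose to give the $I(\rho)^2$ dependence, while the Poincar\'e spectral gap supplies the $1/\lambda_\rho$ factor and the convolution normalization $1/\sqrt n$ squared supplies the $1/n$.

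The main obstacle is the quantum projection step itself. Classically the score of a convolution is literally the conditional expectation of a sum of scores, and the orthogonality of the ``projection error'' is immediate from independence; the Poincar\'e inequality then handles the remainder after subtracting the linear Gaussian part. In the quantum case, the non-commutativity of score operators, the symmetrization built into the SLD inner product $\langle \cdot, \cdot \rangle_\rho$, and the unboundedness of $\ac_j, \ac_j^\dagger$ make it delicate even to formulate the correct projection identity and to verify that it yields bounds in the norm $\|\cdot\|_{2, \rho^{\boxplus n}}$ rather than in some related quantum inner product (such as the Kubo--Mori--Bogoliubov one of~\eqref{eq:BKM-inner-prod}). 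Handling states with non-trivial kernels when some $\nu_j = 1$ also requires care, but after the initial decoupling of vacuum modes discussed in Subsection~\ref{subsec:Gaussian-states}, the computation reduces to the faithful case where the functional calculus $\PiInner{\rho}{\phi}$ is unambiguous.
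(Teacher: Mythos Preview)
Your high-level plan is roughly in the spirit of Johnson--Barron, but the mechanism you describe does not match the actual argument and contains a structural gap.

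First, the claim that $S_{\rho^{\boxplus n},j}$ is realized as a quantum conditional expectation of $\frac{1}{\sqrt n}\sum_i S_{\rho,j,i}$ is not correct for the SLD score. In the classical case the score of a sum is literally the conditional expectation of the sum of scores, but with the SLD inner product this identity fails. The paper does not attempt to establish such an identity; instead it applies the symmetric lifting map to the \emph{convolution} score $S_\sigma$ itself, obtaining $\widetilde S_\sigma$, and then projects $\widetilde S_\sigma$ onto the additive subspace $\mathcal W$ of one-body operators. The resulting projection $\widehat Y$ is \emph{not} equal to $\widehat S_\rho=\frac{1}{\sqrt n}\sum_i S_{\rho,i}$; the gap between them is precisely what the successive projection lemma controls.

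Second, your description of the output of the successive projection lemma --- ``each piece contributes $\mathcal O\big((\lambda_\rho n)^{-1}\big)$ times a quantity dominated by $J(\rho)$'' --- is not how the bound arises. Summing the lemma over $\ell$ does not give $n$ terms each of size $O(1/n)$. Rather, the right-hand sides telescope to $\|\widetilde S_\sigma-\widehat Y\|_{2,\rho^{\otimes n}}^2$, which is bounded above, via a Pythagorean right-triangle argument in $\mathbf L_2(\rho^{\otimes n})$, by $J(\sigma)-J(\sigma)^2/J(\rho)$. The left-hand sides sum to a quantity bounded below by a constant times $(n-1)\,J(\sigma)^2/J(\rho)$. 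The result is a \emph{nonlinear, self-referential} inequality in $J(\sigma)$,
\[
\frac{\lambda_\rho\mu(n-1)}{4}\,\frac{J(\sigma)^2}{J(\rho)}\;\le\;J(\sigma)-\frac{J(\sigma)^2}{J(\rho)},
\]
which one then solves for $J(\sigma)$. The $I(\rho)^2$ in the final statement comes from this algebra together with Lemma~\ref{lem:Fisher-inf-bounds}, not from Cauchy--Schwarz on cross terms or from bounding intermediate $I(\rho^{\boxplus k})$ by monotonicity. Finally, no thermal states or convolutions with $\rho_G$ enter the argument at any stage; the entire proof takes place on $\rho^{\otimes n}$.
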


We leave the proof of this theorem for Section~\ref{ConvergenceFID}.

\subsection{Quantum log-Sobolev inequalities}
The SLD Fisher information defined above is much easier to work with. Nevertheless, it lacks the quantum de Bruijn's identity. To overcome this difficulty we diverge from the method of~\cite{JB} and instead of the quantum de Bruijn's identity use quantum log-Sobolev inequalities.

\begin{theorem}\emph{\cite{Beigi-RahimiKeshari2023}} \label{thm:LSI} 
Let  $\tau_j = (1-e^{-\beta_j}) e^{-\beta_j \ac_j^\dagger \ac_j}$ be a thermal state with parameter $0<\beta_j<+\infty$ and let $\tau = \tau_1\otimes \cdots \otimes \tau_m$. Then, for an $m$-mode quantum state $\rho$ we have
\begin{align*}
\alpha D(\rho\| \tau) & \leq  \sum_{j=1}^m \bigg( e^{\beta_j /2} \tr(\rho\ac_j^\dagger \ac_j)  +e^{-\beta_j/2}   \tr(\rho\ac_j \ac_j^\dagger)  - 2\tr\big( \rho^{\frac1 2}  \ac_j^\dagger  \rho^{\frac12} \ac_j    \big)   \bigg)\\
& =  \sum_{j=1}^m \tr\Big( \Big| e^{\beta_j/4} \ac_j \sqrt \rho - e^{-\beta_j/4} \sqrt \rho \ac_j   \Big|^2  \Big),
\end{align*}
where $\alpha^{-1}=  \frac{2+\log(2m+1)}{\sinh(\beta_{\min}/2)} + \frac{\beta_{\min}}{4\sinh^2(\beta_{\min}/4) }$ and $\beta_{\min} =\min_j \beta_j$. 

\end{theorem}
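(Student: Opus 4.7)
The right-hand side of the inequality is naturally interpreted as a KMS-type Dirichlet form associated with the quantum Ornstein--Uhlenbeck (QOU) semigroup $\{\mathcal P_t\}_{t\geq 0}$ whose Lindblad generator has jump operators $\{e^{-\beta_j/4}\bfa_j,\; e^{\beta_j/4}\bfa_j^\dagger\}_{j=1}^m$ and whose unique stationary state is $\tau$. Under this identification, the stated inequality is exactly the log-Sobolev inequality (LSI) with constant $\alpha$ for the QOU semigroup, and it would imply the exponential decay $D(\mathcal P_t\rho\|\tau)\leq e^{-2\alpha t}D(\rho\|\tau)$ upon integrating the de Bruijn identity $\tfrac{\dd}{\dd t}D(\mathcal P_t\rho\|\tau) = -\mathcal E(\mathcal P_t\rho)$. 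My first step would therefore be to verify the identification by a direct commutator expansion exploiting the detailed-balance relation $\bfa_j\tau = e^{-\beta_j}\tau\bfa_j$, which forces the precise weights $e^{\pm\beta_j/4}$ that make the generator KMS-self-adjoint.

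Second, I would establish the LSI in the single-mode case $m=1$. Diagonalized in the Fock basis, the QOU generator restricts to a birth--death chain on $\mathbb N$ with invariant geometric distribution, whose spectral gap can be computed explicitly and scales like $\sinh(\beta/2)$. Combining the resulting Poincar\'e inequality with a sharp relative-entropy-to-variance comparison (yielding the $\tfrac{\beta}{4\sinh^2(\beta/4)}$ term from a two-point estimate on $[0,1]$) produces a single-mode LSI with constant of the form $\bigl(\tfrac{1}{\sinh(\beta/2)} + \tfrac{\beta}{4\sinh^2(\beta/4)}\bigr)^{-1}$. The weighted commutator structure of the Dirichlet form means this comparison must track the KMS geometry rather than the Hilbert--Schmidt geometry.

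Third, since $D(\rho\|\tau)$ is not additive across modes, I would use an approximate tensorization of entropy --- obtained by iterated strong subadditivity in the style of Capel--Rouz\'e--Stilck Fran\c{c}a --- to deduce
\[
D(\rho\|\tau)\leq \bigl(2+\log(2m+1)\bigr)\sum_{j=1}^m D(\rho_j\|\tau_j),
\]
where $\rho_j$ is the $j$-th reduced state. Each term on the right is then bounded by the corresponding single-mode Dirichlet form via step two, and the modes of the full Dirichlet form on the right-hand side of the theorem trivially add; matching constants then produces the claimed $\alpha^{-1}$. The main obstacle is the single-mode LSI itself: the unboundedness of the QOU generator and the non-tracial, Boltzmann-weighted form of its Dirichlet energy preclude direct use of finite-dimensional MLSI tools, so the relative-entropy/variance comparison must be handled via a Stroock--Varopoulos-type argument carefully adapted to the continuous-variable setting, and one must check that the tensorization step does not degrade the $\sinh(\beta_{\min}/2)$ scaling in the first term of $\alpha^{-1}$.
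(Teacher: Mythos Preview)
The paper does not prove this theorem: it is quoted from \cite{Beigi-RahimiKeshari2023}, with the remark that the single-mode case with a non-optimal constant appears in \cite{CarboneSasso2008}. There is therefore no ``paper's own proof'' to compare your sketch against.

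That said, two points about your sketch are worth flagging. First, the constant you would obtain does not match the stated one. If you apply approximate tensorization to $D(\rho\|\tau)$ and then invoke the single-mode LSI on each $D(\rho_j\|\tau_j)$, the factor $2+\log(2m+1)$ multiplies the entire single-mode constant, yielding
\[
\alpha^{-1}_{\text{yours}} \;=\; \bigl(2+\log(2m+1)\bigr)\Bigl(\tfrac{1}{\sinh(\beta_{\min}/2)} + \tfrac{\beta_{\min}}{4\sinh^2(\beta_{\min}/4)}\Bigr),
\]
whereas the theorem states
\[
\alpha^{-1} \;=\; \tfrac{2+\log(2m+1)}{\sinh(\beta_{\min}/2)} + \tfrac{\beta_{\min}}{4\sinh^2(\beta_{\min}/4)}.
\]
The dimensional factor attaches only to the spectral-gap term, not to the entropy--variance comparison term. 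This strongly suggests that in the original proof the multimode step is not a tensorization of the full LSI, but rather enters only in the Poincar\'e part (the $m$-mode QOU spectral gap tensorizes trivially as $\sinh(\beta_{\min}/2)$, and the $\log(2m+1)$ factor presumably arises in a Rothaus/Stroock--Varopoulos-type step bounding entropy by variance plus a remainder at the level of the full state $\rho$, not via reduced states).

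Second, your route through reduced states needs an extra ingredient you do not mention: after applying the single-mode LSI to $\rho_j$, you get the Dirichlet form evaluated at $\rho_j$, namely $\tr\bigl|e^{\beta_j/4}\bfa_j\sqrt{\rho_j} - e^{-\beta_j/4}\sqrt{\rho_j}\bfa_j\bigr|^2$, whereas the right-hand side of the theorem involves $\sqrt{\rho}$. Bridging this requires monotonicity of the KMS Dirichlet form under partial trace, which is plausible (via Petz-type monotonicity of the associated Fisher information) but is a genuine step that must be supplied.
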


This theorem in the case of $m=1$ and for a non-optimal constant $\alpha$ is proven in~\cite{CarboneSasso2008}. The optimal constant $\alpha$ for $m=1$ is computed in~\cite{Beigi-RahimiKeshari2023}.

\begin{lemma}\label{lem:LSI-modified-Fisher}
Suppose that $\rho$ is a centered quantum state that is in Williamson's form. Let $\rho_G=\tau_1\otimes \cdots\otimes \tau_m$ be the Gaussification of $\rho$ with $\tau_j=(1-e^{-\beta_j})e^{-\beta_j \bfa_j^\dagger \bfa_j}$ and assume that $0<\beta_j<+\infty$ for all $j$.

Then,  there exists a constant $C>0$ such that
$$C\sum_{j=1}^m \tr\Big( \Big| e^{\beta_j/4} \ac_j \sqrt \rho - e^{-\beta_j/4} \sqrt \rho \ac_j   \Big|^2  \Big)\leq  J(\rho).$$
\end{lemma}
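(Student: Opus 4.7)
The plan is to diagonalize $\rho$ in its eigenbasis and reduce the claim to an elementary pointwise algebraic comparison, term by term, between the two sides. Fix an eigendecomposition $\rho = \sum_k p_k \ketbra{e_k}{e_k}$ and set $T_j := e^{\beta_j/4}\bfa_j\sqrt\rho - e^{-\beta_j/4}\sqrt\rho\,\bfa_j$. Expanding $\tr(T_j^\dagger T_j)$ in this basis and evaluating the three resulting traces produces
\[
\tr(|T_j|^2) = \sum_{k,\ell}|\langle e_\ell|\bfa_j|e_k\rangle|^2\bigl(e^{\beta_j/4}\sqrt{p_k} - e^{-\beta_j/4}\sqrt{p_\ell}\bigr)^2,
\]
which transparently vanishes at $\rho = \rho_G$ since the KMS-balance $p_\ell/p_k = e^{-\beta_j}$ holds on thermal states.

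Next I would compute $J_j(\rho)$ in the same basis starting from the representation $J_j(\rho) = \|S_{\rho,j} + \mu_j^{-1}\bfa_j\|_{2,\rho}^2$ from \eqref{eq:J-I-Will}. The matrix elements of $S_{\rho,j} = \PiInner{\rho}{\phi}([\bfa_j,\rho])$ evaluate to $\frac{2(p_k - p_\ell)}{p_k + p_\ell}\langle e_\ell|\bfa_j|e_k\rangle$, the SLD norm satisfies $\|X\|_{2,\rho}^2 = \tfrac12\sum_{k,\ell}(p_k+p_\ell)|X_{\ell k}|^2$, and the covariance matching between $\rho$ and $\rho_G$ forces $\mu_j = \tfrac12\coth(\beta_j/2)$. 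Collecting these and simplifying $(2\mu_j+1)p_k - (2\mu_j-1)p_\ell$ to $\sinh(\beta_j/2)^{-1}(e^{\beta_j/2}p_k - e^{-\beta_j/2}p_\ell)$ yields
\[
J_j(\rho) = \sum_{k,\ell}|\langle e_\ell|\bfa_j|e_k\rangle|^2\cdot\frac{2\bigl(e^{\beta_j/2}p_k - e^{-\beta_j/2}p_\ell\bigr)^2}{\cosh^2(\beta_j/2)\,(p_k + p_\ell)}.
\]

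The key observation is the factorization $e^{\beta/2}p_k - e^{-\beta/2}p_\ell = \bigl(e^{\beta/4}\sqrt{p_k} - e^{-\beta/4}\sqrt{p_\ell}\bigr)\bigl(e^{\beta/4}\sqrt{p_k} + e^{-\beta/4}\sqrt{p_\ell}\bigr)$, which peels the common factor $(e^{\beta_j/4}\sqrt{p_k} - e^{-\beta_j/4}\sqrt{p_\ell})^2$ out of both summands. The lemma then reduces to the pointwise bound $\frac{2(a+b)^2}{\cosh^2(\beta_j/2)(e^{-\beta_j/2}a^2 + e^{\beta_j/2}b^2)} \geq C$ with $a := e^{\beta_j/4}\sqrt{p_k}$, $b := e^{-\beta_j/4}\sqrt{p_\ell}$. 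A one-variable minimization shows that the infimum of the left-hand side over $a,b \geq 0$ equals $2e^{-\beta_j/2}/\cosh^2(\beta_j/2) > 0$ (attained as $a/b \to 0$), so the choice $C := \min_j 2e^{-\beta_j/2}/\cosh^2(\beta_j/2)$ works for each mode and summing over $j$ completes the proof.

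The main obstacle is really just aligning the two sums into the same $(k,\ell)$-indexed form; once both are expressed this way, the core estimate is an elementary algebraic bound. A minor technicality is that if $\rho$ is not faithful, pairs with $p_k = p_\ell = 0$ contribute zero to both sums by continuity of $\zeta(x,y) = 2(x-y)^2/(x+y)$ at the origin, used in the well-definedness of $I_j$; such pairs may be silently omitted.
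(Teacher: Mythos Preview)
Your proof is correct and is essentially the same as the paper's: the paper packages your eigenbasis diagonalization via the functional calculus $\PiInner{\rho}{g}$ of Subsection~\ref{subsec:pi-representation} and Proposition~\ref{prop:pi-representation}, reducing to the identical pointwise inequality $Cg(x,y)\leq h(x,y)$ with $g(x,y)=(e^{\beta/4}\sqrt y-e^{-\beta/4}\sqrt x)^2$ and $h$ the $J$-kernel. Your constant $2e^{-\beta_j/2}/\cosh^2(\beta_j/2)$ is exactly the paper's $8e^{-3\beta_j/2}/(1+e^{-\beta_j})^2$ rewritten, obtained by the same change of variables.
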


\begin{proof}
We present the proof only in the case of $m=1$ as the proof for arbitrary $m$ is similar. We use similar ideas as in the proof of Lemma~\ref{lem:Fisher-inf-bounds}.  Let
$$g(x, y)={\big{(} e^{\beta/4} \sqrt{y} - e^{-\beta/4} \sqrt{x}\big{)}}^{2}.$$
Then, by~\eqref{eq:def_Pi} we have
\begin{align*}\tr\Big( \Big| e^{\beta/4} \ac \sqrt \rho - e^{-\beta/4} \sqrt \rho \ac   \Big|^2  \Big) = \tr\big( \ac^{\dagger} {\PiInner{\rho}{g}} (\ac)\big).  
\end{align*}
We also have
\begin{align*}
\Big\| S_{\rho} + \frac{1}{\mu} \ac\Big\|_{2, \rho}^{2}  
	&= \tr\Big( \Big|\PiInner{\rho}{\sqrt{\psi}}\big{(} S_{\rho} + \frac{1}{\mu} \ac  \big{)}\Big|^{2} \Big)  \\
	&= \tr\Big( \Big| \PiInner{\rho}{\sqrt{\phi}} \big{(} [\ac, \rho ]\big{)} + \frac{1}{\mu} \PiInner{\rho}{\sqrt{\psi}}\big{(} \ac\big{)} \Big|^{2} \Big)  \\
	&= \tr\big( \big|  \PiInner{\rho}{\sqrt h}(\ac)  \big|^2 \big)\\
	&= \tr\big( \ac^{\dagger} {\PiInner{\rho}{h}} (\ac)\big)  ,
\end{align*}
where $h(x, y) = \big(\sqrt{\phi(x, y)} (y-x) + \frac{1}{\mu}\sqrt {\psi(x, y)}\big)^2$. Then, by Proposition~\ref{prop:pi-representation} we need to show that there is a constant $C>0$ such that 
$$ C g(x, y)\leq  h(x, y),$$
holds for all $x, y\geq 0$. It is shown in Appendix~\ref{ConstantFisherSobolev} that this inequality holds for 
$$C= \frac{8e^{-3\beta/2}}{(1+e^{-\beta})^2}.$$

\end{proof}

\subsection{Proof of Theorem~\ref{MainTheorem}}

First, by the same argument as in the proof of Theorem~\ref{TraceMainTheoremM} we may assume that $\rho$ is in Williamson's form, and $\nu_{\min}(\rho)>1$. 
The point is that as shown in Appendix~\ref{app:PoincareInequality}, $\lambda_\rho>0$ implies $\lambda_{U\rho U^\dagger}>0$ for any Gaussian unitary $U$. Then, we can assume that the covariance matrix of $\rho$ is diagonal. 
In this case, $\rho_G$ is a tensor product of thermal states of the form $\rho_G =  \tau_1\otimes \cdots \otimes \tau_m$ where $\tau_j = (1-e^{-\beta_j}) e^{-\beta_j \ac_j^\dagger \ac_j}$ for some $0<\beta_j<+\infty$. We note that $\beta_j<+\infty$ sine $\nu_j>1$. 
 
Then, by Theorem~\ref{thm:LSI} and Lemma~\ref{lem:LSI-modified-Fisher} there exists a constant $\alpha>0$ such that
$$\alpha D\big(\rho^{\boxplus n}\big\| \rho_G \big) \leq J\big(\rho^{\boxplus n}\big).$$ 
Thus, the desired result follows from Theorem~\ref{pseudoMainTheorem}.

\section{Successive projection lemma} \label{SuccessiveProjection}

In this section we develop some tools useful for the proof of Theorem~\ref{pseudoMainTheorem} which will be presented in the next section. Here, we prove a lemma which following~\cite{JB} we call the \emph{successive projection lemma}. To prove Theorem~\ref{pseudoMainTheorem} we will apply this lemma for the SLD score operator. 

For a function $g(x)$ and i.i.d random variables $X_1, \dots, X_n$, we may lift $g(x)$ to be a function on the tuple of variables $(X_1, \dots, X_n)$ simply by defining 
\begin{align}\label{eq:lifting-g-classic}
\widetilde g(X_1, \dots, X_n) := g\Big(\frac{X_1+\cdots + X_n}{\sqrt n}\Big).
\end{align}
One of the crucial steps in~\cite{JB} is the construction of successive projections of this function. Suppose that we want to bound the distance of the function $\widetilde g(X_1, \dots X_n)$ from its projection on the space of functions of the form $\frac{h(X_1) + \cdots + h(X_n)}{\sqrt{n}}$ that are summations of functions of individual variables. To this end, the optimal choice for $h(\cdot)$, when the distance is measured in terms of $2$-norm, is the projection of $\widetilde g(X_1, \dots X_n)$ on the space of functions of one of its variables.  
The successive projection lemma states that to estimate the distance of $\widetilde g(X_1, \dots X_n)$ to $\frac{h(X_1) + \cdots + h(X_n)}{\sqrt{n}}$ it is advantageous to consider projections of $\widetilde g(X_1, \dots X_n)$  not only on the space of single-variable functions, but also on the space of functions of any number of variables.  The idea is to consider the sequence of functions
\[
	 h_k  ( X_1 , \dots , X_k ) = \mathbb{E}_{X_{{k+1}}} \big{[}  h_{k+1} ( X_{1} , \dots , X_{k+1} ) \big{]},
\]
with $ h_n(\cdot) = \widetilde g(\cdot)$, in which case $h(\cdot)=h_1(\cdot)$. Then, the successive projection lemma of~\cite{JB} states that we may estimate the squared-distance 
\[
	{\mathbb{E} \bigg{[}  \Big(\widetilde g ( X_1 , \dots , X_n) - \frac {1}{\sqrt{n}}\sum_{\ell=1}^{n}h(X_\ell)\Big)^2 \bigg{]}},
\]
in terms of the squared-distances
\[
	{\mathbb{E} \bigg{[}  \Big( h_k  ( X_1 , \dots , X_k) - h_{k-1} ( X_1, \cdots, X_{k-1}) - \frac{1}{\sqrt{n}} h(X_k)  \Big)^{2}\bigg{]} }.
\]

Our goal in this section is to generalize the above ideas to the quantum case. To this end, we first need to generalize the lifting map~\eqref{eq:lifting-g-classic} to the quantum case.

\subsection{Symmetric lifting map} \label{TildeMap}

For any integer $n$ consider the set of unitaries 
$$\Big\{ e^{i\theta} \underbrace{D_{z/\sqrt{n}} \otimes \cdots \otimes D_{z/\sqrt{n}}}_{n \text{ times}}:\, z\in \mathbb C^m, \theta\in \mathbb R\Big\},$$
acting on the Hilbert space $\mathcal H_m^{\otimes n}$.
Observe that this set forms a group and the map $e^{i\theta}D_z\mapsto e^{i\theta}D_{z/\sqrt n}\otimes \cdots \otimes Z_{z/\sqrt n}$ is a representation of the Weyl-Heisenberg group $\{e^{i\theta} D_z:\, z\in \mathbb C^m, \theta\in \mathbb R\}$. Then, by the Stone--von Neumann theorem~\cite{Hall-Book} the Hilbert space $\mathcal H_m^{\otimes n}$ can be decomposed to an orthogonal direct sum of closed subspaces $\mathcal H_m^{\otimes n} =\bigoplus_r \mathcal W_r$ such that each subspace $\mathcal W_r$ is invariant under $D_{z/\sqrt n}^{\otimes n}$ for any $z\in \mathbb C^m$ and there is a unitary $U_r: \mathcal H_m\to \mathcal W_r$ such that
\begin{align}\label{eq:V-SvN}
D_{z/\sqrt{n}}^{\otimes n}\Big|_{W_r} = U_r D_z U_r^\dagger.
\end{align}
This means that $D_{z/\sqrt{n}}^{\otimes n}$ can be written as
$$D_{z/\sqrt{n}}^{\otimes n} = \bigoplus_r U_r D_z U_r^\dagger.$$
Extending this definition, for any (probably unbounded) operator $T$ acting on $\mathcal H_m$ we define
$$\widetilde T = \bigoplus_r U_r T U_r^\dagger.$$
Equivalently, we let $\widetilde T$ to be an operator acting on $\mathcal H_m^{\otimes n} =\bigoplus_r \mathcal W_r$ whose restriction on $\mathcal W_r$ equals $U_r TU_r^\dagger$. We call $T\mapsto \widetilde T$ the \emph{symmetric lifting map}.

Recall that a trace class operator can be represented in terms of its characteristic function as
\[
	T = \frac{1}{{\pi}^m} \int_{\mathbb{C}^{m}} \chi_{T}(z) D_{-z} \dd^{2m} z.
\]
Then, by the above discussion we have 
\begin{equation} \label{TildeMapEq}
	\widetilde{T} \coloneqq \frac{1}{{ \pi}^m} \int \chi_{T}(z) \underbrace{D_{-z/\sqrt{n}} \otimes \cdots \otimes D_{-z/\sqrt{n}}}_{n \text{ times}} \dd^{2m} z.
\end{equation}
As implied from this expression, $\widetilde{T}$ acts symmetrically on different subsystems, thus the name.

\begin{proposition} \label{Basictilde}
Let $\sigma=\rho_1\boxplus \cdots \boxplus \rho_n$ where $\rho_1, \dots,  \rho_n$ are $m$-mode quantum states with finite second moments. Then, the followings hold:
\begin{enumerate} [label={\rm (\roman*)}]
\item $\widetilde{{T}^{\dagger}} = \big(\widetilde{T}\big)^{\dagger}$ and $\widetilde{T} \widetilde{R} = \widetilde{TR}$.
\item $\tr\big(\rho_1\otimes \cdots \otimes \rho_n \widetilde{T}\big) = \tr\big(\sigma T\big)$ and $\langle  R, T\rangle_{\sigma}=\langle \widetilde R,  \widetilde T\rangle_{\rho_1\otimes \cdots \otimes \rho_n}$ for all operators $R, T$ satisfying $\|R\|_{2, \sigma}, \|T\|_{2, \sigma}<+\infty$.
\item We have $\widetilde \ac_j = \frac{\ac_{j,1}  + \cdots  +\ac_{j,n}}{\sqrt n}$, and  $\big\langle \widetilde T, \frac{\ac_{j,1}  + \cdots  +\ac_{j,n}}{\sqrt n}\big\rangle_{\rho_1\otimes \cdots \otimes \rho_n} = \langle T, \ac_j\rangle_{\sigma}$ if $\|T\|_{2, \sigma}<+\infty$. 
\item $\widetilde{[\ac_j, T]} = \sqrt n [\ac_{j, \ell}, \widetilde T]$ for every $1\leq \ell\leq n$ if $\|T\|_{2, \sigma}<+\infty$.
\end{enumerate}
\end{proposition}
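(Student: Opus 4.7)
The proof will exploit two equivalent descriptions of the lifting map in parallel: the direct-sum representation $\widetilde T=\bigoplus_r U_r T U_r^\dagger$ and, for sufficiently controlled $T$, the integral formula \eqref{TildeMapEq}. Part (i) is immediate from the first description: on each block the operation is conjugation by the unitary $U_r$, which is a $*$-homomorphism, so adjoints and products pass through the lifting.

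For part (ii), the crux is to identify the trace-class operator $\sigma':=\sum_r U_r^\dagger \Pi_r(\rho_1\otimes\cdots\otimes\rho_n)\Pi_r U_r$, with $\Pi_r$ denoting the orthogonal projection onto $\mathcal W_r$, with $\sigma=\rho_1\boxplus\cdots\boxplus\rho_n$. I will do this by comparing characteristic functions: cyclicity together with the Stone--von Neumann relation \eqref{eq:V-SvN} converts $\chi_{\sigma'}(z)$ into $\sum_r \tr\bigl(\Pi_r(\rho_1\otimes\cdots\otimes\rho_n)\Pi_r D_{z/\sqrt n}^{\otimes n}\bigr)$, and since the $\Pi_r$ commute with $D_{z/\sqrt n}^{\otimes n}$ and sum to $\mathbb I$, this collapses to $\prod_j\chi_{\rho_j}(z/\sqrt n)=\chi_\sigma(z)$. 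The trace identity in (ii) is then exactly $\tr(\sigma' T)=\tr(\sigma T)$, read off from the direct-sum representation of $\widetilde T$ and cyclicity. The inner product identity follows by writing $\widetilde R^\dagger \widetilde T=\widetilde{R^\dagger T}$ and $\widetilde T\widetilde R^\dagger=\widetilde{TR^\dagger}$ via (i), and applying the trace identity to the two summands in the definition \eqref{eq:symmetric-inner-prod} of $\langle R,T\rangle_\sigma$.

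For (iii), the direct-sum description immediately gives $\widetilde{D_w}=D_{w/\sqrt n}^{\otimes n}$. Differentiating both sides in $\bar w_j$ at $w=0$ and using $-\ac_j=\partial_{\bar w_j}D_w|_{w=0}$ from \eqref{eq:displacement-derivative} on the left, together with Leibniz on the tensor product and the chain-rule factor $1/\sqrt n$ on the right, yields $\widetilde\ac_j=(\ac_{j,1}+\cdots+\ac_{j,n})/\sqrt n$. The inner product formula in (iii) is then a specialization of the one in (ii) with $R=\ac_j$, whose $\|\cdot\|_{2,\sigma}$-norm is finite since $\sigma$ has finite second moments by \eqref{eq:gamma-d-conv}. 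For (iv), combining (i) and (iii) gives $\widetilde{[\ac_j,T]}=[\widetilde\ac_j,\widetilde T]=\frac{1}{\sqrt n}\sum_{\ell=1}^n[\ac_{j,\ell},\widetilde T]$, so the claim reduces to showing that $[\ac_{j,\ell},\widetilde T]$ is independent of $\ell$. The identity $[\ac_j,D_{-w}]=-w_j D_{-w}$, a direct consequence of \eqref{DisplacementOp}, implies
\[
[\ac_{j,\ell},D_{-z/\sqrt n}^{\otimes n}]=-\frac{z_j}{\sqrt n}\,D_{-z/\sqrt n}^{\otimes n},
\]
with the right-hand side manifestly free of $\ell$; substituting into \eqref{TildeMapEq} propagates this $\ell$-independence to $[\ac_{j,\ell},\widetilde T]$, which closes the argument.

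The main obstacle I expect is analytic rather than conceptual: the operators $\ac_j$, $\widetilde T$ and their commutators are unbounded, so interchanging differentiation and lifting in (iii), passing $\ac_{j,\ell}$ through the integral in (iv), and making sense of the trace identities in (ii) when $T$ is only assumed to satisfy $\|T\|_{2,\sigma}<+\infty$ rather than being bounded or trace class, all require careful domain arguments. The natural remedy is a density argument approximating $T$ in $\|\cdot\|_{2,\sigma}$-norm by operators supported on finitely many Fock states and verifying that each identity passes to the limit.
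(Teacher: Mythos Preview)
Your proposal is correct and follows essentially the same approach as the paper: (i) from the direct-sum definition, (ii) via the characteristic function identity $\chi_\sigma(z)=\prod_\ell \chi_{\rho_\ell}(z/\sqrt n)$, (iii) by differentiating $\widetilde{D_w}=D_{w/\sqrt n}^{\otimes n}$ at the origin, and (iv) from the $\ell$-independence of $[\ac_{j,\ell},D_{-z/\sqrt n}^{\otimes n}]$, with the extension beyond trace-class $T$ handled by density/continuity exactly as the paper does. The only cosmetic difference is that in (ii) you phrase the computation dually, first constructing $\sigma'=\sum_r U_r^\dagger \Pi_r(\rho_1\otimes\cdots\otimes\rho_n)\Pi_r U_r$ and identifying it with $\sigma$, whereas the paper plugs the integral representation of $\widetilde T$ directly into $\tr(\rho_1\otimes\cdots\otimes\rho_n\,\widetilde T)$; the underlying calculation is identical.
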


\begin{proof}
(i) is an immediate consequence of the definition. 
To prove (ii) we first assume that $T$ is trace class and use $\chi_{\sigma}(z) = \prod_{\ell=1}^n \chi_{\rho_\ell}(z/\sqrt n)^n$ to compute
\begin{align*}
    \tr\big(\rho_1\otimes \cdots\otimes \rho_n \widetilde{T}\big)
    &= \frac{1}{{(2 \pi)}^m} \int \chi_{T}(z) \prod_{\ell=1}^n \tr\big(\rho_\ell D_{-z/\sqrt{n}}\big)^n \dd^{2m} z  \\
    &= \frac{1}{{(2 \pi)}^m} \int \chi_{T}(z) \prod_{\ell=1}^n\chi_{\rho_\ell}(-z/\sqrt{n})^n  \dd^{2m} z \\
    &= \frac{1}{{(2 \pi)}^m} \int \chi_{T}(z) \chi_{\sigma}(-z)  \dd^{2m} z \\
    &= \frac{1}{{(2 \pi)}^m} \int \chi_{T}(z) \tr\big(\sigma  D_{-z}\big) \dd^{2m} z \\
    &= \tr\big(\sigma T\big).
  \end{align*}
Then, $\langle R, T\rangle_{\sigma} = \langle \widetilde R, \widetilde T\rangle_{\rho_1\otimes \cdots \otimes \rho_n}$ follows from (i) assuming that both $R, T$ are trace class. The general case is proven by continuity.  For (iii) start with $\widetilde{D_z} = D_{z/\sqrt n}^{\otimes n}$ and compute the derivative $\partial_{\bar z}$ of both sides at $z=0$. Then, the second part follows from (ii). Finally, for (iv) by comparing $\big[\ac_{j, \ell}, D_{\frac {z}{\sqrt n}}^{\otimes n}\big] = \frac {z_j}{\sqrt n} D_{\frac {z}{\sqrt n}}^{\otimes n}$ and $[\ac_j, D_{z}] = z_j  D_{z}$ we find that the equality holds if $T$ is trace class. The more general case then follows by a continuity argument.
\end{proof}

The function $\widetilde g(x_1, \dots x_n)=g(\frac{x_1 + \cdots + x_n}{\sqrt{n}})$ is symmetric under the permutation of its coordinates. By the definition of $\widetilde T$, as an operator acting on $\mathcal H_m^{\otimes n}$, it is invariant under the permutation of subsystems. Moreover, the derivative  $\partial_{x_\ell}\widetilde g(x_1, \dots, x_n)$ is independent of $1\leq \ell\leq n$. 
Part (iv) of Proposition~\ref{Basictilde} is a quantum generalization of this observation.

\subsection{Projections onto additive subspaces} 
Having the powerful tool of the symmetric lifting map we are now ready to establish the successive projection lemma in the quantum case. To this end, let $\rho$ be an $m$-mode quantum state that is centered, meaning that $\tr(\rho \ac_j)=0$. 
For simplicity of notation we let
\[
	\sigma := \rho^{\boxplus n}.
\]
Next, 
for an operator $X$ satisfying $\|X\|_{2, \sigma}<+\infty$ we consider $\widetilde{X}$ as defined in the previous subsection. Let
\begin{align}\label{eq:def-Y-ell}
	Y_{\ell} := \sqrt{n} \, \tr_{\neg \ell}\big(\rho_1 \otimes \dots \otimes \rho_{\ell-1} \otimes \rho_{\ell+1} \otimes \dots \otimes \rho_{n} \widetilde{X}\big).
\end{align}
Here, $\rho_\ell$ is the same as $\rho$, but acting on the $\ell$-th subsystem. Also, $\tr_{\neg \ell}(\cdot)$ denotes the partial trace with respect to all but the $\ell$-th subsystem. The following proposition shows that this partial trace is indeed well-defined. 

\begin{proposition} \label{BasicY}
$Y_{\ell}$ is the orthogonal projection of $\sqrt{n} \widetilde{X}$ on the $\ell$-th subsystem. In other words, for any operator $V_\ell$ acting on the $\ell$-th subsystem, we have
\[
	{\langle V_\ell , Y_{\ell}  \rangle}_{\rho_{\ell}} = \sqrt{n} {\langle V_\ell , \widetilde{X} \rangle}_{\rho^{\otimes n}}.
\]
\end{proposition}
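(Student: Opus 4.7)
The proof should be essentially a direct computation, amounting to the statement that partial trace against $\rho_{\neg\ell} := \rho_1 \otimes \cdots \otimes \rho_{\ell-1} \otimes \rho_{\ell+1}\otimes \cdots \otimes \rho_n$ (rescaled by $\sqrt{n}$) is the adjoint of the embedding ``extend by identity on the other subsystems'' with respect to the SLD inner products weighted by $\rho_\ell$ on one side and by $\rho^{\otimes n}$ on the other side.

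First I would unfold both sides of the claimed identity using the definition $\langle A, B\rangle_\rho = \tfrac 12\tr(\rho A^\dagger B) + \tfrac 12\tr(A^\dagger \rho B)$. On the left this gives
\[
\langle V_\ell, Y_\ell\rangle_{\rho_\ell} = \tfrac 12 \tr\!\big(\rho\, V_\ell^\dagger\, Y_\ell\big) + \tfrac 12 \tr\!\big(V_\ell^\dagger \rho \,Y_\ell\big),
\]
with the traces taken on the $\ell$-th subsystem. Next I would substitute the definition of $Y_\ell$ and apply the elementary partial-trace identity $\tr_\ell\!\big(A_\ell \tr_{\neg\ell}(C)\big)=\tr\!\big((A_\ell\otimes \mathbb I_{\neg\ell})\,C\big)$ to each term, giving
\[
\tr\!\big(\rho V_\ell^\dagger Y_\ell\big) = \sqrt n\,\tr\!\big((\rho V_\ell^\dagger \otimes \rho_{\neg\ell})\,\widetilde X\big),
\qquad
\tr\!\big(V_\ell^\dagger \rho\, Y_\ell\big) = \sqrt n\,\tr\!\big((V_\ell^\dagger \rho \otimes \rho_{\neg\ell})\,\widetilde X\big).
\]

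The key bookkeeping step is to observe that since $\rho V_\ell^\dagger$ sits on the $\ell$-th slot and $\rho$ sits on every other slot, the tensor-product operator re-bundles as
\[
\rho V_\ell^\dagger \otimes \rho_{\neg\ell} = \rho^{\otimes n}\,(V_\ell^\dagger \otimes \mathbb I_{\neg\ell}),
\qquad
V_\ell^\dagger \rho \otimes \rho_{\neg\ell} = (V_\ell^\dagger \otimes \mathbb I_{\neg\ell})\,\rho^{\otimes n}.
\]
Averaging the two resulting expressions yields exactly $\sqrt n\,\langle V_\ell\otimes \mathbb I_{\neg\ell},\,\widetilde X\rangle_{\rho^{\otimes n}}$, which is the desired identity (with $V_\ell$ tacitly identified with $V_\ell \otimes \mathbb I_{\neg\ell}$ on the right-hand side). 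Since this holds for all $V_\ell$ on the $\ell$-th subsystem and $Y_\ell$ itself manifestly acts on the $\ell$-th subsystem, the Riesz representation theorem for the inner product $\langle \cdot,\cdot\rangle_{\rho_\ell}$ identifies $Y_\ell$ with the orthogonal projection of $\sqrt n \widetilde X$ onto that subspace.

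The only real obstacle is checking that the traces and partial traces above make sense when $\widetilde X$ is unbounded. This is where the implicit hypothesis $\|X\|_{2,\sigma}<+\infty$ enters: by Proposition~\ref{Basictilde}(ii) it transfers to $\|\widetilde X\|_{2,\rho^{\otimes n}}<+\infty$, and writing $(\rho^{\otimes n})^{1/2}\widetilde X (\rho^{\otimes n})^{1/2}$ and invoking Cauchy--Schwarz shows that each of the products appearing above is trace class, so the partial trace defining $Y_\ell$ and all manipulations are legitimate.
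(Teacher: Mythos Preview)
Your proposal is correct and follows essentially the same approach as the paper's proof: a direct computation unfolding the SLD inner product and using the basic partial-trace identity $\tr_\ell\big(A_\ell\,\tr_{\neg\ell}(C)\big)=\tr\big((A_\ell\otimes\mathbb I_{\neg\ell})\,C\big)$. The paper simply runs the computation in the opposite direction (starting from $\langle V_\ell,\widetilde X\rangle_{\rho^{\otimes n}}$ and arriving at $\tfrac{1}{\sqrt n}\langle V_\ell,Y_\ell\rangle_{\rho_\ell}$), and omits your closing paragraph on trace-class justifications.
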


\begin{proof}
Straightforward computations yield
\begin{align*}
	 \langle V_\ell , \widetilde{X} \rangle_{\rho^{\otimes n}} 
	 &= \frac{1}{2}\tr\big(\rho_1 \otimes \dots \otimes \rho_{n} {V_\ell}^{\dagger} \widetilde{X}\big) + \frac 12\tr\big( {V_\ell}^{\dagger} \rho_1 \otimes \dots \otimes \rho_{n} \widetilde{X}\big)  \\
	 &= \frac{1}{2}  \tr\Big(\rho_\ell {V_\ell}^{\dagger} \tr_{\neg \ell}\big(\rho_1 \otimes \dots \otimes \rho_{\ell-1} \otimes \rho_{\ell+1} \dots \otimes \rho_{n} \widetilde{X}\big)\Big) \\
	 &\quad\,  + \frac12 \tr\Big({V_\ell}^{\dagger} \rho_\ell \tr_{\neg \ell}\big(\rho_1 \otimes \dots \otimes \rho_{\ell-1} \otimes \rho_{\ell+1} \dots \otimes \rho_{n} \widetilde{X}\big)\Big) \\
	 &= \frac{1}{2\sqrt n} \tr\big(\rho_\ell {V_\ell}^{\dagger} Y_{\ell}\big) + \frac {1}{2\sqrt n}\tr\big( {V_\ell}^{\dagger} \rho_\ell Y_{\ell}\big) \\
	 &= \frac{1}{\sqrt{n}}\langle V_\ell , Y_{\ell} \rangle_{\rho_{\ell}}.
\end{align*}
\end{proof}

We also define
\begin{align}\label{eq:def-Z-ell}
	Z_{[\ell]} := \tr_{\ell+1, \dots, n}\big(\rho_{\ell+1} \otimes \dots \otimes \rho_n \widetilde{X}\big).
\end{align}
By a similar computation as above, it can be verified that $Z_{[\ell]}$ is the projection of $\widetilde{X}$ on the first $\ell$ subsystems. After defining these projections onto various subsystems, we can present the generalization of~\cite[Lemma 3.3]{JB} in the quantum setting.

\begin{lemma} \label{LowerRtildeLemma} \emph{[Successive projection lemma]}
Let $\rho$ be a centered quantum state that is in Williamson's form. Assume that $\rho$ satisfies the quantum Poincar\'e inequality with constant $\lambda_\rho>0$.
Let $X$ be an operator that satisfies $\tr(\sigma X) =0$ and $\|X\|_{2,\sigma}<+\infty$.   
Then, for any $1\leq \ell\leq n$ we have 
	\[
	\frac{\lambda_{\rho} (\ell-1)}{ 2n I(\rho) } \Big\| Y_{\ell} + \sum_{j=1}^m\Big( {\langle S_{\sigma,j}, X \rangle}_{\sigma} \ac_{j,\ell}-  {\langle X^\dagger, S_{\sigma,j} \rangle}_{\sigma} \ac^\dagger_{j,\ell} \Big) \Big\|_{2, \rho_{\ell}}^{2}\leq 	\Big\| Z_{[\ell]} - Z_{[\ell-1]} - \frac{1}{\sqrt{n}} Y_{\ell}\Big\|_{2, \rho_1 \otimes \dots \otimes \rho_{\ell}}^{2}.
	\]
Here, $\ac_{j,\ell}$ is the annihilator operator of the $j$-th mode of the $\ell$-th subsystem, $\sigma= \rho^{\boxplus n}$  and $Y_\ell, Z_{[\ell]}$ are defined in~\eqref{eq:def-Y-ell} and~\eqref{eq:def-Z-ell},  respectively.
\end{lemma}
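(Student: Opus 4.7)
The plan is to adapt Johnson and Barron's classical successive projection argument~\cite{JB} to the quantum setting, combining the tensor stability of the quantum Poincaré inequality (Appendix~\ref{app:PoincareInequality}) with the behavior of the symmetric lifting map under commutation (Proposition~\ref{Basictilde}).

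I would first verify two centering properties of the residual $R := Z_{[\ell]} - Z_{[\ell-1]} - \frac{1}{\sqrt n} Y_\ell$. On one hand, $\tr_\ell(\rho_\ell R)=0$, since by definition $Z_{[\ell-1]} = \tr_\ell(\rho_\ell Z_{[\ell]})$, and by the hypothesis $\tr(\rho_\ell Y_\ell) = \sqrt n\,\tr(\sigma X) = 0$. On the other hand, $\tr_{1,\dots,\ell-1}(\rho^{\otimes(\ell-1)}R)=0$: the partial trace of $Z_{[\ell]}$ equals $Y_\ell/\sqrt n$ by the full permutation-symmetry of $\widetilde X$ (visible from its integral representation~\eqref{TildeMapEq}), the partial trace of $Z_{[\ell-1]}\otimes I_\ell$ equals $\tr(\sigma X)I_\ell = 0$, and these cancel against $Y_\ell/\sqrt n$. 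The same symmetry also makes $R$ invariant under permutations of the first $\ell-1$ subsystems.

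Next, I would apply an operator-valued (conditional) Poincaré inequality for $\rho^{\otimes(\ell-1)}$—which inherits the constant $\lambda_\rho$ by the tensorization results in Appendix~\ref{app:PoincareInequality}—to a carefully chosen test operator $F$ on $\ell$ subsystems, centered under $\tr_{1,\dots,\ell-1}(\rho^{\otimes(\ell-1)}\cdot)$. The operator $F$ must be designed so that $\|F\|^2_{2,\rho^{\otimes\ell}}$ lower-bounds a multiple of $\|W_\ell\|^2_{2,\rho_\ell}$, while each commutator $\|[\ac_{j,k},F]\|^2$ for $k<\ell$ is upper-bounded by a multiple of $\|R\|^2$. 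The factor $\ell-1$ then arises by summing these $k=1,\dots,\ell-1$ contributions and using the symmetry of $R$ to reduce them to $(\ell-1)$ times a single-term bound. Computing the commutators is made tractable by Proposition~\ref{Basictilde}(iv): since $\sqrt n\,[\ac_{j,k},\widetilde X] = \widetilde{[\ac_j,X]}$ is independent of $k$ and commutes with the partial trace over $\{\ell+1,\dots,n\}$, the commutator $[\ac_{j,k}, R]$ for $k<\ell$ collapses to an expression of the same shape as $Z_{[\ell]}-Z_{[\ell-1]}$ but with $X$ replaced by $[\ac_j,X]$, which then decomposes orthogonally to give norms controlled by $\|R\|^2$ and by Fisher-information-like quantities. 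Summing over $j$ and incorporating the analogous $\ac_j^\dagger$ contributions produces the factor $I(\rho)$ in the denominator.

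The principal obstacle is constructing the test operator $F$ so that the ``score correction'' $\sum_j\bigl(\langle S_{\sigma,j},X\rangle_\sigma \ac_{j,\ell} - \langle X^\dagger, S_{\sigma,j}\rangle_\sigma \ac^\dagger_{j,\ell}\bigr)$ subtracted from $Y_\ell$ inside $W_\ell$ emerges with exactly the stated coefficients. Rather than being a plain SLD-orthogonal projection of $Y_\ell$ onto $\mathrm{span}\{\ac_{j,\ell}, \ac_{j,\ell}^\dagger\}$, this correction captures a non-trivial cancellation involving the non-commutative SLD score $S_{\sigma,j}$ inside the gradient; getting these coefficients right, while respecting the asymmetry between the $\langle\cdot,\cdot\rangle_{\rho_\ell}$ norm on the LHS and the $\langle\cdot,\cdot\rangle_{\rho^{\otimes\ell}}$ norm on the RHS, is where the technical heart of the argument lies.
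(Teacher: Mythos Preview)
Your proposal has a genuine gap, and the orientation of the Poincar\'e step is inverted relative to what actually works. You propose to apply a tensorized Poincar\'e inequality for $\rho^{\otimes(\ell-1)}$ on the first $\ell-1$ subsystems to some test operator $F$ that you leave unspecified, acknowledging that constructing $F$ so that the score-correction coefficients $\langle S_{\sigma,j},X\rangle_\sigma$ and $-\langle X^\dagger,S_{\sigma,j}\rangle_\sigma$ come out right is ``the principal obstacle.'' That obstacle is not surmounted in your sketch, and in the paper's argument there is no such $F$ at all.

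What the paper does instead is apply the Poincar\'e inequality on the \emph{single} subsystem $\rho_\ell$, directly to the operator $W_\ell := Y_\ell + \sum_j\bigl(\langle S_{\sigma,j},X\rangle_\sigma \ac_{j,\ell} - \langle X^\dagger,S_{\sigma,j}\rangle_\sigma \ac_{j,\ell}^\dagger\bigr)$ itself (which is centered under $\rho_\ell$). The substance of the proof is then to identify the gradients $[\ac_{j,\ell}^\dagger,W_\ell]$ and $[\ac_{j,\ell},W_\ell]$ with explicit objects built from the residual $R=Z_{[\ell]}-Z_{[\ell-1]}-\tfrac{1}{\sqrt n}Y_\ell$. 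Concretely, one defines
\[
T_{j,\ell}\;:=\;\tr_{1,\dots,\ell-1}\!\Big(\PiInner{\rho^{\otimes(\ell-1)}}{\psi}\Big(\textstyle\sum_{k<\ell} S_{\rho,j,k}^\dagger\Big)\,R\Big),
\]
and computes term by term: the $Z_{[\ell]}$ piece yields $\tfrac{\ell-1}{\sqrt n}[\ac_{j,\ell}^\dagger,Y_\ell]$ (here Proposition~\ref{Basictilde}(iv) is used to replace $[\ac_{j,k}^\dagger,\widetilde X]$ by $[\ac_{j,\ell}^\dagger,\widetilde X]$), the $Z_{[\ell-1]}$ piece collapses via~\eqref{DeriveConvolution} to the scalar $\tfrac{\ell-1}{\sqrt n}\langle S_{\sigma,j},X\rangle_\sigma$, and the $Y_\ell$ piece vanishes because $\tr[\ac_k,\rho_k]=0$. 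This is precisely what produces the score-correction coefficients you were worried about: they are forced by the identity $T_{j,\ell}=\tfrac{\ell-1}{\sqrt n}[\ac_{j,\ell}^\dagger,W_\ell]$, not inserted by hand. An analogous $T'_{j,\ell}$ built with $-S_{\rho,j,k}$ gives $[\ac_{j,\ell},W_\ell]$. Poincar\'e on $\rho_\ell$ then gives $\lambda_\rho\tfrac{(\ell-1)^2}{n}\|W_\ell\|^2 \le \sum_j(\|T_{j,\ell}\|^2+\|T'_{j,\ell}\|^2)$, and the extended Cauchy--Schwarz of Appendix~\ref{CauchyShwarzT} bounds each of $\|T_{j,\ell}\|^2,\|T'_{j,\ell}\|^2$ by $\big\|\sum_{k<\ell}S_{\rho,j,k}\big\|^2\cdot\|R\|^2=(\ell-1)I_j(\rho)\|R\|^2$, finishing the proof. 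No tensorized Poincar\'e is needed, and the factors $(\ell-1)$, $1/n$, and $I(\rho)$ all fall out of this chain rather than from summing commutators of $R$ over $k<\ell$.
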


\begin{proof}
For simplicity we first present the proof in the case of $m=1$.  This saves us from repeating index $j$, so only need to be worried about indices $\ell, k$ representing subsystems. Here, we use $S_{\rho, k} = \PiInner{\rho_{k}}{\phi}([\ac_{k}, \rho_{k}])$ for the SLD score operator of  $\rho$ acting on the $k$-th subsystem, and $S_{\sigma} = \PiInner{\sigma}{\phi}([\ac, \sigma])$ is the SLD score operator of $\sigma$.

Let $T_{\ell}$ be the operator acting on $\ell$-th subsystem given by
\begin{align*}
		T_{\ell} := &\, \tr_{1, \dots , \ell-1}\bigg(\PiInner{\rho_{1} \otimes \cdots \otimes \rho_{\ell-1}}{\psi}\big(S_{\rho, 1}^{\dagger} + \cdots + S_{\rho, \ell-1}^{\dagger}\big) \Big(Z_{[\ell]} - Z_{[\ell-1]} - \frac{1}{\sqrt{n}} Y_{\ell}\Big)\bigg)\\
		 =&\, \tr_{1, \dots , \ell-1}\Big(\PiInner{\rho_{1} \otimes \cdots \otimes \rho_{\ell-1}}{\psi}\big(S_{\rho, 1}^{\dagger} + \cdots + S_{\rho, \ell-1}^{\dagger}\big) Z_{[\ell]}\Big)\\
		&\, - \tr_{1, \dots , \ell-1}\Big(\PiInner{\rho_{1} \otimes \cdots \otimes \rho_{\ell-1}}{\psi}\big(S_{\rho, 1}^{\dagger} + \cdots + S_{\rho, \ell-1}^{\dagger}\big) Z_{[\ell-1]}\Big)\\
		&  \, - \frac{1}{\sqrt n} \tr_{1, \dots , \ell-1}\Big(\PiInner{\rho_{1} \otimes \cdots \otimes \rho_{\ell-1}}{\psi}\big(S_{\rho, 1}^{\dagger} + \cdots + S_{\rho, \ell-1}^{\dagger}\big) Y_\ell\Big),
\end{align*}
where as before $\psi(x,y) = \frac{x+y}{2}$.
	We begin with the first term in order to obtain an equivalent expression for $T_\ell$:
	\begin{align*}
	\tr_{1, \dots , \ell-1}\Big(&\PiInner{\rho_{1} \otimes \cdots \otimes \rho_{\ell-1}}{\psi}(S_{\rho, 1}^{\dagger}) Z_{[\ell]} \Big) \\
	&= \tr_{1, \dots , \ell-1, \ell+1, \dots, n}\Big(\PiInner{\rho_{1} \otimes \cdots \otimes \rho_{\ell-1} \otimes \rho_{\ell+1} \otimes \dots \otimes \rho_{n}}{\psi}(S_{\rho, 1}^{\dagger}) \widetilde{X}\Big) \\
		&=   \tr_{1, \dots , \ell-1, \ell+1, \dots, n}\Big(\PiInner{\rho_{1}}{\psi}  (S_{\rho, 1}^{\dagger}) \otimes \rho_{2} \otimes \cdots \otimes \rho_{\ell-1} \otimes \rho_{\ell+1} \otimes \dots \otimes \rho_{n}  \widetilde{X}\Big)  \\
	&=   \tr_{1, \dots , \ell-1, \ell+1, \dots, n}\Big([\ac_{1}, \rho_1]^{\dagger} \otimes \rho_{2} \otimes \cdots \otimes \rho_{\ell-1} \otimes \rho_{\ell+1} \otimes \dots \otimes \rho_{n} \widetilde{X}\Big)  \\
	&= \tr_{1, \dots , \ell-1, \ell+1, \dots, n}\Big(\rho_{1} \otimes \cdots \otimes \rho_{\ell-1} \otimes \rho_{\ell+1} \otimes \dots \otimes \rho_{n} [{\ac_{1}^{\dagger}} , \widetilde{X}]\Big)  \\
	&=  \tr_{1, \dots , \ell-1, \ell+1, \dots, n}\Big(\rho_{1} \otimes \cdots \otimes \rho_{\ell-1} \otimes \rho_{\ell+1} \otimes \dots \otimes \rho_{n} [{\ac_{\ell}^{\dagger}} , \widetilde{X}]\Big)  \\
	&= \frac{1}{\sqrt{n}}\big[{\ac_{\ell}^{\dagger}}, Y_{\ell}\big].
	\end{align*}
Here, in the third line we use the fact that $S_{\rho, 1}^{\dagger}$ acts on only on the first subsystem and commutes with all $\rho_k$ with $k\neq 1$. Also, the penultimate line follows from Proposition~\ref{Basictilde}. The same computation works if we replace  $S_{\rho, 1}^{\dagger}$  with $S_{\rho, k}^{\dagger}$ for  any $1\leq k\leq \ell-1$. Then, we have
	\begin{align*}
	\tr_{1, \dots , \ell-1}\Big(\PiInner{\rho_{1} \otimes \cdots \otimes \rho_{\ell-1}}{\psi}(S_{\rho, 1}^{\dagger} + \cdots + S_{\rho, \ell-1}^{\dagger}) Z_{[\ell]}\Big) 	= \frac{\ell-1}{\sqrt{n}} [{\ac_{\ell}^{\dagger}}, Y_{\ell}].
	\end{align*}
For the second term we have
	\begin{align*}
	 \tr_{1, \dots , \ell-1}\Big(&\PiInner{\rho_{1} \otimes \cdots \otimes \rho_{\ell-1}}{\psi}(S_{\rho, 1}^{\dagger} + \cdots + S_{\rho, \ell-1}^{\dagger}) Z_{[\ell-1]}\Big) \\
	 	&= \sum_{k=1}^{\ell-1} \tr\Big(\PiInner{\rho_{1} \otimes \cdots \otimes \rho_{\ell-1}}{\psi}({S_{\rho, k}}^{\dagger}) \otimes \rho_{\ell}\cdots\otimes  \rho_n \widetilde{X}\Big) \\
	&= \sum_{k=1}^{\ell-1} \tr\Big(\rho_{1} \otimes \dots \otimes \rho_{k-1} \otimes [{\ac_{k}}, \rho_k]^{\dagger} \otimes \rho_{k+1} \otimes \dots \otimes \rho_{n} \widetilde{X}\Big) \\
		&= \sum_{k=1}^{\ell-1} \tr\Big(\rho_{1} \boxplus \dots \boxplus \rho_{k-1} \boxplus [{\ac_{k}}, \rho_k]^{\dagger} \boxplus \rho_{k+1} \boxplus \dots \boxplus \rho_{n} X\Big) \\
				&= \sum_{k=1}^{\ell-1} \frac{1}{\sqrt n}\tr\Big( [\ac , \sigma]^\dagger X\Big) \\
				&= \frac{\ell-1}{\sqrt{n}} {\langle S_{\sigma}, X \rangle}_{\sigma}.
	\end{align*}
Here, the fourth line follows from Proposition~\ref{Basictilde} and the fifth line follows from~\eqref{DeriveConvolution}. 
Finally, for the third term we have
	\begin{align*}
\frac{1}{\sqrt n} \tr_{1, \dots , \ell-1} \Big(\PiInner{\rho_{1} \otimes \cdots \otimes \rho_{\ell-1}}{\psi} & \big(S_{\rho, 1}^{\dagger} + \cdots + S_{\rho, \ell-1}^{\dagger}\big) Y_\ell\Big)\\
	&= \frac{1}{\sqrt{n}} \sum_{k=1}^{\ell-1}  \tr\Big(\PiInner{\rho_{1} \otimes \cdots \otimes \rho_{\ell-1}}{\psi}(S_{\rho, k}^{\dagger} )\Big)  \, Y_{\ell}\\
		&= \frac{1}{\sqrt{n}} \sum_{k=1}^{\ell-1}  \tr\big( [\ac_k, \rho_k]^\dagger\big)  \, Y_{\ell}\\
	&= 0.
	\end{align*}
Putting these together and using the canonical commutation relations~\eqref{CCR},  we find that
	\begin{equation}\label{refNoteT}
		T_{\ell} =\frac{\ell-1}{\sqrt{n}}\big[{\ac_{\ell}^{\dagger}},  Y_{\ell} + {\langle S_{\sigma}, X \rangle}_{\sigma} \ac_{\ell} - {\langle X^\dagger, S_{\sigma} \rangle}_{\sigma} \ac_{\ell}^\dagger \big].
	\end{equation}
	
Next, define	
\begin{align*}
		T'_{\ell} :=& \tr_{1, \dots , \ell-1}\bigg(\PiInner{\rho_{1} \otimes \cdots \otimes \rho_{\ell-1}}{\psi}\big(-S_{\rho, 1} - \cdots - S_{\rho, \ell-1}\big) \Big(Z_{[\ell]} - Z_{[\ell-1]} - \frac{1}{\sqrt{n}} Y_{\ell}\Big)\bigg).
\end{align*}	
Following the same computations as above we also find that
$$		T'_{\ell} =\frac{\ell-1}{\sqrt{n}}\big[\ac_{\ell},  Y_{\ell} + {\langle S_{\sigma}, X \rangle}_{\sigma} \ac_{\ell} - {\langle X^\dagger, S_{\sigma} \rangle}_{\sigma} \ac_{\ell}^\dagger \big].
$$	
Recall that $\tr(\rho_\ell  Y_\ell) = \tr(\rho X)=0$. Also, since $\rho$ is centered, we have $\tr(\rho\ac_\ell)=\tr(\rho \ac_\ell^\dagger)=0$. 
Therefore, we may apply the Poincar\'e inequality on operator $Y_{\ell} + {\langle S_{\sigma}, X \rangle}_{\sigma} \ac_{\ell} - {\langle X^\dagger, S_{\sigma} \rangle}_{\sigma} \ac_{\ell}^\dagger$ to conclude that
	\begin{align} \label{PoinTT}
		{\| T_{\ell} \|}_{2, \rho_{\ell}}^{2} + {\| T'_{\ell} \|}_{2, \rho_{\ell}}^{2}			\geq \lambda_{\rho}   \frac{(\ell-1)^2}{n}  \big\| Y_{\ell} + {\langle S_{\sigma}, X \rangle}_{\sigma} \ac_{\ell}- {\langle X^\dagger, S_{\sigma} \rangle}_{\sigma} \ac_{\ell}^\dagger \big\|_{2, \rho_{\ell}}^{2}.
	\end{align}
To bound the left hand side, by the extension of the Cauchy-Schwarz inequality proven in Lemma~\ref{lem:Cauchy-Scwarz} of Appendix~\ref{CauchyShwarzT} we have \begin{align*}
\max \big\{ \|T_{\ell}\|_{2, \rho_\ell}^2, \|T'_{\ell}\|_{2, \rho_\ell}^2	 \big\}
		& \leq  \big\| S_{\rho, 1} + \cdots + S_{\rho, \ell-1} \big\|^2_{2, \rho_1\otimes \cdots\otimes \rho_{\ell-1}}   \Big\| Z_{[\ell]} - Z_{[\ell-1]} - \frac{1}{\sqrt{n}} Y_{\ell}\Big\|^2_{2, \rho_1\otimes \cdots \otimes \rho_\ell}\nonumber\\
		& = (\ell-1)\big\| S_{\rho}  \big\|_{2, \rho}^2\cdot    \Big\| Z_{[\ell]} - Z_{[\ell-1]} - \frac{1}{\sqrt{n}} Y_{\ell}\Big\|_{2, \rho_1\otimes \cdots \otimes \rho_\ell}^2,
\end{align*}	
where for the second line we use 
$$\langle S_{\rho, k}, S_{\rho, k'}\rangle_{\rho_1\otimes \cdots\otimes \rho_{\ell-1}} = \tr\big(\PiInner{\rho_{k}}{\psi} S_{\rho, k}^\dagger \big)\cdot \tr\big(\PiInner{\rho_{k'}}{\psi} S_{\rho, k'} \big) = \tr\big( [\ac_k, \rho_k]^\dagger \big)\cdot \tr\big( [\ac_{k'}, \rho_{k'}] \big)=0,$$
if $ k\neq k'$.
Comparing these to~\eqref{PoinTT} we arrive at
\begin{align*}
	   \frac{\lambda_{\rho}(\ell-1)}{2n \|S_{\rho}\|_{2, \rho}^2} \cdot \big\| Y_{\ell} + {\langle S_{\sigma}, X \rangle}_{\sigma} \ac_{\ell}  - {\langle X^\dagger, S_{\sigma} \rangle}_{\sigma} \ac_{\ell}^\dagger\big\|_{2, \rho_{\ell}}^{2}\leq     \Big\| Z_{[\ell]} - Z_{[\ell-1]} - \frac{1}{\sqrt{n}} Y_{\ell}\Big\|_{2, \rho_1\otimes \cdots \otimes \rho_\ell}^2,
\end{align*}
as desired.

To prove the theorem in the multimode case we need to define $T_{j, \ell},  T'_{j, \ell}$ for any mode $1\leq j\leq m$. Then, similarly as in the single mode case we have
\begin{align*}
T_{j,\ell} &=\frac{\ell-1}{\sqrt{n}}\big[{\ac_{j,\ell}^{\dagger}},  Y_{\ell} + {\langle S_{\sigma, j}, X \rangle}_{\sigma} \ac_{j,\ell} - {\langle X^\dagger, S_{\sigma} \rangle}_{\sigma} \ac_{\ell}^\dagger \big]\\
&=\frac{\ell-1}{\sqrt{n}}\Big[{\ac_{j,\ell}^{\dagger}},  Y_{\ell} + \sum_{j'=1}^m\Big( {\langle S_{\sigma, j'}, X \rangle}_{\sigma} \ac_{j',\ell} - {\langle X^\dagger, S_{\sigma, j'} \rangle}_{\sigma} \ac_{j', \ell}^\dagger \Big)\Big],
\end{align*}
where the second line follows from canonical commutation relations. 
A similar expression holds for $T'_{j, \ell}$. Then, the proof goes through by applying the Poincar\'e inequality as before.

\end{proof}

\section{Convergence rate for SLD Fisher information distance}  \label{ConvergenceFID}

In this section we give the proof of Theorem~\ref{pseudoMainTheorem}. Here, is a more formal statement of the theorem. 

\begin{theorem} \label{pseudoMainTheorem-v2}
	Let $\rho$ be a centered $m$-mode quantum state that satisfies the quantum Poincar\'e inequality with constant $\lambda_{\rho}>0$. Then, we have
\begin{align}\label{eq:Fisher-inf-bound-exact}
		J\big(\rho^{\boxplus n}\big)\leq 	\frac{1}{1 +   \frac{\lambda_\rho}{4 I(\rho)} (n-1) } J(\rho).
	\end{align}
\end{theorem}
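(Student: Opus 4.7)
My plan follows the Johnson--Barron strategy, implemented through the successive projection lemma of the previous section. By the invariance of quantum convolution under Gaussian unitaries (see~\eqref{UnitaryAndConv}) and of both $\lambda_\rho$ and $I(\rho)$ under such unitaries (see Appendix~\ref{app:PoincareInequality}), I may assume that $\rho$ is in Williamson's form with $\nu_j>1$ for every $j$, and I set $\sigma:=\rho^{\boxplus n}$. Since $J=\sum_j J_j$, it suffices to prove the per-mode bound $J_{j_0}(\sigma)\le J_{j_0}(\rho)/(1+c)$ with $c:=\lambda_\rho(n-1)/(4I(\rho))$ for each fixed $j_0\in\{1,\dots,m\}$, and then sum. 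The natural test operator for this mode is
\[
X := S_{\sigma,j_0}+\frac{1}{\mu_{j_0}}\ac_{j_0},
\]
which satisfies $\tr(\sigma X)=0$ (as $\rho$, and hence $\sigma$, is centered and $\tr([\ac_{j_0},\sigma])=0$) and $\|X\|_{2,\sigma}^2=J_{j_0}(\sigma)$ by~\eqref{eq:J-I-Will}.

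The first step is a Hoeffding--ANOVA style decomposition. I lift $X$ to $\widetilde X$ on $\mathcal H_m^{\otimes n}$ and telescope $\widetilde X=\sum_{\ell=1}^n(Z_{[\ell]}-Z_{[\ell-1]})$ with $Z_{[0]}=0$. Orthogonality of the summands in $\langle\cdot,\cdot\rangle_{\rho^{\otimes n}}$ yields $\|X\|_{2,\sigma}^2=\sum_\ell\|Z_{[\ell]}-Z_{[\ell-1]}\|^2$, and a further Pythagorean split inside each summand (with the orthogonality $Y_\ell/\sqrt n\perp R_\ell$ following from Proposition~\ref{BasicY}), combined with the permutation symmetry of $\widetilde X$ (so that all single-subsystem projections $Y_\ell$ correspond to a common operator $Y$ on $\mathcal H_m$), gives
\[
J_{j_0}(\sigma)\;=\;\|Y\|_{2,\rho}^2\;+\;\sum_{\ell=1}^{n}\|R_\ell\|^2.
\]

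Next I would invoke Lemma~\ref{LowerRtildeLemma} to bound each residual from below by $(\lambda_\rho(\ell-1)/(2nI(\rho)))\|Y+W_\ell\|_{2,\rho}^2$. For the chosen $X$, a direct computation of the coefficients in $W_\ell$ using $\langle S_{\sigma,j},\ac_{j_0}\rangle_\sigma=-\delta_{j,j_0}$, the vanishing $\langle\ac_{j_0}^\dagger,S_{\sigma,j}\rangle_\sigma=0$ (which follows from $[\ac_{j_0},\ac_j]=0$ via cyclicity), and the appropriate cross-mode identities for the SLD score inner products on $\sigma$ in Williamson's form, collapses this correction to $W_\ell=J_{j_0}(\sigma)\ac_{j_0,\ell}$. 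A symmetry-based check using Proposition~\ref{BasicY} also verifies $\langle\ac_{j_0},Y\rangle_\rho=0$, so $\|Y+J_{j_0}(\sigma)\ac_{j_0}\|_{2,\rho}^2=\|Y\|_{2,\rho}^2+\mu_{j_0}J_{j_0}(\sigma)^2$. Summing the successive projection bounds over $\ell$ using $\sum_{\ell=1}^n(\ell-1)=n(n-1)/2$ and combining with the previous display gives
\[
J_{j_0}(\sigma)\;\geq\;(1+c)\,\|Y\|_{2,\rho}^2\;+\;c\,\mu_{j_0}\,J_{j_0}(\sigma)^2.
\]

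The final step is a Cauchy--Schwarz argument. The key identity to establish is
\[
\langle S_{\rho,j_0},Y\rangle_\rho\;=\;I_{j_0}(\sigma),
\]
which, together with $\langle\ac_{j_0},Y\rangle_\rho=0$, yields $\langle S_{\rho,j_0}+\ac_{j_0}/\mu_{j_0},Y\rangle_\rho=J_{j_0}(\sigma)$. Since $\|S_{\rho,j_0}+\ac_{j_0}/\mu_{j_0}\|_{2,\rho}^2=J_{j_0}(\rho)$, Cauchy--Schwarz then gives $\|Y\|_{2,\rho}^2\ge J_{j_0}(\sigma)^2/J_{j_0}(\rho)$. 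Substituting into the previous display, dropping the nonnegative $c\mu_{j_0}J_{j_0}(\sigma)^2$ term, and dividing by $J_{j_0}(\sigma)$ (the claim is trivial when it vanishes) produces $J_{j_0}(\sigma)\le J_{j_0}(\rho)/(1+c)$, and summing over $j_0$ gives~\eqref{eq:Fisher-inf-bound-exact}. I expect the delicate part to be the score identity $\langle S_{\rho,j_0},Y\rangle_\rho=I_{j_0}(\sigma)$: it requires unwinding the SLD inner product via $\PiInner{\rho_\ell}{\psi}(S_{\rho,j_0,\ell}^\dagger)=-[\ac_{j_0,\ell}^\dagger,\rho_\ell]$ (a consequence of $\psi\phi=1$), transferring the commutator through the symmetric lift using Proposition~\ref{Basictilde}(iv), and finally recognizing $\tr(\sigma[\ac_{j_0}^\dagger,S_{\sigma,j_0}])$ as $I_{j_0}(\sigma)$.
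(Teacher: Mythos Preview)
Your strategy is correct and close to the paper's, but with one genuine difference: you apply Lemma~\ref{LowerRtildeLemma} to the \emph{centered} test operator $X=S_{\sigma,j_0}+\mu_{j_0}^{-1}\ac_{j_0}$, so that $\|X\|_{2,\sigma}^2=J_{j_0}(\sigma)$ from the outset, and you then close the argument with a direct ANOVA decomposition plus Cauchy--Schwarz. The paper instead applies the lemma to $X=S_{\sigma,j_0}$ and compensates by working inside the right triangle $\big(\widetilde S_\sigma,\widehat S_\rho,-\mu^{-1}\widehat\ac\big)$ in $\mathbf L_2(\rho^{\otimes n})$; the Pythagorean identities for that triangle produce exactly the two estimates $\|\widetilde S_\sigma-\widehat Y\|^2\le J(\sigma)-J(\sigma)^2/J(\rho)$ and $\|\widehat Y+\mu^{-1}\widehat\ac\|^2\ge J(\sigma)^2/J(\rho)$ that your ANOVA/Cauchy--Schwarz route delivers. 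The two proofs are reparametrizations of each other (your $Y$ is the paper's $Y+\mu_{j_0}^{-1}\ac_{j_0}$), but your packaging is arguably more streamlined: the geometric picture is dispensable, and the per-mode bound $J_{j_0}(\sigma)\le J_{j_0}(\rho)/(1+c)$ pops out immediately. The paper, by retaining the extra term you drop, obtains the slightly sharper intermediate bound $J_{j_0}(\sigma)\le J_{j_0}(\rho)/(1+c\mu_{j_0}I_{j_0}(\rho))$ before invoking Lemma~\ref{lem:Fisher-inf-bounds}, but both routes arrive at~\eqref{eq:Fisher-inf-bound-exact}.

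One slip to fix: the intermediate identity you state, $\langle S_{\rho,j_0},Y\rangle_\rho=I_{j_0}(\sigma)$, is not correct for your choice of $X$. Because $\widetilde{\ac_{j_0}}=\frac{1}{\sqrt n}\sum_\ell \ac_{j_0,\ell}$ and $\rho$ is centered, the single-subsystem projection of $\sqrt n\,\widetilde{\ac_{j_0}}$ is $\ac_{j_0}$; hence your $Y$ equals the paper's $Y'+\mu_{j_0}^{-1}\ac_{j_0}$, where $Y'$ is the projection of $\sqrt n\,\widetilde{S_{\sigma,j_0}}$. Using $\langle S_{\rho,j_0},Y'\rangle_\rho=I_{j_0}(\sigma)$ (this is the identity your closing paragraph actually sketches) together with $\langle S_{\rho,j_0},\ac_{j_0}\rangle_\rho=-1$ gives $\langle S_{\rho,j_0},Y\rangle_\rho=J_{j_0}(\sigma)$, not $I_{j_0}(\sigma)$. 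Since $\langle \ac_{j_0},Y\rangle_\rho=0$, your final claimed identity $\langle S_{\rho,j_0}+\mu_{j_0}^{-1}\ac_{j_0},Y\rangle_\rho=J_{j_0}(\sigma)$ is nevertheless right, and the proof goes through unchanged.
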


The rest of this section is devoted to the proof of this theorem.
Similarly to the previous section, we first present the proof in the single-mode case. Later, we will explain modifications needed in the multimode case.  We also assume with no loss of generality that $\rho$ is in Williamson's form. 

It is helpful to keep a geometric picture in mind. Observe that by~\eqref{eq:gamma-d-conv} we have $\|\ac\|_{2, \sigma}^2 = \|\ac\|_{2, \rho}^2 = \mu$. Next, $\widetilde S_\sigma$ is an operator on the tensor product space, which by  Proposition~\ref{Basictilde} satisfies
\begin{align}\label{eq:J-sigma-0}
J(\sigma) = I(\sigma) - \frac 1\mu =\Big\|S_{\sigma} + \frac{1}{\mu} \ac\Big\|_{2, \sigma}^2  = \bigg\|\widetilde S_{\sigma} + \frac{1}{\mu} \frac{\ac_1+\cdots +\ac_n}{\sqrt n}\bigg\|_{2, \rho^{\otimes n}}^2.
\end{align}
Our goal is to show that $J(\sigma)$ is small. That is, we want to prove that $\widetilde S_{\sigma}$ is close to $-\frac{1}{\mu}\frac{\ac_1+\cdots+\ac_n}{\sqrt n}$. On the other hand, it is easily verified that 
\begin{align}\label{eq:J-rho-0}
J(\rho)= I(\rho) - \frac 1\mu = \Big\|S_{\rho} + \frac{1}{\mu} \ac\Big\|_{2, \rho}^2 =  \bigg\|\frac{S_{\rho, 1} + \cdots +S_{\rho, n}}{\sqrt n} + \frac{1}{\mu} \frac{\ac_{\rho, 1} + \cdots +\ac_{\rho, n}}{\sqrt n} \bigg\|_{2, \rho^\otimes n}^2.
\end{align}
The three points 
$$\widetilde S_{\sigma},\quad  \widehat S_{\rho}:= \frac{S_{\rho, 1} + \cdots +S_{\rho, n}}{\sqrt n}, \qquad  -\frac{1}{\mu}\widehat \ac:=-\frac{1}{\mu} \frac{\ac_{\rho, 1} + \cdots +\ac_{\rho, n}}{\sqrt n},$$ 
form a triangle in the Euclidean space of operators acting on the tensor product space equipped with inner product $\langle \cdot, \cdot \rangle_{\rho^{\otimes n}}$. Moreover, the former two points belong to the subspace $\mathcal W$ of operators that are of the form $T_1 + \cdots + T_n$ where $T_\ell$ acts only on the $\ell$-th subsystem and are orthogonal to the identity operator:
\[
	\mathcal{W} := \big\{ T_1 + \cdots + T_n:\, T_\ell \text{ acts only on } \ell\text{-th subsystem and }  {\langle \mathbb I_\ell , T_\ell \rangle}_{\rho_\ell} = 0, \forall \ell\big\}.
\]
See Figure~\ref{figure}.

\begin{figure}[t]
  \begin{minipage}{.5\textwidth}
    \centering
    \includegraphics[width=3.4in]{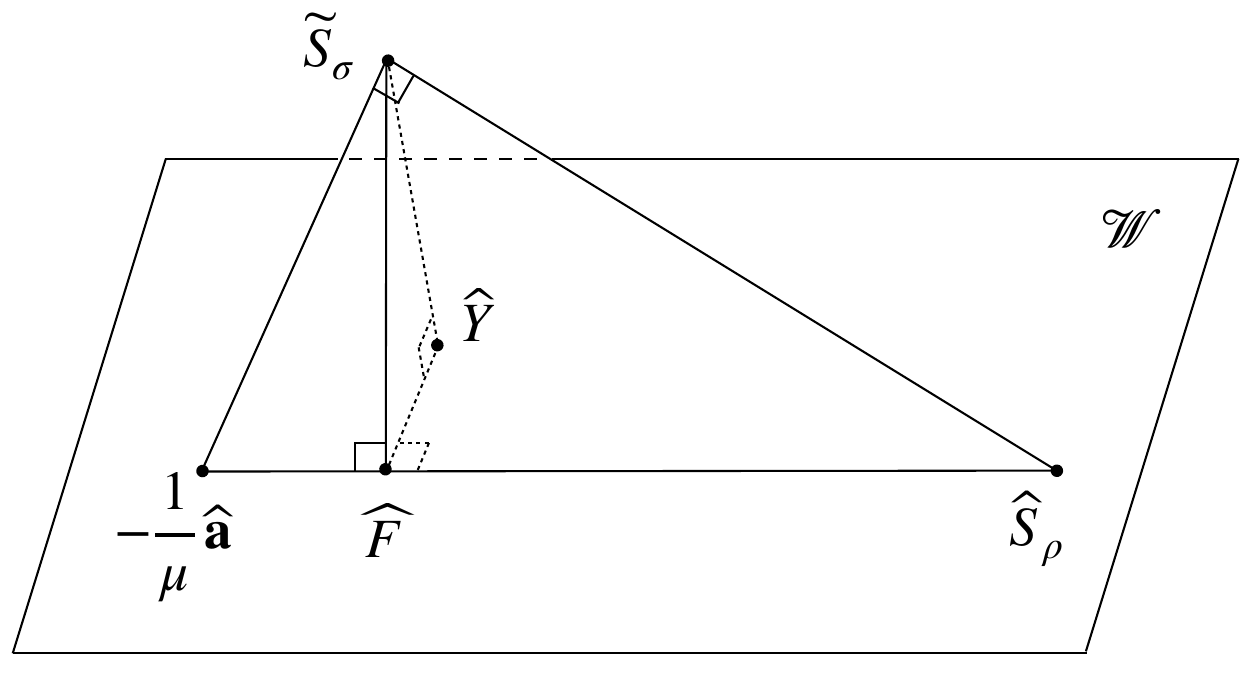}
  \end{minipage}
  \begin{minipage}{.49\textwidth}
   {\footnotesize 
    \begin{align*}
    	& \\
	&\\
        &\big\|\widetilde S_\sigma  + \frac{1}{\mu} \widehat \ac\big\|_{2, \rho^{\otimes n}}^2 = J(\sigma)\\
        &\big\|\widehat S_\rho  + \frac{1}{\mu} \widehat \ac\big\|_{2, \rho^{\otimes n}}^2 = J(\rho)\\
        &\big\|\widehat S_\rho  - \widehat S_\rho\big\|_{2, \rho^{\otimes n}}^2 = J(\rho)- J(\sigma)\\
        & \big\|\widehat F  + \frac{1}{\mu} \widehat \ac \big\|_{2, \rho^{\otimes n}}^2 = \frac{J(\sigma)^2}{J(\rho)}\\
        &\big\| \widetilde{S}_{\sigma} -  \widehat Y \big\|_{2, \rho^{\otimes n}}^2\leq J(\sigma)-\frac{J(\sigma)^2}{ J(\rho)}\\
        & \big\|  \widehat Y +\frac 1\mu \widehat \ac\big\|_{2, \rho^{\otimes n}}^2\geq  \frac{J(\sigma)^2}{J(\rho)}
    \end{align*}
    }
  \end{minipage}%
\caption{\small Geometric illustration of operators acting on the $n$-fold tensor product space. Here,
$S_{\sigma}$ is the SLD score operator of $\sigma=\rho^{\boxplus n}$ and $\widetilde S_\sigma$ its image under the symmetric lifting map. $\widehat S_\rho = \frac{1}{\sqrt n}\big( S_{\rho, 1} + \cdots + S_{\rho, n} \big)$ and  $\widehat \ac = \frac{1}{\sqrt n}\big( \ac_{ 1} + \cdots + \ac_{n} \big)$. The projection of $\widetilde S_\sigma$ on the line segment $(-\frac {1}{\mu}\widehat \ac, \widehat S_\rho)$ is denoted by  $\widehat F$ and takes the form $\widehat F = \frac{1}{\sqrt n} \big(F_1+\cdots + F_n\big)$. The projection of $\widetilde S_{\sigma}$ on  the subspace $\mathcal W$ consisting of operators of the form $T_1+\cdots + T_n$, where  $T_\ell$ acts only on the $\ell$-th subsystem and is orthogonal to the identity operator, is denoted by $\widehat Y = \frac{1}{\sqrt n}  \big(Y_1+\cdots+Y_n\big)$.  
} 
\label{figure}
\end{figure}

By~\eqref{eq:J-sigma-0} and~\eqref{eq:J-rho-0} the squared length of two line segments of the triangle $\big(\widetilde S_\sigma, \widehat S_\rho, -\frac{1}{\mu} \widehat \ac\big)$ are $J(\rho), J(\sigma)$. It will be fruitful to calculate the squared length of the third line segment. To this end, using Proposition~\ref{Basictilde} we first compute
\begin{align*}  
	\Big\langle	 \widehat S_\rho , \widetilde{S}_{\sigma} \Big\rangle_{\rho^{\otimes n}} 
	&= \sqrt{n}\, \big\langle S_{\rho, 1} , \widetilde{S}_{\sigma} \big\rangle_{\rho^{\otimes n}}  \\
	&= \sqrt{n} \, \tr\big({[{\ac_{1}}, \rho_1]}^{\dagger} \otimes \rho_2 \cdots \otimes \rho_{n}  \widetilde{S}_{\sigma}\big)  \\
	&= \sqrt{n}\, \tr\big({[{\ac_{1}}, \rho_1]}^{\dagger} \boxplus \dots \boxplus \rho_{n} \, S_{\sigma}\big)  \\
	&= \tr\big( {[\ac, \sigma]}^{\dagger} S_{\sigma}\big) \\
	&= {\|S_{\sigma}\|}_{2, \sigma}^2 \\
	& = I(\sigma),
	\end{align*}
where the fourth line follows from~\eqref{DeriveConvolution}. Therefore,
\begin{align*}
\big\| \widetilde{S}_{\sigma} - \widehat S_\rho \big\|_{2, \rho^{\otimes n}}^{2} & = \big\| \widetilde{S}_{\sigma}\big\|_{2, \rho^{\otimes n}}^{2} + \big\| \widehat S_\rho \big\|_{2, \rho^{\otimes n}}^{2} - 2 I(\sigma) \\
& = I(\rho) - I(\sigma)\\
& = J(\rho) - J (\sigma).
\end{align*}
Comparing to~\eqref{eq:J-sigma-0} and~\eqref{eq:J-rho-0} and using the Pythagorean theorem we realize that, as depicted in Figure~\ref{figure}, the three points $\widetilde S_{\sigma}, \widehat  S_\rho$ and $-\frac{1}{\mu} \widehat  \ac$ form a \emph{right triangle}. 

Consider the closest point to $\widetilde S_\sigma$ on the line segment $\big(\widehat S_\rho, -\frac{1}{\mu} \widehat\ac \big)$. Recall that this line segment belongs to $\mathcal W$. Then, this closest point is of the form 
$$\widehat F:=\frac{F_{1} + \cdots +F_{n}}{\sqrt n},$$
where $F_\ell$ acts only on the $\ell$-th subsystem. Note that by symmetry, $F_{1}, \dots ,F_{n}$ are the same operators but acting on different subsystems.
Since $\big(\widetilde S_{\sigma}, \widehat S_\rho, -\frac{1}{\mu} \widehat \ac\big)$ is a right triangle, by elementary Euclidian geometry we have\footnote{This fact can also be proven by a simple optimization problem corresponding to the definition of $ \frac{F_1 + \cdots +F_n}{\sqrt n}$ as in~\cite{JB}.}
\[
	\Big\| \widehat  F+ \frac{1}{\mu}\widehat \ac \Big\|_{2, \rho^{\otimes n}}  =    \frac{\Big\|\widetilde S_{\sigma} + \frac{1}{\mu} \widehat \ac\Big\|_{2, \rho^{\otimes n}}^2  }{\Big\|\widehat S_\rho + \frac{1}{\mu} \widehat \ac \Big\|_{2, \rho^{\otimes n}}} =  \frac{J(\sigma)}{\sqrt {J(\rho)}}.
\]

Let
\[
	Y_{\ell} = \sqrt{n} \, \tr_{\neg \ell}\big(\rho_1 \otimes \dots \otimes \rho_{\ell-1} \otimes \rho_{\ell+1} \dots \otimes \rho_{n} \widetilde S_\sigma\big).
\]
Then, by Proposition~\ref{BasicY}, the operator
$$\widehat Y:=\frac{Y_1+\cdots + Y_n}{\sqrt n},$$ 
is the closest point to $\widetilde S_\sigma$ on the hyperplane $\mathcal W$, and the line segment $(\widetilde S_\sigma, \widehat  Y)$ is perpendicular to $\mathcal W$. Now, since $\widehat F$ also belongs to $\mathcal W$, by Pythagorean's theorem we have
\begin{align}\label{eq:J-S-F-Y-bound}
{\big\| \widetilde{S}_{\sigma} -  \widehat Y \big\|}_{2, \rho^{\otimes n}}^2 & \leq {\big\| \widetilde{S}_{\sigma} -  \widehat F \big\|}_{2, \rho^{\otimes n}}^2 \nonumber\\ 
& = \Big\| \widetilde S_\sigma + \frac{1}{\mu}\widehat \ac  \Big\|_{2, \rho^{\otimes n}}^2 - \Big\| \widehat F + \frac{1}{\mu}\widehat \ac  \Big\|_{2, \rho^{\otimes n}}^2\nonumber\\ 
&=J(\sigma)-\frac{J(\sigma)^2}{ J(\rho)}.
\end{align}
Therefore, applying the Pythagorean theorem for the right triangle $\big(\widetilde S_\sigma,  \widehat Y, -\frac 1\mu\widehat \ac\big)$ and using the above inequality we obtain 
\begin{equation} \label{RGUpperBound}
	\Big\|  \widehat Y +\frac 1\mu \widehat \ac\Big\|_{2, \rho^{\otimes n}}^2\geq  \frac{J(\sigma)^2}{J(\rho)}.
\end{equation}

Next, to lower bound the left hand side of~\eqref{eq:J-S-F-Y-bound}, we use Lemma~\ref{LowerRtildeLemma} for the choice of 
$$X = S_\sigma.$$
We note that $\tr(\sigma S_\sigma) = \tr([\ac, \sigma])=0$ and $\|S_\sigma\|_\sigma^2=I(\sigma)<+\infty$. Therefore,
\begin{align}\label{eq:apply-LowerRtildeLemma-S-sigma}
	   \frac{\lambda_{\rho}(\ell-1)}{2n I(\rho)} \cdot \big\| Y_{\ell} + I(\sigma) \ac_{\ell}  -\langle S_{\sigma}^\dagger, S_\sigma\rangle_\sigma \ac^\dagger_{\ell} \big\|_{2, \rho_{\ell}}^{2}\leq     \Big\| Z_{[\ell]} - Z_{[\ell-1]} - \frac{1}{\sqrt{n}} Y_{\ell}\Big\|_{2, \rho_1\otimes \cdots \otimes \rho_\ell}^2,
\end{align}
for every $1\leq \ell\leq n$, where
\[
	Z_{[\ell]} := \tr_{\ell+1, \dots, n}\big(\rho_{\ell+1} \otimes \dots \otimes \rho_n \widetilde S_\sigma\big).
\]
We need to calculate the right hand side of~\eqref{eq:apply-LowerRtildeLemma-S-sigma}. To  this end, we compute:
	\begin{align*}
		\big\langle Z_{[\ell]} , Z_{[\ell-1]} \big\rangle_{\rho_{1} \otimes \dots \otimes \rho_{\ell}} 
		& = \frac{1}{2} \tr\big(\rho_{1} \otimes \dots \otimes \rho_{\ell} Z_{[\ell]}^{\dagger} Z_{[\ell-1]}\big)  + \frac 12 \tr\big( Z_{[\ell]}^{\dagger} \rho_{1} \otimes \dots \otimes \rho_{\ell} Z_{[\ell-1]}\big) \\
		&= \frac{1}{2} \tr_{1, \dots , \ell-1}\big(\rho_{1} \otimes \cdots \otimes \rho_{\ell-1} \tr_{\ell}(\rho_{\ell} Z_{[\ell]}^{\dagger}) Z_{[\ell-1]}\big) \\
		&\quad \, +\frac12 \tr_{1, \dots , \ell-1}\big( \tr_{\ell}( Z_{[\ell]}^{\dagger}\rho_{\ell}) \rho_{1} \otimes \cdots \otimes \rho_{\ell-1} Z_{[\ell-1]}\big) \\
		&= \frac{1}{2} \tr_{1, \dots , \ell-1}\big(\rho_{1} \otimes \cdots \otimes \rho_{\ell-1} Z_{[\ell-1]}^{\dagger} Z_{[\ell-1]}\big) \\
		&\quad \, + \frac 12 \tr_{1, \dots , \ell-1}\big( Z_{[\ell-1]}^{\dagger} \rho_{1} \otimes \cdots \otimes \rho_{\ell-1} Z_{[\ell-1]}\big) \\
		&= \big\| Z_{[\ell-1]} \big\|_{2, \rho_{1} \otimes \cdots \otimes \rho_{\ell-1}}^{2}.
	\end{align*}
By a similar argument we also have 
\begin{align*}
		\big\langle Z_{[\ell]},  Y_{\ell} \big\rangle_{\rho_{1} \otimes \dots \otimes \rho_{\ell}} = \frac{1}{\sqrt{n}} {\| Y_{\ell} \|}_{2, \rho_\ell}^{2}.
	\end{align*}
Finally, since $Z_{[\ell-1]}$ and $Y_{\ell}$ act on separate subsystems and $\tr\big( \rho^{\otimes n} \widetilde S_\sigma\big)=0$ we have
	\begin{align*}
		\big\langle Z_{[\ell-1]},  Y_{\ell} \big\rangle_{\rho_{1} \otimes \dots \otimes \rho_{\ell}}  & = 
		\big\langle Z_{[\ell-1]},  \mathbb  I_{[\ell-1]}  \big\rangle_{\rho_{1} \otimes \dots \otimes \rho_{\ell-1}} \cdot\langle \mathbb I_\ell , Y_\ell\rangle_{\rho_\ell}\\
		&=  \tr\big(\rho_1\otimes \cdots \otimes \rho_{\ell-1}Z^\dagger_{[\ell-1]}\big)\cdot \tr(\rho_\ell  Y_\ell) \\
		&= 0. 
	\end{align*}
Therefore,
	\begin{align*}
		\Big\| Z_{[\ell]} - &Z_{[\ell-1]} - \frac{1}{\sqrt{n}} Y_{\ell}\Big\|_{2, \rho_{1} \otimes \dots \otimes \rho_{\ell}}^{2} \\
		&= {\| Z_{[\ell]} \|}_{2, \rho_{1} \otimes \dots \otimes \rho_{\ell}}^{2} - {\| Z_{[\ell-1]} \|}_{2, \rho_{1} \otimes \cdots \otimes \rho_{\ell-1}}^{2} 
+\frac{1}{n} \| Y_{\ell} \|_{2, \rho_\ell}^{2} - \frac{2}{\sqrt{n}} {\| Y_{\ell} \|}_{2, \rho_\ell}^{2}\\
		&= {\| Z_{[\ell]} \|}_{2, \rho_{1} \otimes \dots \otimes \rho_{\ell}}^{2} - {\| Z_{[\ell-1]} \|}_{2, \rho_{1} \otimes \cdots \otimes \rho_{\ell-1}}^{2} +\frac{1}{n} \| Y_{1} \|_{2, \rho_1}^{2} - \frac{2}{\sqrt{n}} {\| Y_{1} \|}_{2, \rho_1}^{2}.
	\end{align*}
Using this in~\eqref{eq:apply-LowerRtildeLemma-S-sigma} and summing over $\ell$ imply that
\begin{align}\label{eq:sum-J-sigma-J-rho-bound}
	   \frac{\lambda_{\rho}(n-1)}{4 I(\rho)} \cdot \big\| Y_{1} + I(\sigma) \ac_{1}  -\langle S_{\sigma}^\dagger, S_\sigma\rangle_\sigma \ac^\dagger_{1} \big\|_{2, \rho_{1}}^{2} &\leq     \|\widetilde S_\sigma\|_{2, \rho^{\otimes n}}^2 + \| Y_{1} \|_{2, \rho_1}^{2}- 2\sqrt n {\| Y_{1} \|}_{2, \rho_1}^{2}\nonumber\\
	   & = \Big\| \widetilde{S}_{\sigma} -  \widehat Y \Big\|_{2, \rho^{\otimes n}}^2\nonumber\\
	   &\leq J(\sigma)-\frac{J(\sigma)^2}{ J(\rho)},
\end{align}
where the last inequality follows from~\eqref{eq:J-S-F-Y-bound}.
Our final step is to estimate the left hand side of the above inequality. 
We first note that
$$\langle Y_1, \ac_1^\dagger\rangle_{\rho_1} = \langle \widehat Y, \widehat \ac^\dagger\rangle_{\rho^{\otimes n}}  =\langle \widetilde S_\sigma, \widehat \ac^\dagger\rangle_{\rho^{\otimes n}} = \langle S_\sigma, \ac^\dagger\rangle_\sigma  = \tr\big([\ac, \sigma]^\dagger \ac^\dagger\big) = \tr\big(\sigma [\ac^\dagger, \ac^\dagger]\big)=0.$$
Moreover, since $\rho$ is in Williamson's form and the covariance matrix of $\rho$ is diagonal, we have $\langle  \ac_1, \ac_1^\dagger\rangle_{\rho_1}=0$. As a result,
$$\big\| Y_{1} + I(\sigma) \ac_{1}  -\langle S_{\sigma}^\dagger, S_\sigma\rangle_\sigma \ac^\dagger_{1} \big\|_{2, \rho_{1}}^{2}\geq \big\| Y_{1} + I(\sigma) \ac_{1}  \big\|_{2, \rho_{1}}^{2}.$$
On the other hand, we have
\begin{align*}
		\big\| Y_{1} + I(\sigma){\ac_{1}}  \big\|_{2, \rho_1}^{2}
		&= {\Big\| Y_{1} + \frac{1}{\mu}\ac_{1} + \Big(I(\sigma)- \frac{1}{\mu} \Big) {\ac_{1}} \Big\|}_{2, \rho_1}^{2} \\
		&= {\Big\| Y_{1} + \frac{1}{\mu}\ac_1 \Big\|}_{2, \rho_1}^{2} + J(\sigma)^2{\|{\ac_{1}}\|}_{2, \rho_1}^{2}  
		 + 2 J(\sigma) \text{Re}\Big\langle Y_{1} + \frac{1}{\mu}\ac_1, \ac_{1} \Big\rangle_{\rho_1} \\
		&= \Big\| Y_{1} + \frac{1}{\mu} {\ac_{1}} \Big\|_{2, \rho_1}^{2} + \mu J(\sigma)^2,
\end{align*}
where in the last line we use~\eqref{eq:S-a-inner-prod} and Proposition~\ref{Basictilde} to conclude that 
 \begin{align*}
 \Big\langle Y_{1} + \frac{1}{\mu}\ac_1, \ac_{1} \Big\rangle_{\rho_1} & = \Big\langle \widehat Y + \frac{1}{\mu} \widehat \ac, \widehat \ac \Big\rangle_{\rho^{\otimes n}}
 = \Big\langle \widetilde S_\sigma + \frac{1}{\mu} \widehat \ac, \widehat \ac \Big\rangle_{\rho^{\otimes n}}
= \Big\langle S_\sigma + \frac{1}{\mu}\ac, \ac \Big\rangle_{\sigma}
 = 0.
\end{align*}
Therefore, by~\eqref{RGUpperBound} we have
\begin{align*}
\big\| Y_{1} + I(\sigma){\ac_{1}}  \big\|_{2, \rho_1}^{2} &  = \Big\| Y_{1} + \frac{1}{\mu}{\ac_{1}} \Big\|_{2, \rho_1}^{2} + \mu J(\sigma)^2\\
& = \Big\|   \widehat Y + \frac{1}{\mu} \widehat \ac \Big\|_{2, \rho^{\otimes n}}^2 + \mu J(\sigma)^2\\
&\geq \frac{J(\sigma)^2}{J(\rho)}+ \mu J(\sigma)^2\\
& = \mu I(\rho) \frac{J(\sigma)^2}{J(\rho)}.
\end{align*}
Using this bound in~\eqref{eq:sum-J-sigma-J-rho-bound} yields 
$$\frac{\lambda_\rho \mu (n-1)}{4} \frac{J(\sigma)^2}{J(\rho)} \leq J(\sigma) - \frac{J(\sigma)^2}{J(\rho)},$$
that is equivalent to 
$$J(\sigma) \leq \frac{1}{1+ \frac{\lambda_\rho \mu }{4}(n-1)} J(\rho).$$
By Lemma~\ref{lem:Fisher-inf-bounds} this implies the desired inequality~\eqref{eq:Fisher-inf-bound-exact}.

In the multimode case, we need to apply Lemma~\ref{LowerRtildeLemma} on $X=S_{ \sigma, j}$ for every $1\leq j\leq m$. This gives us
\begin{align*}
	\frac{\lambda_{\rho} (\ell-1)}{ n I(\rho) } \bigg\| Y_{j,\ell} + \sum_{j'=1}^m\Big( {\langle S_{\sigma,j'}, S_{\sigma, j} \rangle}_{\sigma} \ac_{j',\ell}  & -\langle S_{\sigma, j}^\dagger, S_{\sigma, j'}\rangle_\sigma \ac^\dagger_{j',\ell}\Big) \bigg\|_{2, \rho_{\ell}}^{2} \\
	&\leq 	\Big\| Z_{j, [\ell]} - Z_{j, [\ell-1]} - \frac{1}{\sqrt{n}} Y_{j, \ell}\Big\|_{2, \rho_1 \otimes \dots \otimes \rho_{\ell}}^{2},
\end{align*}
which is slightly different from what we had in the single-mode case.  Nevertheless, since $\rho$ is in Williamson's form, we have $\langle \ac_j, \ac_{j'}\rangle_\sigma = \langle \ac_j, \ac_{j'}^\dagger\rangle_\sigma=\langle \ac^\dagger_j, \ac^\dagger_{j'}\rangle_\sigma= 0$ if $j\neq j'$. We also have
$$\langle Y_{j, \ell} , \ac_{j', \ell} \rangle_\sigma= \langle S_{\sigma, j}, \ac_{j', \ell}\rangle_\sigma = \tr([\ac_j, \sigma]^\dagger \ac_{j'})= \tr(\sigma [\ac_j^\dagger, \ac_{j'}])=0,$$
and similarly $\langle Y_{j, \ell} , \ac^\dagger_{j', \ell} \rangle_\sigma=0$.
Hence,
\begin{align*}
\bigg\| Y_{j,\ell} + \sum_{j'=1}^m\Big( {\langle S_{\sigma,j'}, S_{\sigma, j} \rangle}_{\sigma} \ac_{j',\ell} & -\langle S_{\sigma, j}^\dagger, S_{\sigma, j'}\rangle_\sigma \ac^\dagger_{j',\ell}\Big) \bigg\|_{2, \rho_{\ell}}^{2} \\
&\geq \big\| Y_{j,\ell} +  I_j(\sigma) \ac_{j,\ell}  - \langle S_{\sigma, j}^\dagger, S_{\sigma, j}\rangle_\sigma \ac_{j, \ell}^\dagger \big\|_{2, \rho_{\ell}}^{2}\\
&\geq \big\| Y_{j,\ell} +  I_j(\sigma) \ac_{j,\ell}  \big\|_{2, \rho_{\ell}}^{2},
\end{align*}
which is the same as what we get in the single-mode case. Continuing the proof, at the end we arrive at
$$J_j(\sigma) \leq \frac{1}{1+ \frac{\lambda_\rho \mu_j I_j(\rho)}{4 I(\rho)} (n-1) } J_j(\rho).$$
Once again using Lemma~\ref{lem:Fisher-inf-bounds} and summing over $j$ we obtain the desired bound.

\section{Final remarks} \label{secCo}

In this paper we proved three results regarding the convergence rate in quantum CLTs. Theorem~\ref{TraceMainTheoremM} involves quantum CLT in terms of trace distance. The convergence rate of $\mathcal O(n^{-1/2})$ established in this theorem seems tight in the case of $m=1$, yet it is desirable to relax the assumption in the multimode case and prove the same convergence rate under the assumption of finiteness of a lower-order moment, particularly third order. To this end, the idea of \emph{trunction} already applied in the classical case may be useful~\cite{BhRa}.  

In Theorem~\ref{MainTheorem} we proved the convergence rate of $\mathcal O(n^{-1})$ for the entropic quantum CLT. This convergence rate seems optimal, but under the strong assumption that the quantum state satisfies a Poincar\'e inequality. It is desirable to relax this assumption. To this end, it would be interesting to come up with quantum generalizations of ideas in~\cite{BCG} based on which the same convergence rate is proven in the classical case under much weaker assumptions.  

We also used the notion of SLD quantum Fisher information and proved Theorem~\ref{pseudoMainTheorem-v2} about the convergence rate of quantum CLT when the figure of merit is the SLD Fisher information distance. This theorem is again based on the assumption of Poincar\'e inequality. It is desirable to prove the same convergence rate under weaker assumptions.

The example of Subsection~\ref{subsec:opt-conv-rate} establishes the optimality of the convergence rate in Theorem~\ref{TraceMainTheoremM}  even under the stronger assumption that all the moments of the underlying quantum state are finite. Moreover, this example together with Pinsker's inequality confirm that the convergence rate of $\mathcal O(n^{-1})$ established in Theorem~\ref{MainTheorem} cannot be improved at least under the weaker assumption that all the moments are finite.

\paragraph{Acknowledgements.} The authors are thankful to Milad M. Goodarzi for several fruitful discussions and for his comments on an early version of the paper. This work is supported by the NRF grant NRF2021-QEP2-02-P05 and the Ministry of Education, Singapore, under the Research Centres of Excellence program.

\paragraph{Funding.} This work is supported by the NRF grant NRF2021-QEP2-02-P05 and the Ministry of
Education, Singapore, under the Research Centres of Excellence program.

{\small
\bibliographystyle{abbrvurl} 
\bibliography{CLTBIB}

\begin{thebibliography}{10}

\bibitem{ABBN1}
S.~Artstein, K.~Ball, F.~Barthe, and A.~Naor.
\newblock Solution of {Shannon}'s problem on the monotonicity of entropy.
\newblock {\em Journal of the American Mathematical Society}, 17(4):975--982,
  2004.
\newblock \href {https://doi.org/10.1090/S0894-0347-04-00459-X}
  {\path{doi:10.1090/S0894-0347-04-00459-X}}.

\bibitem{ABBN2}
S.~Artstein, K.~M. Ball, F.~Barthe, and A.~Naor.
\newblock On the rate of convergence in the entropic central limit theorem.
\newblock {\em Probability Theory and Related Fields}, 129(3):381--390, 2004.
\newblock \href {https://doi.org/10.1007/s00440-003-0329-4}
  {\path{doi:10.1007/s00440-003-0329-4}}.

\bibitem{bardet2017estimating}
I.~Bardet.
\newblock Estimating the decoherence time using non-commutative functional
  inequalities.
\newblock {\em arXiv preprint arXiv:1710.01039}, 2017.
\newblock \href {https://doi.org/10.48550/arXiv.1710.01039}
  {\path{doi:10.48550/arXiv.1710.01039}}.

\bibitem{Barron1986}
A.~R. Barron.
\newblock Entropy and the central limit theorem.
\newblock {\em The Annals of probability}, pages 336--342, 1986.
\newblock \href {https://doi.org/10.1214/aop/1176992632}
  {\path{doi:10.1214/aop/1176992632}}.

\bibitem{CRLimitTheorem}
S.~Becker, N.~Datta, L.~Lami, and C.~Rouz{\'e}.
\newblock Convergence rates for the quantum central limit theorem.
\newblock {\em Communications in Mathematical Physics}, 383:223--279, 2021.
\newblock \href {https://doi.org/10.1007/s00220-021-03988-1}
  {\path{doi:10.1007/s00220-021-03988-1}}.

\bibitem{Beigi-RahimiKeshari2023}
S.~Beigi and S.~Rahimi-Keshari.
\newblock A meta logarithmic-sobolev inequality for phase-covariant gaussian
  channels.
\newblock {\em Annales Henri Poincar{\'e}}, pages 1--42, 2024.
\newblock \href {https://doi.org/10.1007/s00023-024-01487-2}
  {\path{doi:10.1007/s00023-024-01487-2}}.

\bibitem{BhRa}
R.~N. Bhattacharya and R.~R. Rao.
\newblock {\em Normal approximation and asymptotic expansions}.
\newblock SIAM, 2010.
\newblock \href {https://doi.org/10.1137/1.9780898719895}
  {\path{doi:10.1137/1.9780898719895}}.

\bibitem{BCG}
S.~G. Bobkov, G.~P. Chistyakov, and F.~G{\"o}tze.
\newblock {Berry}--{Esseen} bounds in the entropic central limit theorem.
\newblock {\em Probability Theory and Related Fields}, 159(3-4):435--478, 2014.
\newblock \href {https://doi.org/10.1007/s00440-013-0510-3}
  {\path{doi:10.1007/s00440-013-0510-3}}.

\bibitem{BuGuJaffe2023}
K.~Bu, W.~Gu, and A.~Jaffe.
\newblock Quantum entropy and central limit theorem.
\newblock {\em Proceedings of the National Academy of Sciences},
  120(5):e2304589120, 2023.
\newblock \href {https://doi.org/10.1073/pnas.2304589120}
  {\path{doi:10.1073/pnas.2304589120}}.

\bibitem{busch2016quantum}
P.~Busch, P.~Lahti, J.-P. Pellonp{\"a}{\"a}, and K.~Ylinen.
\newblock {\em Quantum measurement}, volume~23.
\newblock Springer, 2016.
\newblock \href {https://doi.org/10.1007/978-3-319-43389-9}
  {\path{doi:10.1007/978-3-319-43389-9}}.

\bibitem{CarboneSasso2008}
R.~Carbone and E.~Sasso.
\newblock Hypercontractivity for a quantum {Ornstein-Uhlenbeck} semigroup.
\newblock {\em Probability Theory and Related Fields}, 140(3):505--522, 2008.
\newblock \href {https://doi.org/10.1007/s00440-007-0073-2}
  {\path{doi:10.1007/s00440-007-0073-2}}.

\bibitem{CarlenMass2}
E.~A. Carlen and J.~Maas.
\newblock Gradient flow and entropy inequalities for quantum {Markov}
  semigroups with detailed balance.
\newblock {\em Journal of Functional Analysis}, 273(5):1810--1869, 2017.
\newblock \href {https://doi.org/10.1016/j.jfa.2017.05.003}
  {\path{doi:10.1016/j.jfa.2017.05.003}}.

\bibitem{Courtade}
T.~A. Courtade.
\newblock Monotonicity of entropy and {Fisher} information: a quick proof via
  maximal correlation.
\newblock {\em Communications on Information and Systems}, 16:111--115, 2016.
\newblock \href {https://doi.org/10.4310/CIS.2016.V16.N2.A2}
  {\path{doi:10.4310/CIS.2016.V16.N2.A2}}.

\bibitem{CH}
C.~D. Cushen and R.~L. Hudson.
\newblock A quantum-mechanical central limit theorem.
\newblock {\em Journal of Applied Probability}, 8(3):454--469, 1971.
\newblock \href {https://doi.org/10.2307/3212170} {\path{doi:10.2307/3212170}}.

\bibitem{Hall-Book}
B.~C. Hall.
\newblock {\em Quantum Theory for Mathematicians}.
\newblock Springer New York, NY, 2013.
\newblock \href {https://doi.org/10.1007/978-1-4614-7116-5}
  {\path{doi:10.1007/978-1-4614-7116-5}}.

\bibitem{Hayashi12002Fisher}
M.~Hayashi.
\newblock Two quantum analogues of {Fisher} information from a large deviation
  viewpoint of quantum estimation.
\newblock {\em Journal of Physics A: Mathematical and General}, 35(36):7689,
  2002.
\newblock \href {https://doi.org/10.1088/0305-4470/35/36/302}
  {\path{doi:10.1088/0305-4470/35/36/302}}.

\bibitem{Holevo}
A.~S. Holevo.
\newblock {\em Probabilistic and statistical aspects of quantum theory},
  volume~1.
\newblock Springer Science \& Business Media, 2011.
\newblock \href {https://doi.org/10.1007/978-88-7642-378-9}
  {\path{doi:10.1007/978-88-7642-378-9}}.

\bibitem{JB}
O.~Johnson and A.~Barron.
\newblock {Fisher} information inequalities and the central limit theorem.
\newblock {\em Probability Theory and Related Fields}, 129:391--409, 2004.
\newblock \href {https://doi.org/10.1007/s00440-004-0344-0}
  {\path{doi:10.1007/s00440-004-0344-0}}.

\bibitem{KS}
R.~K\"onig and G.~Smith.
\newblock The entropy power inequality for quantum systems.
\newblock {\em IEEE Transactions on Information Theory}, 60(3):1536--1548,
  2014.
\newblock \href {https://doi.org/10.1109/TIT.2014.2298436}
  {\path{doi:10.1109/TIT.2014.2298436}}.

\bibitem{LesniewskiRuskai99}
A.~Lesniewski and M.~B. Ruskai.
\newblock Monotone {Riemannian} metrics and relative entropy on noncommutative
  probability spaces.
\newblock {\em Journal of Mathematical Physics}, 40(11):5702--5724, 1999.
\newblock \href {https://doi.org/10.1063/1.533053}
  {\path{doi:10.1063/1.533053}}.

\bibitem{OZ99}
R.~Olkiewicz and B.~Zegarlinski.
\newblock Hypercontractivity in noncommutative $l_p$ spaces.
\newblock {\em Journal of Functional Analysis}, 161(1):246--285, 1999.
\newblock \href {https://doi.org/10.1006/jfan.1998.3342}
  {\path{doi:10.1006/jfan.1998.3342}}.

\bibitem{PETZ1996}
D.~Petz.
\newblock Monotone metrics on matrix spaces.
\newblock {\em Linear Algebra and its Applications}, 244:81--96, 1996.
\newblock \href {https://doi.org/10.1016/0024-3795(94)00211-8}
  {\path{doi:10.1016/0024-3795(94)00211-8}}.

\bibitem{Serafini}
A.~Serafini.
\newblock {\em Quantum continuous variables: a primer of theoretical methods}.
\newblock CRC press, 2017.
\newblock \href {https://doi.org/10.1201/9781315118727}
  {\path{doi:10.1201/9781315118727}}.

\bibitem{MaSi}
S.~K. Sirazhdinov and M.~Mamatov.
\newblock On convergence in the mean for densities.
\newblock {\em Theory of Probability \& Its Applications}, 7(4):424--428, 1962.
\newblock \href {https://doi.org/10.1137/1107039} {\path{doi:10.1137/1107039}}.

\bibitem{Suzuki2016}
J.~Suzuki.
\newblock Explicit formula for the holevo bound for two-parameter qubit-state
  estimation problem.
\newblock {\em Journal of Mathematical Physics}, 57(4):042201, 2016.
\newblock \href {https://doi.org/10.1063/1.4945086}
  {\path{doi:10.1063/1.4945086}}.

\bibitem{Um1962}
H.~Umegaki.
\newblock {Conditional expectation in an operator algebra. IV. Entropy and
  information}.
\newblock {\em Kodai Mathematical Seminar Reports}, 14(2):59 -- 85, 1962.
\newblock \href {https://doi.org/10.2996/kmj/1138844604}
  {\path{doi:10.2996/kmj/1138844604}}.

\bibitem{William}
J.~Williamson.
\newblock On the algebraic problem concerning the normal forms of linear
  dynamical systems.
\newblock {\em American journal of mathematics}, 58(1):141--163, 1936.
\newblock \href {https://doi.org/10.2307/2371062} {\path{doi:10.2307/2371062}}.

\end{thebibliography}
}

\appendix

\section{Properties of the Poincar\'e constant} \label{app:PoincareInequality}
In this appendix, we state some basic properties of the Poincar\'e constant $\lambda_\rho$.

\begin{proposition} \label{prop:PartialTracePoin-PoinTensor}
For any bipartite state $\rho_{12}$ we have $\lambda_{\rho_1}\geq \lambda_{\rho_{12}}$ where $\rho_1$ is the marginal state on the first subsystem.
\end{proposition}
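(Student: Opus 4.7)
The plan is to take any candidate operator on the first subsystem that witnesses a Poincaré-type inequality for $\rho_1$, lift it trivially to the bipartite system, and observe that the Poincaré inequality for $\rho_{12}$ applies to the lift with the same constant, giving the desired comparison of Poincaré constants.

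Let $X_1$ be any operator acting on subsystem $1$ satisfying $\tr(\rho_1 X_1)=0$ and $\|X_1\|_{2,\rho_1}<+\infty$. I would set $\widehat X := X_1 \otimes \mathbb I_2$. The first routine check is that $\tr(\rho_{12}\widehat X) = \tr(\rho_1 X_1)=0$. Next, using linearity of partial trace and the definition~\eqref{eq:symmetric-inner-prod} of the SLD norm, I would verify the key identity
\[
\|\widehat X\|_{2,\rho_{12}}^2 \;=\; \tfrac12\tr\bigl(\rho_{12}(X_1^\dagger X_1\otimes \mathbb I_2)\bigr) + \tfrac12\tr\bigl((X_1^\dagger X_1\otimes \mathbb I_2)\rho_{12}\bigr) \;=\; \|X_1\|_{2,\rho_1}^2.
\]
In particular $\|\widehat X\|_{2,\rho_{12}}<+\infty$, so $\widehat X$ is a valid test operator for the quantum Poincaré inequality of $\rho_{12}$.

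The next step is to control the gradient norm of the lift. Writing $\mathbf{a}_{j}^{(1)}, \mathbf{a}_{j}^{(2)}$ for the annihilation operators of the respective subsystems, we have $[\mathbf{a}_{j}^{(2)},\widehat X]=0$ since $\mathbf{a}_{j}^{(2)}$ acts on the second subsystem and commutes with $X_1\otimes \mathbb I_2$, while $[\mathbf{a}_{j}^{(1)},\widehat X]=[\mathbf{a}_{j}^{(1)},X_1]\otimes \mathbb I_2$, and similarly for the adjoint commutators. Applying the same partial-trace identity as before to each of these lifted commutators then yields
\[
\|\partial \widehat X\|_{2,\rho_{12}}^2 \;=\; \sum_{j}\Bigl(\bigl\|[\mathbf{a}_{j}^{(1)},X_1]\bigr\|_{2,\rho_1}^2 + \bigl\|[\mathbf{a}_{j}^{(1)\dagger},X_1]\bigr\|_{2,\rho_1}^2\Bigr) \;=\; \|\partial X_1\|_{2,\rho_1}^2.
\]

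Combining the two identities with the Poincaré inequality for $\rho_{12}$ applied to $\widehat X$ gives
\[
\lambda_{\rho_{12}}\,\|X_1\|_{2,\rho_1}^2 \;=\; \lambda_{\rho_{12}}\,\|\widehat X\|_{2,\rho_{12}}^2 \;\leq\; \|\partial \widehat X\|_{2,\rho_{12}}^2 \;=\; \|\partial X_1\|_{2,\rho_1}^2.
\]
Since this holds for every admissible $X_1$, the Poincaré inequality holds for $\rho_1$ with constant at least $\lambda_{\rho_{12}}$, i.e.\ $\lambda_{\rho_1}\geq \lambda_{\rho_{12}}$. There is no real obstacle here beyond verifying the two lifting identities; the only delicate point to double-check is the finiteness of $\|\widehat X\|_{2,\rho_{12}}$ and the well-definedness of the commutators $[\mathbf{a}_{j}^{(1)},\widehat X]$, both of which follow from the assumed finiteness of $\|X_1\|_{2,\rho_1}$ and the standard convention (as in Definition~\ref{def:Poincare-ineq}) that if $\|\partial X_1\|_{2,\rho_1}^2=+\infty$ the inequality holds trivially.
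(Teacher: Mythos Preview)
Your proposal is correct and takes exactly the same approach as the paper: applying the Poincar\'e inequality for $\rho_{12}$ to operators of the form $X_1\otimes\mathbb I_2$. The paper states this in a single sentence, while you have spelled out the routine verifications that the lift preserves the trace condition, the $\|\cdot\|_{2}$-norm, and the gradient norm.
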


\begin{proof}
Apply the Poincar\'e inequality for $\rho_{12}$ on operators $X_1$ that act only on the first subsystem.

\end{proof}

Any Gaussian unitary $U$ can be decomposed into \emph{single-mode squeezers,} \emph{phase shifters} and two-mode beam splitters~\cite{Serafini}. The latter two classes, namely phase shifters and two-mode beam splitters, generate the space of \emph{passive transformations} that have the property that their action commutes with the photon number operator $H_m=\sum_{j=1}^m \ac_j^\dagger \ac_j$. Passive transformations can be characterized in terms of their action on annihilation operators. Indeed, for any passive unitary $U$ there is an $m\times m$ unitary $(u_{ij})$  such that
\begin{align}\label{eq:U-passive}
U^\dagger \ac_j U = \sum_{j'=1}^m u_{jj'} \ac_{j'}.
\end{align}
We now prove a more interesting property of the Poincar\'e constant.

\begin{proposition} \label{PoinUnitary}
For any passive transformation $U$ we have $\lambda_{U \rho U^{\dagger}}=\lambda_\rho$. More generally, if $\lambda_\rho>0$, then for 
any Gaussian unitary $U$ we have
\[
	\lambda_{U \rho U^{\dagger}}>0.
\]
\end{proposition}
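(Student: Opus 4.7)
\textbf{Proof plan for Proposition~\ref{PoinUnitary}.}
The plan is to track how the objects appearing in the Poincar\'e inequality transform under conjugation by a Gaussian unitary $U$. Fix $\sigma := U\rho U^{\dagger}$ and set up a bijection between test operators via $Y\mapsto X:=U^{\dagger}YU$. Two things transform in a completely clean way: first,
$$\tr(\sigma Y)=\tr(\rho X),\qquad \|Y\|_{2,\sigma}^{2}=\|X\|_{2,\rho}^{2},$$
which follows directly from the definition~\eqref{eq:symmetric-inner-prod} of $\langle\cdot,\cdot\rangle_\rho$ and the cyclicity of the trace under unitary conjugation. So the centering condition and the SLD-norm are automatically preserved, and the only nontrivial piece is how the gradient $\partial$ transforms.

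For the gradient, stack the annihilation/creation operators as $\mathbf c=(\ac_{1},\ac_{1}^{\dagger},\ldots,\ac_{m},\ac_{m}^{\dagger})^{\top}$. Since $U$ is Gaussian, $U^{\dagger}\mathbf cU=T\mathbf c$ for some $2m\times 2m$ matrix $T$ with complex entries (in the passive case this is~\eqref{eq:U-passive}; in general it is the change-of-basis from $\R$ to $\mathbf c$ applied to the symplectic $S$, together with the displacement $r$, which only shifts $\mathbf c$ by scalars and therefore does not affect commutators with $X$). The key algebraic identity is
$$[\mathbf c,Y]=[\mathbf c,UXU^{\dagger}]=U\bigl[U^{\dagger}\mathbf cU,X\bigr]U^{\dagger}=U\bigl(T[\mathbf c,X]\bigr)U^{\dagger},$$
where $T$ is pulled through the outer unitaries since its entries are scalars. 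Combined with the norm-preservation $\|UAU^{\dagger}\|_{2,\sigma}=\|A\|_{2,\rho}$, this gives
$$\|\partial Y\|_{2,\sigma}^{2}=\bigl\langle (T^{\dagger}T)\,\partial X,\partial X\bigr\rangle_{\rho},$$
understood as a quadratic form in the $2m$-tuple $\partial X$ with scalar coefficients.

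Now the passive case follows at once: the map~\eqref{eq:U-passive} acts as $u\oplus\bar u$ on $\mathbf c$, so $T^{\dagger}T=I_{2m}$, hence $\|\partial Y\|_{2,\sigma}^{2}=\|\partial X\|_{2,\rho}^{2}$, and the Poincar\'e inequality transfers with the \emph{same} constant. Applying the same argument to $U^{\dagger}$ gives the reverse inequality, so $\lambda_{U\rho U^{\dagger}}=\lambda_{\rho}$. For a general Gaussian $U$, the matrix $T$ is no longer unitary (squeezers mix $\ac_j$ with $\ac_j^{\dagger}$) but is always invertible: indeed, $U^{\dagger}(\cdot)U$ is an invertible linear map on the span of $\{\ac_j,\ac_j^{\dagger}\}$ since its inverse is $U(\cdot)U^{\dagger}$. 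Thus $T^{\dagger}T\ge s_{\min}(T)^{2}I$ with $s_{\min}(T)>0$, yielding
$$\|\partial Y\|_{2,\sigma}^{2}\ge s_{\min}(T)^{2}\,\|\partial X\|_{2,\rho}^{2}\ge s_{\min}(T)^{2}\,\lambda_{\rho}\,\|X\|_{2,\rho}^{2}=s_{\min}(T)^{2}\,\lambda_{\rho}\,\|Y\|_{2,\sigma}^{2},$$
so $\lambda_{U\rho U^{\dagger}}\ge s_{\min}(T)^{2}\lambda_{\rho}>0$ whenever $\lambda_{\rho}>0$. The only step requiring a little care is the displacement part of $U$: a short computation shows that conjugation by $D_{z}$ shifts $\ac_{j}$ by the scalar $-z_{j}$, which drops out of every commutator $[\,\cdot\,,X]$, so displacements can be peeled off at the outset and reduce everything to the $U_{S}$ case just analyzed. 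No single step is a real obstacle; the main conceptual point is simply that the $2m\times 2m$ matrix $T$ encoding a Bogoliubov transformation is always invertible, and this invertibility is exactly what is needed to transport a positive Poincar\'e constant across Gaussian conjugation.
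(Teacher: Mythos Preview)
Your proof is correct and follows the same underlying idea as the paper---track how the gradient transforms under conjugation by a Gaussian unitary---but your execution is more streamlined. The paper first handles passive unitaries exactly as you do (using that the action on annihilation operators is by a unitary matrix), and then invokes the Bloch--Messiah decomposition to reduce the general case to tensor products of single-mode squeezers $U=e^{\frac12\sum_j r_j(\ac_j^2+(\ac_j^\dagger)^2)}$, for which it computes explicitly $\lambda_{U\rho U^\dagger}\ge \lambda_\rho/c$ with $c=2\max_j(\cosh^2 r_j+\sinh^2 r_j)$. You instead package the entire Bogoliubov transformation into a single invertible $2m\times 2m$ matrix $T$ acting on the stacked vector $(\ac_1,\ac_1^\dagger,\ldots)$ and read off the bound $\lambda_{U\rho U^\dagger}\ge s_{\min}(T)^2\lambda_\rho$ in one stroke. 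Your route avoids the decomposition and is arguably cleaner; the paper's route yields an explicit constant in terms of the squeezing parameters. Both arrive at the same conclusion.
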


\begin{proof}
Let $U$ be passive, and suppose we want to verify the Poincar\'e inequality for $U \rho U^{\dagger}$ on operator $Y$. Let $X=U^\dagger Y U$.  Then, using~\eqref{eq:U-passive} and the fact that $(u_{ij})$ is unitary we have
\begin{align*}
\| \partial Y \|_{U\rho U^\dagger}^2  &= \sum_{j=1}^m  \Big(\big\| [\ac_j, Y]  \big\|_{U\rho  U^\dagger}^2  + \big\| [\ac^\dagger_j, Y]  \big\|_{U\rho  U^\dagger}^2\Big)\\
&= \sum_{j=1}^m  \Big(\big\| [U^\dagger \ac_j  U, X]  \big\|_{\rho  }^2  + \big\| [U^\dagger  \ac^\dagger_j U, X]  \big\|_{\rho }^2\Big)\\
&= \sum_{j=1}^m  \bigg(   \bigg\| \sum_{j'=1}^m u_{jj'}[ \ac_{j'}  , X]  \bigg\|_{\rho  }^2  + \bigg\| \sum_{j'=1}^m \bar u_{jj'}[ \ac_{j'}^\dagger, X]  \bigg\|_{\rho }^2\bigg)\\
&= \sum_{j=1}^m  \Big(   \big\| [ \ac_{j}  , X]  \big\|_{\rho  }^2  + \big\| [ \ac_{j}^\dagger, X]  \big\|_{\rho }^2\Big)\\
& = \|\partial X\|_{2, \rho}^2.
\end{align*}
It is also easily verified that $\|Y\|^2_{U\rho U^\dagger}  = \|X\|^2_{\rho}$ and $\tr(\rho X) = \tr(U \rho U^\dagger Y)$. Therefore, $\lambda_\rho \|X\|^2_{\rho}\leq \|\partial X\|_{2, \rho}^2$ implies $\lambda_\rho\|Y\|^2_{U\rho U^\dagger}\leq \| \partial Y \|_{U\rho U^\dagger}^2 $ and vice versa.  

To prove the second part we only need to handle the tensor product of single-mode squeezing unitaries which take the form $U = e^{\frac 12 \sum_{j=1}^m r_j(\ac_{j}^2 + (\ac_{j}^{\dagger})^2)}$. We claim that 
\[
	\lambda_{U \rho U^{\dagger}} \geq \frac{1}{c} \lambda_{\rho},
\]
where $c=2 \max_j \big(\cosh^2(r_j) + \sinh^2(r_j)\big)$.
To this end, we use  
	\[
		U \ac_{j} U^\dagger = \cosh(r_j) \ac_{j} + \sinh(r_j) \ac_{j}^{\dagger}.
	\]
Then, for $X=U^\dagger Y U$  we have
\begin{align*}
\| \partial X \|_{2, \rho}^2 
& = \| \partial X \|_{ U^\dagger (U \rho U^\dagger) U}^2\\
&  = \sum_{j=1}^m  \Big(   \big\| [ \ac_{j}  , X]  \big\|_{U^\dagger (U \rho U^\dagger) U  }^2  + \big\| [ \ac_{j}^\dagger, X]  \big\|_{U^\dagger (U \rho U^\dagger) U }^2\Big)\\
&  = \sum_{j=1}^m  \Big(   \big\| [ U\ac_{j} U^\dagger  , Y]  \big\|_{U \rho U^\dagger  }^2  + \big\| [ U\ac_{j}^\dagger U^\dagger, Y]  \big\|_{U \rho U^\dagger }^2\Big)\\
 & = \sum_{j=1}^m \Big( \big\| \cosh(r_j) [\ac_{j} , Y] + \sinh(r_j) [\ac_{j}^{\dagger}, Y]\big\|_{U \rho U^\dagger }^{2}  \\
 &\quad\qquad\qquad \quad+ \big\| \cosh(r_j) [\ac^\dagger_{j} , Y] + \sinh(r_j) [\ac_{j}, Y]\big\|_{U \rho U^\dagger }^{2}   \Big)   \\
&\leq  2\sum_{j=1}^m  \big(\cosh^2(r_j) + \sinh^2(r_j)\big) \Big( \big\|  [\ac_{j} , Y] \big\|_{U \rho U^\dagger }^{2}  + \big\|  [\ac^\dagger_{j} , Y] \big\|_{U \rho U^\dagger }^{2}   \Big)   \\
& \leq c \| \partial Y \|_{U\rho U^\dagger}^2.
\end{align*}
We also have $\|Y\|_{U\rho U^\dagger} = \|X\|_{2, \rho}$. Then, the desired inequality follows.

\end{proof}

The following corollary shows that the Poincar\'e constant does not decrease under convolution.

\begin{corollary} \label{PoinConvolution}
For any two quantum states $\rho_1,\rho_2$ and any $\eta \in [0,1]$ we have
\[
	\lambda_{\rho \boxplus_{\eta} \sigma} \geq \min\{\lambda_{\rho}, \lambda_{\sigma}\big\}.
\]
\end{corollary}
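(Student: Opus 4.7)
The plan is to combine Propositions~\ref{prop:PartialTracePoin-PoinTensor} and~\ref{PoinUnitary} with a tensorization estimate for the product state. Writing $\rho \boxplus_\eta \sigma = \tr_2(U_\eta(\rho\otimes\sigma)U_\eta^\dagger)$, the key observation is that the beam splitter $U_\eta$ acts on annihilation operators via a $2\times 2$ unitary matrix (see~\eqref{BeamA}), so it is a passive Gaussian unitary in the sense of~\eqref{eq:U-passive}. Proposition~\ref{PoinUnitary} then gives $\lambda_{U_\eta(\rho\otimes\sigma)U_\eta^\dagger} = \lambda_{\rho\otimes\sigma}$, while Proposition~\ref{prop:PartialTracePoin-PoinTensor}, applied to the first-system marginal, yields $\lambda_{\rho\boxplus_\eta\sigma} \geq \lambda_{U_\eta(\rho\otimes\sigma)U_\eta^\dagger}$. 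Chaining these reductions, the corollary reduces to the tensorization bound $\lambda_{\rho\otimes\sigma} \geq \min\{\lambda_\rho, \lambda_\sigma\}$.

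For the tensorization bound, I would apply the Poincar\'e inequality for $\rho\otimes\sigma$ to an arbitrary $X$ satisfying $\tr((\rho\otimes\sigma)X) = 0$ and $\|X\|_{2,\rho\otimes\sigma}<+\infty$ through a conditional expectation decomposition. Set $X_1 := \tr_2((I\otimes\sigma)X)$, an operator on the first system with $\tr(\rho X_1) = 0$. A direct verification with the SLD inner product shows that $\langle X, V\otimes I\rangle_{\rho\otimes\sigma} = \langle X_1, V\rangle_\rho$ for every $V$ on the first system, so the map $\mathcal E_1 \colon X\mapsto X_1\otimes I$ is the SLD-orthogonal projection onto operators of the form $V\otimes I$, and one has the Pythagorean decomposition
\[
\|X\|^2_{2,\rho\otimes\sigma} \;=\; \|X_1\|^2_{2,\rho} + \|Z\|^2_{2,\rho\otimes\sigma}, \qquad Z := X - X_1\otimes I, \qquad \mathcal E_1 Z = 0.
\]
Because $\ac_{j,1}$ commutes with $I\otimes\sigma$ and with the partial trace over system~2, one has $[\ac_j, X_1] = \tr_2((I\otimes\sigma)[\ac_{j,1}, X])$, and contractivity of $\mathcal E_1$ gives $\|\partial X_1\|^2_{2,\rho} \leq \|\partial_1 X\|^2_{2,\rho\otimes\sigma}$. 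Applying the Poincar\'e inequality for $\rho$ to $X_1$ then yields $\lambda_\rho\|X_1\|^2_{2,\rho} \leq \|\partial_1 X\|^2_{2,\rho\otimes\sigma}$.

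The main obstacle is the conditional Poincar\'e estimate $\lambda_\sigma\|Z\|^2_{2,\rho\otimes\sigma} \leq \|\partial_2 Z\|^2_{2,\rho\otimes\sigma}$ for $Z$ with $\tr_2((I\otimes\sigma)Z) = 0$, the difficulty being that the SLD inner product does not cleanly tensorize on product states. My approach is to fix a spectral decomposition $\rho = \sum_k p_k\ket{e_k}\bra{e_k}$ and expand $Z = \sum_{k,k'}\ket{e_k}\bra{e_{k'}}\otimes Z_{kk'}$, so that the hypothesis becomes $\tr(\sigma Z_{kk'}) = 0$ for every pair $(k,k')$. Applying the Poincar\'e inequality for $\sigma$ to each $Z_{kk'}$ gives $\lambda_\sigma\|Z_{kk'}\|^2_{2,\sigma} \leq \|\partial Z_{kk'}\|^2_{2,\sigma}$; the delicate step is to combine these with the explicit weighted identities
\[
\|Z\|^2_{2,\rho\otimes\sigma} = \tfrac{1}{2}\sum_{k,k'}\bigl[p_{k'}\tr(\sigma Z_{kk'}^\dagger Z_{kk'}) + p_k \tr(\sigma Z_{kk'} Z_{kk'}^\dagger)\bigr]
\]
and its parallel for $\|\partial_2 Z\|^2_{2,\rho\otimes\sigma}$, by symmetrizing the matrix blocks indexed by $(k,k')$ with those indexed by $(k',k)$ so that the asymmetry between the $p_k$ and $p_{k'}$ weights is neutralized. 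Once the conditional Poincar\'e is in hand, adding it to the $X_1$ bound delivers $\min\{\lambda_\rho,\lambda_\sigma\}\|X\|^2_{2,\rho\otimes\sigma} \leq \|\partial X\|^2_{2,\rho\otimes\sigma}$, which establishes the tensorization inequality and hence the corollary.
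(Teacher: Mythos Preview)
Your reduction coincides with the paper's: write $\rho\boxplus_\eta\sigma=\tr_2\big(U_\eta(\rho\otimes\sigma)U_\eta^\dagger\big)$, observe that $U_\eta$ is passive so Proposition~\ref{PoinUnitary} gives $\lambda_{U_\eta(\rho\otimes\sigma)U_\eta^\dagger}=\lambda_{\rho\otimes\sigma}$, and then invoke Proposition~\ref{prop:PartialTracePoin-PoinTensor}. Both arguments therefore rest on the tensorization bound $\lambda_{\rho\otimes\sigma}\ge\min\{\lambda_\rho,\lambda_\sigma\}$, which the paper's short proof also leaves implicit; your proposal goes further by trying to supply it.

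Your Pythagorean decomposition $X=X_1\otimes I+Z$ and the bound $\lambda_\rho\|X_1\|_{2,\rho}^2\le\|\partial_1 X\|_{2,\rho\otimes\sigma}^2$ are correct. The genuine gap is in the conditional Poincar\'e step $\lambda_\sigma\|Z\|_{2,\rho\otimes\sigma}^2\le\|\partial_2 Z\|_{2,\rho\otimes\sigma}^2$. With $\rho=\sum_k p_k\ketbra{e_k}{e_k}$ and $Z=\sum_{k,k'}\ketbra{e_k}{e_{k'}}\otimes Z_{kk'}$, your own formula gives
\[
\|Z\|_{2,\rho\otimes\sigma}^2=\tfrac12\sum_{k,k'}\Big(p_{k'}\tr(\sigma Z_{kk'}^\dagger Z_{kk'})+p_k\tr(Z_{kk'}^\dagger\sigma Z_{kk'})\Big),
\]
and similarly for $\partial_2 Z$. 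The SLD Poincar\'e inequality for $\sigma$ controls only the \emph{sum} $\tr(\sigma W^\dagger W)+\tr(W^\dagger\sigma W)$, not each half separately. Testing with a single off-diagonal block $Z=\ketbra{e_k}{e_{k'}}\otimes W$ (with $k\ne k'$ and $\tr(\sigma W)=0$) reduces the conditional estimate to
\[
\lambda_\sigma\big(p_{k'}\tr(\sigma W^\dagger W)+p_k\tr(W^\dagger\sigma W)\big)\le p_{k'}\,a'(W)+p_k\,b'(W),
\]
and letting $p_k/p_{k'}\to 0$ forces the one-sided (GNS-type) inequality $\lambda_\sigma\tr(\sigma W^\dagger W)\le a'(W)$, which is strictly stronger than the SLD Poincar\'e assumption and not implied by it. Your ``symmetrization in $(k,k')$'' does not rescue this: swapping $(k,k')\leftrightarrow(k',k)$ or passing to $Z^\dagger$ simply permutes the same terms and returns the original expression, so the asymmetry between the $p_{k}$ and $p_{k'}$ weights is never neutralized. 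Thus the blockwise argument, as written, does not establish the conditional Poincar\'e with constant $\lambda_\sigma$; either a different route to tensorization is needed, or the step requires an additional idea beyond what you have sketched.
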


\begin{proof}
We note that $\rho \boxplus_{\eta} \sigma = \tr_2\big( U_\eta \rho\otimes \sigma U_\eta^\dagger \big)$, where $U_{\eta}$ is a beam splitter and is passive. Then, the bound on the Poincar\'e constant follows from Proposition~\ref{prop:PartialTracePoin-PoinTensor} and Proposition~\ref{PoinUnitary}.

\end{proof}

In the following proposition we aim to show that if a quantum state $\rho$ satisfies the quantum Poincar\'e inequality, then all of its moments are finite.

\begin{proposition}
Let $\rho$ be a quantum state satisfying $\lambda_\rho>0$. Then, all the moments of $\rho$ are finite. 
\end{proposition}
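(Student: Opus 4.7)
The plan is first to reduce the problem to bounding all single-mode number moments $\tr(\rho N_j^k)$, where $N_j := \bfa_j^\dagger \bfa_j$, and then to obtain each such bound by a truncated application of the Poincar\'e inequality of Definition~\ref{def:Poincare-ineq}. For the reduction, note that $H_m = \sum_j N_j$, and that, since the $N_j$ commute and are nonnegative, the power-mean inequality gives $(H_m + m)^{\kappa/2} \leq C_{\kappa,m}\bigl(\mathbb I + \sum_j N_j^{\lceil \kappa/2\rceil}\bigr)$ for a constant $C_{\kappa,m}$. Thus it suffices to prove that $\tr(\rho N_j^k) < +\infty$ for every mode $j$ and every positive integer $k$, which we do by induction on $k$; the base case $k=1$ is the standing assumption $\tr(\rho \bfa_j^\dagger \bfa_j)<+\infty$.

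For the inductive step, assume $\tr(\rho N_j^k)<+\infty$ for every $j$, fix one mode $j$, set $f_M(x) := \min\{x, M\}$, and consider the centered bounded test operator
\[
X_M := g(f_M(N_j)) - c_M, \qquad g(x) := x^{(k+1)/2}, \qquad c_M := \tr\bigl(\rho\, g(f_M(N_j))\bigr),
\]
so that $\tr(\rho X_M)=0$ and $\|X_M\|_{2,\rho}^2 < +\infty$ (the latter holds because $X_M$ is bounded), which allows Definition~\ref{def:Poincare-ineq} to be invoked on $X_M$. Writing $\Delta_M^\pm(N_j) := g(f_M(N_j \pm 1)) - g(f_M(N_j))$ and using the commutation identities $\bfa_j h(N_j) = h(N_j+1)\bfa_j$ and $\bfa_j^\dagger h(N_j) = h(N_j-1)\bfa_j^\dagger$, together with $[\bfa_{j'}, X_M] = 0$ for $j' \neq j$, a direct computation yields
\begin{align*}
\|\partial X_M\|_{2,\rho}^2 = \tr\bigl(\rho\, \Delta_M^-(N_j)^2\, N_j\bigr) + \tr\bigl(\rho\, \Delta_M^+(N_j)^2\,(N_j+1)\bigr).
\end{align*}

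Pointwise on Fock states, $\Delta_M^\pm(n)$ is monotone in $M$ with limit $(n\pm 1)^{(k+1)/2} - n^{(k+1)/2} = O(n^{(k-1)/2})$, so monotone convergence and the inductive hypothesis yield $\|\partial X_M\|_{2,\rho}^2 \leq C\,\tr(\rho N_j^k) + C' < +\infty$ uniformly in $M$. On the other hand, $c_M \to \tr(\rho N_j^{(k+1)/2})$, which is finite because the pointwise bound $x^{(k+1)/2} \leq 1 + x^k$ (valid for $x\geq 0$, $k\geq 1$) reduces it to $\tr(\rho N_j^k) + 1 < +\infty$, while $\tr\bigl(\rho\, g(f_M(N_j))^2\bigr) \uparrow \tr(\rho N_j^{k+1}) \in [0, +\infty]$ by monotone convergence. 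Hence the Poincar\'e inequality $\lambda_\rho\|X_M\|_{2,\rho}^2 \leq \|\partial X_M\|_{2,\rho}^2$ gives, in the limit, $\lambda_\rho\bigl(\tr(\rho N_j^{k+1}) - \tr(\rho N_j^{(k+1)/2})^2\bigr) \leq C\,\tr(\rho N_j^k) + C'$, which forces $\tr(\rho N_j^{k+1}) < +\infty$ and closes the induction.

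The main obstacle is the passage $M \to +\infty$: since we do not know $\tr(\rho N_j^{k+1})$ is finite \emph{a priori}, we must apply the Poincar\'e inequality only to the bounded truncation $X_M$ and then show that both sides admit well-defined limits. The monotonicity of $f_M$ and of $x\mapsto x^{(k+1)/2}$ makes every relevant sequence monotone in $M$, so monotone convergence handles all the limits; the elementary derivative estimate $(n\pm 1)^{(k+1)/2} - n^{(k+1)/2} = O(n^{(k-1)/2})$ is exactly what ensures the right-hand side of Poincar\'e is controlled by the previous moment $\tr(\rho N_j^k)$, and not by $\tr(\rho N_j^{k+1})$ itself.
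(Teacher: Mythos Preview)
Your proof is correct and, in fact, more careful than the paper's. Both arguments share the same skeleton---induct on the order of the moment and apply the Poincar\'e inequality to (a centered) power of a number operator---but your execution differs in three respects that are worth noting. First, you reduce to single-mode moments $\tr(\rho N_j^k)$ while the paper works directly with $H_m^\kappa$; your reduction via the power-mean inequality is clean and correct. Second, you use the test operator $g(f_M(N_j))$ with $g(x)=x^{(k+1)/2}$ a \emph{fractional} power, which makes the induction step go from $k$ to $k+1$ exactly: the gradient estimate $\big((n\pm1)^{(k+1)/2}-n^{(k+1)/2}\big)^2\cdot n = O(n^{k})$ consumes precisely the previous moment. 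The paper instead takes $X=H_m^\kappa$, whose gradient norm involves $\tr(\rho H_m^{2\kappa-1})$, not $\tr(\rho H_m^{2(\kappa-1)})$ as claimed there; this makes the induction as literally written in the paper not chain, whereas yours does. Third, and most importantly, you truncate with $f_M(x)=\min\{x,M\}$ so that the operator you feed into Definition~\ref{def:Poincare-ineq} is bounded and hence certainly satisfies $\|X_M\|_{2,\rho}<+\infty$; the paper applies the Poincar\'e inequality directly to the unbounded $H_m^\kappa-\tr(\rho H_m^\kappa)$, which is circular since $\|X\|_{2,\rho}<+\infty$ is exactly the finiteness being proved. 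Your monotone-convergence passage to $M\to\infty$ is justified because all the quantities involved are of the form $\tr(\rho\,F_M(N_j))$ with $F_M$ pointwise nondecreasing in $M$, so they reduce to monotone sums over the diagonal Fock-basis entries $\langle\vec n|\rho|\vec n\rangle\,F_M(n_j)$.
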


\begin{proof}
We need to show that for any integer $\kappa>0$ we have $\tr(  \rho H_m^\kappa )<+\infty$ where $H_m= \sum_{j=1}^m \ac_j^\dagger \ac_j$. This can be prove by a simple induction. Suppose that $\tr(  \rho H_m^{2\kappa-1} )<+\infty$. Then, applying  the Poincar\'e inequality for $X= H_m^{\kappa} - \tr(\rho H_m^{\kappa})$ we obtain $\lambda_\rho\|X\|_{2, \rho}^2\leq \|\partial  X\|_{2, \rho}^2$. We have 
$$\|X\|_{2, \rho}^2 = \|H_m^\kappa\|^2_\rho - \tr(\rho H_m^{\kappa})^2 = \tr(\rho H_m^{2\kappa}) - \tr(\rho H_m^{\kappa})^2,$$
and
$$\|\partial  X\|_{2, \rho}^2 = \|\partial  H_m^\kappa\|_{2, \rho}^2 =\sum_{j=1}^m \Big( \| [\ac_j, H_m^\kappa]  \|_{2, \rho}^2 + \| [\ac^\dagger_j, H_m^\kappa]  \|_{2, \rho}^2  \Big).$$
Now, the point is that expanding $H_m^\kappa$ and using the canonical commutation relations, we find that $\| [\ac_j, H_m^\kappa]  \|_{2, \rho}^2$ and $\| [\ac^\dagger_j, H_m^\kappa]  \|_{2, \rho}^2$ can be bounded in terms of $\|H_m^{\kappa-1}\|_{2, \rho}^2=\tr\big(\rho H_m^{2(\kappa-1)}\big)$. Using these in the Poincar\'e inequality gives $\tr(\rho H_m^{2\kappa})<+\infty$ as desired. 
\end{proof}

In the next proposition we consider the Poincar\'e inequality for Gaussian states.

\begin{proposition}
For any Gaussian state $\rho$ that is faithful we have $\lambda_\rho>0$.

\end{proposition}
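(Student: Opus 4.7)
The plan is to reduce to a tensor product of thermal states via Williamson's theorem and then establish the Poincar\'e inequality by a direct spectral analysis in the Fock basis. First, by Proposition~\ref{PoinUnitary}, positivity of $\lambda_\rho$ is preserved under Gaussian unitary conjugation, so via Proposition~\ref{Will} I may assume that $\rho$ is in Williamson's form. Faithfulness forces every symplectic eigenvalue $\nu_j>1$, equivalently $\beta_j\in(0,\infty)$, whence $\rho=\tau_1\otimes\cdots\otimes\tau_m$ with each $\tau_j=(1-e^{-\beta_j})e^{-\beta_j\ac_j^\dagger \ac_j}$ faithful.

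For a single faithful thermal state $\tau$ with $q=e^{-\beta}\in(0,1)$, the key structural observation is that $\mathbf L_2(\tau)$ decomposes into orthogonal super-selection sectors $\mathcal V_k$, $k\in\mathbb Z$, where $\mathcal V_k$ is the $\|\cdot\|_{2,\tau}$-closure of $\operatorname{span}\{|m+k\rangle\langle m|:\,m\geq\max(0,-k)\}$. The identity $[\ac,|m+k\rangle\langle m|]=\sqrt{m+k}\,|m+k-1\rangle\langle m|-\sqrt{m+1}\,|m+k\rangle\langle m+1|$, and its counterpart for $[\ac^\dagger,\cdot]$, show that $[\ac,\cdot]:\mathcal V_k\to\mathcal V_{k-1}$ and $[\ac^\dagger,\cdot]:\mathcal V_k\to\mathcal V_{k+1}$, so $\partial^*\partial$ preserves each sector. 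Within each $\mathcal V_k$, $\partial^*\partial$ is a Jacobi-type (tridiagonal) operator on the weighted $\ell^2$-space with geometric weights $q^m$; a summation-by-parts estimate on these tridiagonal coefficients yields that the spectrum above zero is bounded below by an explicit positive constant $c_\beta$. The only zero mode sits in $\mathcal V_0$ and is proportional to the identity, which is excluded by the hypothesis $\tr(\tau X)=0$. Hence $\lambda_\tau\geq c_\beta>0$.

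For the multi-mode case $\rho=\tau_1\otimes\cdots\otimes\tau_m$, the decomposition extends to multi-indices $\mathbf k\in\mathbb Z^m$ with sectors $\mathcal V_{\mathbf k}$, and $\partial^*\partial=\sum_j\partial_j^*\partial_j$ splits additively over modes with each $\partial_j^*\partial_j$ shifting only the $j$-th index. An iterated conditioning argument---projecting onto one mode at a time via the partial trace and applying the single-mode Poincar\'e bound within that mode---should give $\lambda_\rho\geq\min_j c_{\beta_j}>0$. The main obstacle is exactly this final step: unlike the Hilbert--Schmidt or KMS inner products, the SLD inner product $\langle X,Y\rangle_\rho=\tfrac12\tr(\rho X^\dagger Y)+\tfrac12\tr(X^\dagger\rho Y)$ does not factor over a product state, so the classical tensorization-of-variance identity has no direct quantum analog. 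Overcoming this requires a careful analysis of the quadratic form on the multi-index sectors, exploiting both the additive decomposition of $\partial^*\partial$ and the single-mode gap estimates to convert per-mode bounds into a global one.
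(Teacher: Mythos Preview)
Your reduction to Williamson's form is correct and matches the paper. However, the remainder of the proposal has two genuine gaps.

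In the single-mode case, the phrase ``a summation-by-parts estimate on these tridiagonal coefficients yields that the spectrum above zero is bounded below'' is an assertion, not a proof. In the Fock ketbra basis $|m+k\rangle\langle m|$, the map $[\ac,\cdot]$ produces \emph{two} terms in $\mathcal V_{k-1}$, so $\partial^*\partial$ is a nontrivial Jacobi operator whose gap you still have to extract. The paper avoids this entirely by constructing a different SLD-orthonormal basis $\{B_{m,n}:(m,n)\in\Sigma\}$, obtained from the Taylor expansion of the rescaled displacements $Q_{r,z_0}=e^{\nu r^2/2}D_{rz_0}$ after computing $\langle Q_{r,z_0},Q_{s,w_0}\rangle_\tau$ explicitly. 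Because $[\ac,D_z]=zD_z$, one gets $[\ac,B_{m,n}]=\sqrt{c_{m-1,n-1}/c_{m,n}}\,B_{m-1,n-1}$, a \emph{pure} weighted shift with a single term; the bound $\lambda_\tau\geq\tfrac{2}{\nu+1}$ then follows from the elementary inequality $c_{m-1,n-1}/c_{m,n}\geq\tfrac{2}{\nu+1}$ with no Jacobi analysis needed.

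More seriously, you correctly diagnose the multi-mode obstacle---the SLD inner product does not factor over a product state---but then leave the proof unfinished, saying only that it ``requires a careful analysis.'' Your iterated-conditioning route genuinely needs some tensorization-of-variance surrogate, and you have not supplied one. The paper's displacement-operator method sidesteps the problem altogether: multi-mode displacements $D_z=\bigotimes_j D_{z_j}$ span a dense subspace by Stone--von~Neumann, and while $\langle\cdot,\cdot\rangle_\tau$ does not factor, both $\tr(\tau D_{-z}D_w)$ and $\tr(D_{-z}\tau D_w)$ individually do. Hence $\langle Q_{r,z_0},Q_{s,w_0}\rangle_\tau=\tfrac12\big(\prod_j T_j^{+}+\prod_j T_j^{-}\big)$ with $T_j^{\pm}$ the two single-mode exponentials, and expanding yields a multi-mode orthonormal basis $B_{\mathbf m,\mathbf n}$ on which each $[\ac_j,\cdot]$ is again a one-term shift in the $j$-th index pair. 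For every $(\mathbf m,\mathbf n)\neq(\mathbf 0,\mathbf 0)$ at least one such shift survives with coefficient bounded below by $\tfrac{2}{\nu_{\max}+1}$, giving the gap directly. This is why the paper can legitimately say the multi-mode case ``works similarly,'' whereas your Fock-sector approach remains stuck at precisely the point you identified.
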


\begin{proof}
By Proposition~\ref{PoinUnitary}, it suffices to consider Gaussian states that are in Williamson's form, i.e., Gaussian states of the form $\rho=\tau_1\otimes \cdots\otimes \tau_m$ where  $\tau_j = (1- e^{-\beta_j})e^{-\beta_j \ac_j^\dagger \ac_j}$ is a thermal state and $\nu_j = \frac{1+e^{-\beta_j}}{1-e^{-\beta_j}}\geq 1$, $j=1, \dots, m$ are the Williamson's eigenvalues of $\rho$. Since $\rho_G$ is faithful, $\nu_j>1$ for all $j$. In the following, we show that $\lambda_{\rho}>0$ in the case of $m=1$. The proof for arbitrary $m$ works similarly.

Recall that $D_{-z} D_w = e^{\frac 12(-z\bar w + \bar z w)}D_{w-z}$. Therefore, using~\eqref{eq:tau-chi-W} for $\tau=(1- e^{-\beta})e^{-\beta \ac^\dagger \ac}$ with $\nu = \frac{1+e^{-\beta}}{1-e^{-\beta}}> 1$ we have 
\begin{align*}
\tr\big(\tau D_{-z}D_w\big) & = ^{\frac 12(-z\bar w + \bar zw)} \tr\big( \tau D_{w-z} \big)\\
& = e^{\frac 12(-z\bar w + \bar zw)} \chi_\tau(w-z)\\
& = e^{\frac 12(-z\bar w + \bar zw)} e^{-\frac 12 \nu |w-z |^2}\\
& = \exp\Big(-\frac \nu2   (  |z|^2 + |w|^2 ) + \frac{\nu+1}{2}\bar z w + \frac{\nu-1}{2} z\bar w\Big ). 
\end{align*}
Let $r=|z|$, $z_0 = z/r$ and $s=|w|$, $w_0=w/s$. Also, define 
$$Q_{r, z_0} := e^{\frac \nu 2 r^2} D_{rz_0}.$$
Then, the above equation is equivalent to   
\begin{align*}
\tr\big( \tau Q_{r, z_0}^\dagger Q_{s, w_0} \big) &= \exp\bigg( \frac{rs}{2}\Big((\nu+1) \bar z_0 w_0 + (\nu-1) z_0 \bar w_0    \Big)     \bigg)\\
&= \exp\bigg( \frac{rs}{2}\Big((\nu+1) \bar z_0 w_0 + (\nu-1) \frac{1}{\bar z_0 w_0}    \Big)     \bigg).
\end{align*}
Starting with  $D_wD_{-z} = e^{\frac 12 (-\bar z w+ z\bar w)} D_{w-z}$, by a similar computation  we find that
\begin{align*}
\tr\big( \tau  Q_{s, w_0} Q_{r, z_0}^\dagger \big)
&= \exp\bigg( \frac{rs}{2}\Big((\nu+1) \frac{1}{\bar z_0 w_0} + (\nu-1) \bar z_0 w_0   \Big)     \bigg).
\end{align*}
Putting these together, we arrive at 
\begin{align*}
\big\langle Q_{r, z_0}, Q_{s, w_0}\big\rangle_\tau & = \frac{1}{2}\exp\bigg( \frac{rs}{2}\Big((\nu+1) \bar z_0 w_0 + (\nu-1) \frac{1}{\bar z_0 w_0}    \Big)     \bigg) \\
&\quad \, + \frac 12\exp\bigg( \frac{rs}{2}\Big((\nu+1) \frac{1}{\bar z_0 w_0} + (\nu-1) \bar z_0 w_0   \Big)     \bigg).
\end{align*}
Next, using the Taylor expansion of the exponential function we find that
\begin{align}
\big\langle  Q_{r, z_0}, &Q_{s, w_0}\big\rangle_\tau \nonumber
\\& = \frac 12 \sum_{k, \ell\geq 0} \frac{1}{2^{k+\ell} k! \ell!} (rs)^{k+\ell}\bigg( \Big(  \frac{\nu-1}{\bar z_0w_0}\Big)^k \big((\nu+1) \bar z_0w_0\big)^\ell   
+  \Big(  \frac{\nu+1}{\bar z_0w_0}\Big)^k \big((\nu-1) \bar z_0w_0\big)^\ell \bigg)\nonumber\\
& = \sum_{(m, n) \in \Sigma} c_{m, n} (rs)^m (\bar z_0 w_0)^{n},\label{eq:Q-inner-prod}
\end{align}
where 
$$\Sigma =\big\{(m, n):\, m\geq n\geq -m, m\cong n  (\text{mod } 2)  \big\},$$
and for every $(m, n)\in \Sigma$,
$$c_{m,n} = \frac{(\nu-1)^{\frac{m-n}{2}}(\nu+1)^{\frac{m+n}{2}} + (\nu-1)^{\frac{m+n}{2}}(\nu+1)^{\frac{m-n}{2}}  }{ 2^{m+1} \Big(\frac{m+n}{2}\Big)! \Big(\frac{m-n}{2}\Big)!  }> 0.$$

Recall that $D_{z} =e^{-\frac 12|z|^2} e^{z\ac^\dagger}e^{-\bar z\ac}$, which implies $Q_{r, z_0} = e^{\frac{1}{2}(\nu-1)r^2} e^{rz_0\ac^\dagger}e^{-r\frac{1}{z_0} \ac}$. Therefore, once again considering the Taylor expansion of the exponential function and computing $Q_{r, z_0}$, we obtain a sum over the set $\Sigma$. In fact, for any $(m,n)\in \Sigma$ there is an operator $B_{m,n}$ such that 
\begin{align}\label{eq:def-B-mn}
Q_{r, z_0} = \sum_{(m, n)\in \Sigma} \sqrt{c_{m,n}} r^m z_0^n B_{m,n}.
\end{align}
Then, we have 
\begin{align*}
\big\langle  Q_{r, z_0}, Q_{s, w_0}\big\rangle_\tau  & = \sum_{(m, n), (m', n')\in \Sigma} \sqrt{c_{m,n}c_{m',n'}} r^m s^{m'} \bar z_0^n w_0^{n'} \big\langle  B_{m,n}, B_{m',n'}\big\rangle_\tau.
\end{align*}
This holds for all real numbers $r, s$, and complex numbers $z_0, w_0$ with modulus $1$. Thus, comparing to~\eqref{eq:Q-inner-prod}, we find that 
$$\big\langle  B_{m,n}, B_{m',n'}\big\rangle_\tau = \delta_{m,m'}\delta_{n,n'}.$$
This means that operators $B_{m,n}$ for  $(m,n)\in \Sigma$ are orthonormal with respect to the inner product $\langle \cdot, \cdot\rangle_\tau$. On the other hand, since by Stone-von Neumann theorem
the span of displacement operators is dense, the operators $B_{m,n}$ span the whole space. We conclude that $\{B_{m,n}:\, (m,n)\in \Sigma\}$ is an orthonormal basis for the space $\mathbf L_2(\tau)$. 
We also note that by definition $B_{0,0} = \mathbb I$, so $B_{m,n}$ is orthogonal to the identity operator for any $ (m,n)\neq (0,0)$. 

Next, we use~\eqref{DisplacementOp} to conclude that $\big[\ac, Q_{r, z_0}\big] = rz_0 Q_{r, z_0}$.
Applying the expansion~\eqref{eq:def-B-mn} on both sides of this equation implies
\begin{align}\label{eq:ac-c-B}
\big[\ac, B_{m,n}\big] = 
\begin{cases}
\sqrt{\frac{c_{m-1, n-1}}{c_{m,n}}} B_{m-1, n-1} &\qquad (m-1, n-1)\in \Sigma,\\
 0 & \qquad \text{otherwise}.
\end{cases}
\end{align}
We also have $\big[\ac^\dagger, Q_{r, z_0}\big] = r\bar z_0 Q_{r, z_0}$ which gives
\begin{align}\label{eq:ac-dagger-c-B}
\big[\ac^\dagger, B_{m,n}\big] = 
\begin{cases}
\sqrt{\frac{c_{m-1, n+1}}{c_{m,n}}} B_{m-1, n+1} & \qquad (m-1, n+1)\in\Sigma,\\
0 & \qquad \text{otherwise}.
\end{cases}
\end{align}

Let $X\in \mathbf L_2(\tau)$ be an arbitrary operator satisfying $\langle \mathbb I, X\rangle_\tau = \langle B_{0,0}, X\rangle_\tau=0$. Then, there are coefficients $\alpha_{m,n}$ such that 
$$X = \sum_{(m,n)\in\Sigma \atop (m,n)\neq (0,0)} \alpha_{m,n} B_{m,n}.$$
Since $\big\{B_{m,n}:\, (m,n)\in \Sigma\big\}$ is orthonormal, we have
$$\|X\|_\tau^2= \sum_{(m,n)\in\Sigma \atop (m,n)\neq (0,0)} |\alpha_{m,n}|^2.$$
We use~\eqref{eq:ac-c-B} and~\eqref{eq:ac-dagger-c-B} to compute $\|\partial X\|_\tau^2 $. To this end, note that for $(0, 0)\neq (m, n)\in \Sigma$ we have $(m-1, n-1)\in \Sigma$ unless $n=-m$. Similarly, for $(0, 0)\neq (m, n)\in \Sigma$ we have $(m-1, n+1)\in \Sigma$ unless $n=m$. Therefore, letting $\Sigma_-:= \{(m,-m): m\geq 1\}$ and $\Sigma_+ := \{(m,m):\, m\geq 1\}$, we have 
\begin{align}\label{eq:norm-partial-X-nu}
\|\partial X\|_\tau^2 = \sum_{(m,n)\in\Sigma\setminus \Sigma_- \atop (m,n)\neq (0,0)} \frac{c_{m-1, n-1}}{c_{m,n}}  |\alpha_{m,n}|^2 + \sum_{(m,n)\in\Sigma\setminus \Sigma_+ \atop (m,n)\neq (0,0)}  \frac{c_{m-1, n+1}}{c_{m,n}} |\alpha_{m,n}|^2.
\end{align}
It is not hard to verify that $\frac{c_{m-1, n-1}}{c_{m,n}} \geq \frac{2}{\nu+1}$ for every $(0,0)\neq(m,n)\in\Sigma\setminus \Sigma_-$ and $\frac{c_{m-1, n+1}}{c_{m,n}}\geq \frac{2}{\nu+1}$ for every $(0,0)\neq(m,n)\in\Sigma\setminus \Sigma_+$. 
On the other hand, $\Sigma_-\cap \Sigma_+=\emptyset$.  Using these in~\eqref{eq:norm-partial-X-nu} we arrive at 
$$\|\partial X\|_\tau^2 \geq \frac{2}{\nu+1} \sum_{(m,n)\in\Sigma \atop (m,n)\neq (0,0)}  |\alpha_{m,n}|^2 =  \frac{2}{\nu+1}\|X\|_\tau^2.$$
Hence, $\tau$ satisfies the Poincar\'e inequality with constant $\lambda_\tau\geq \frac{2}{\nu+1}$.

\end{proof}


\section{Score operator for Gaussian states} \label{ModifiedGaussian}

In this appendix we compute the SLD score operator for a Gaussian state $\rho_G$ that is in Williamson's form. Such a state takes the form $\rho_G= \tau_1\otimes \cdots \otimes \tau_m$ where $\tau_j = (1-e^{-\beta_j}) e^{-\ac_j^\dagger \ac_j}$ is a thermal state. We also have 
$$\mu_j  = \| \ac_j\|_{2, \rho_G} = \| \ac_j\|_{2, \tau_j} = \frac{1+e^{-\beta_j}}{2(1-e^{-\beta_j})}.$$
We show that 
$$S_{\rho_G, j} = -\frac{1}{\mu_j} \ac_j.$$
Since $S_{\rho_G, j} =\pi_{\rho_G}^{\phi}([\ac_j, \rho_G])$, this equality is equivalent to $[\ac_j, \rho_G] = -\frac{1}{\mu_j}\pi_{\rho_G}^{\psi}(\ac_j)$.  Thus, we need to show  that
 $$ \ac_j\rho_G -  \rho_G \ac_j  =  -\frac{1}{2\mu_j} (\rho_G  \ac_{j} + \ac_j \rho_G),$$
 that is a simple consequence of $\ac_j \rho_G = e^{-\beta_j} \rho_G \ac_j$. 

\section{Details of the proof of Lemma~\ref{lem:LSI-modified-Fisher}} \label{ConstantFisherSobolev}

We claim that for constant
$$C=\frac{8e^{-3\beta/2}}{(1+e^{-\beta})^2},$$
we have $Cg(x, y)\leq h(x, y)$ for all $x, y\geq 0$ where
$$g(x, y) = \big(e^{\beta/4}\sqrt y - e^{-\beta/4}\sqrt x\big)^2,$$
and 
$$h(x, y) = \bigg( \sqrt{\frac{2}{x+y}}(y-x)  + \frac{1}{\mu} \sqrt{\frac{x+y}{2}} \bigg)^2,\qquad \quad \mu=  \frac{1+e^{-\beta}}{2(1-e^{-\beta})}.$$
With the change of variables $x'=e^{-\beta/4}\sqrt x, y'=e^{\beta/4}\sqrt y$ we have $g(x, y) = (y'-x' )^2$. We also compute
\begin{align*}
h(x, y) &= \bigg( \sqrt{\frac{2}{x+y}}(y-x)  + \frac{1}{\mu} \sqrt{\frac{x+y}{2}} \bigg)^2\\
& = \frac{2}{x+y} \bigg( (y-x)  + \frac{1}{\mu} \frac{x+y}{2} \bigg)^2\\
& = \frac{2}{x+y} \bigg( \frac{2}{1+e^{-\beta}} y  - \frac{2e^{-\beta}}{1+e^{-\beta}} x \bigg)^2\\
& = \frac{8e^{-\beta}}{(1+e^{-\beta})^2} \cdot \frac{1}{e^{\beta/2}{x'}^2+e^{-\beta/2}{y'}^2} ({y'}^2-{x'}^2)^2\\
& \geq  \frac{8e^{-\beta}}{(1+e^{-\beta})^2} \cdot \frac{1}{e^{\beta/2}(x'+y')^2} ({y'}^2-{x'}^2)^2\\
& =  \frac{8e^{-3\beta/2}}{(1+e^{-\beta})^2}  (y'-x')^2\\
& = C g(x, y).
\end{align*}
This proves the desired inequality.

\section{An extension of the Cauchy-Schwarz inequality}\label{CauchyShwarzT}

The following lemma is used in the proof of Lemma~\ref{LowerRtildeLemma}.

\begin{lemma}\label{lem:Cauchy-Scwarz}
Let $\rho_1, \rho_2$ be two quantum states let 
$T=\tr_1\big( \PiInner{\rho_{1}}{\psi}(A)  B  \big)$ where $A$ is an operator acting only on the first subsystem. Then we have $$\|T\|_{2, \rho_2} \leq \|A\|_{2, \rho_1} \cdot \|B\|_{2, \rho_1\otimes \rho_2}.$$
\end{lemma}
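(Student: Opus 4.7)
The plan is to prove the claim by a direct Cauchy--Schwarz argument in the joint eigenbasis of $\rho_1$ and $\rho_2$. Diagonalising $\rho_1=\sum_k p_k|k\rangle\langle k|$ and $\rho_2=\sum_c q_c|c\rangle\langle c|$, the definition $\PiInner{\rho_1}{\psi}(A)=\tfrac12(\rho_1 A+A\rho_1)$ gives $\PiInner{\rho_1}{\psi}(A)_{km}=\tfrac{p_k+p_m}{2}A_{km}$, so the matrix element of $T$ on the second subsystem works out to
\[
T_{cb}\;=\;\sum_{k,m}\tfrac{p_k+p_m}{2}\,A_{km}\,B^{mk}_{cb},\qquad B^{mk}_{cb}:=\langle mc|B|kb\rangle.
\]
In the same basis the three norms become weighted sums of squared matrix elements,
\[
\|T\|_{2,\rho_2}^2=\sum_{c,b}\tfrac{q_c+q_b}{2}|T_{cb}|^2, \;\; \|A\|_{2,\rho_1}^2=\sum_{k,m}\tfrac{p_k+p_m}{2}|A_{km}|^2, \;\; \|B\|_{2,\rho_1\otimes \rho_2}^2=\sum_{k,m,c,b}\tfrac{p_mq_c+p_kq_b}{2}|B^{mk}_{cb}|^2,
\]
so the lemma reduces to an inequality between these weighted sums.

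The key step is then a weighted Cauchy--Schwarz inequality on the sum over $(k,m)$ inside $|T_{cb}|^2$. I would split the summand as $\tfrac{p_k+p_m}{2}A_{km}B^{mk}_{cb}=\big(\alpha_{kmcb}A_{km}\big)\big(\beta_{kmcb}B^{mk}_{cb}\big)$ with $\alpha\beta=\tfrac{p_k+p_m}{2}$, picking positive weights $\alpha,\beta$ so that after the outer sum over $(c,b)$ weighted by $\tfrac{q_c+q_b}{2}$ the $\alpha$-factor collapses term by term to $\tfrac{p_k+p_m}{2}|A_{km}|^2$ (yielding $\|A\|^2_{2,\rho_1}$) and the $\beta$-factor is dominated by $\tfrac{p_mq_c+p_kq_b}{2}|B^{mk}_{cb}|^2$ (yielding $\|B\|^2_{2,\rho_1\otimes \rho_2}$). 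Cauchy--Schwarz on the inner sum followed by summation over $(c,b)$ then gives the desired product bound.

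The main obstacle is the asymmetry between the single-site weight $\tfrac{p_k+p_m}{2}$ arising from $\PiInner{\rho_1}{\psi}$ and the bipartite weight $\tfrac{p_mq_c+p_kq_b}{2}$ defining $\|B\|_{2,\rho_1\otimes\rho_2}$: a uniform choice of weights in Cauchy--Schwarz is not enough, and balancing $\alpha,\beta$ requires an elementary but delicate inequality between arithmetic and mixed means of $p_k,p_m,q_c,q_b$ that essentially encodes the concavity of $\psi(x,y)=(x+y)/2$. Should the matrix-element bookkeeping prove awkward, a cleaner alternative is to rewrite $T=\tfrac12\tr_1\big(\sqrt{\rho_1}\{A,B\}\sqrt{\rho_1}\big)$ --- an identity that follows from $\rho_1=(\sqrt{\rho_1})^2$ and the cyclicity of $\tr_1$ on first-subsystem operators --- and then apply an operator-valued Cauchy--Schwarz to each of the pieces $\tr_1((\sqrt{\rho_1}A)(B\sqrt{\rho_1}))$ and $\tr_1((\sqrt{\rho_1}B)(A\sqrt{\rho_1}))$, viewing $T$ as a partial-trace pairing of Hilbert--Schmidt vectors.
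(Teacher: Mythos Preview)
Your eigenbasis approach is quite different from the paper's: the paper normalises $A$, extends it to an orthonormal basis $\{A_k\}$ of $\mathbf L_2(\rho_1)$ with $A_1=A$, picks an orthonormal basis $\{C_\ell\}$ of $\mathbf L_2(\rho_2)$, writes $B=\sum_{k,\ell}a_{k,\ell}\,A_k\otimes C_\ell$, and argues via Parseval that $\|T\|_{2,\rho_2}^2=\sum_\ell|a_{1,\ell}|^2\le\sum_{k,\ell}|a_{k,\ell}|^2=\|B\|_{2,\rho_1\otimes\rho_2}^2$. No weight balancing enters at all.

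That said, the ``delicate inequality'' you allude to is not merely delicate --- it does not exist. With the natural split $\alpha^2=\beta^2=\tfrac{p_k+p_m}{2}$ you would need $(p_k+p_m)(q_c+q_b)\le 2(p_mq_c+p_kq_b)$ pointwise, equivalently $(p_k-p_m)(q_c-q_b)\le 0$, which fails whenever the two differences have the same sign; and no $(c,b)$-dependent weights can repair this, because the stated inequality is in fact false as written. Take $\rho_1=\rho_2=\mathrm{diag}(0.1,\,0.9)$, $A=|0\rangle\langle 1|$, and $B=|10\rangle\langle 01|$: then $\PiInner{\rho_1}{\psi}(A)=\tfrac12|0\rangle\langle 1|$ and $T=\tfrac12|0\rangle\langle 1|$, so $\|T\|_{2,\rho_2}^2=\tfrac18$, while $\|A\|_{2,\rho_1}^2=\tfrac12$ and $\|B\|_{2,\rho_1\otimes\rho_2}^2=\tfrac12\big(p(1-p)+(1-p)p\big)=0.09$, giving $\|A\|_{2,\rho_1}^2\|B\|_{2,\rho_1\otimes\rho_2}^2=0.045<\tfrac18$. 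The same example shows the paper's argument cannot be correct either: the SLD inner product on $\rho_1\otimes\rho_2$ is \emph{not} the tensor product of the individual SLD inner products (one has $\tfrac{xy+x'y'}{2}\neq\tfrac{x+x'}{2}\cdot\tfrac{y+y'}{2}$ in general), so $\{A_k\otimes C_\ell\}$ is not orthonormal in $\mathbf L_2(\rho_1\otimes\rho_2)$ and the final Parseval step $\sum_{k,\ell}|a_{k,\ell}|^2=\|B\|_{2,\rho_1\otimes\rho_2}^2$ is unjustified. Your alternative operator-valued Cauchy--Schwarz route hits the same wall, since the obstruction lies in the statement, not in the method.
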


\begin{proof}
We assume with no loss of generality that $\|A\|_{2, \rho_1}=1$. Then, there is an orthonormal basis $\{A_k:\, k\geq 1\}$ for $\mathbf L_2(\rho_1)$ including $A_1=A$. Also, let $\{C_\ell:\, \ell \geq 1\}$ be an orthonormal basis for $\mathbf L_2(\rho_2)$. Now, expanding $B$ in the tensor product basis, we have
$$B = \sum_{k, \ell} a_{k, \ell} A_k\otimes  C_\ell,$$
for some $a_{k, \ell}\in \mathbb C$. Then, by the orthonormality of $\{A_k:\, k\geq 1\}, \{C_\ell:\, \ell\geq 1\}$ we have
\begin{align*}
    \|T\|_{2, \rho_2}^2 = \Big\| \sum_{\ell} a_{1, \ell} C_\ell \Big\|_{2,\rho_2}^2 
     = \sum_\ell |a_{1, \ell}|^2
     \leq   \sum_{k, \ell} |a_{k, \ell}|^2
     = \|B\|^2_{2, \rho_1\otimes \rho_2}.
\end{align*}
\end{proof}

\end{document}